\newtheorem{theorem}{Theorem}[section]
\newtheorem{definition}{Definition}[section]
\newtheorem{proposition}{Proposition}
\newtheorem{lemma}[theorem]{Lemma}
\newtheorem{corollary}[theorem]{Corollary}
\newcounter{note}[section]
\def\sse{\subseteq}
\newcommand{\pr}{\mathbf{P}} 
\newcommand{\E}{\mathbb{E}} 
\newcommand{\ZZ}{\mathbb{Z}}
\newcommand{\R}{\mathbb{R}}
\newcommand{\D}{\mathcal{D}}
\newcommand{\OPT}{\mathtt{OPT}}
\newcommand{\NA}{\mathtt{NA}}
\newcommand{\Stem}{\mathtt{Stem}}
\newcommand{\scn}{\omega}
\newcommand{\score}{\text{score}}
\newcommand{\bigOh}{\mathcal{O}}
\newcommand{\ignore}[1]{}
\newcommand{\prob}{$\mathtt{IPP}$\xspace}
\newcommand{\probh}{$\mathtt{IPP-H}$\xspace}
\newcommand{\RSO}{$\mathtt{RSO}$\xspace}
\def\H{{\cal H}}
\title{Informative Path Planning with Limited Adaptivity}
\author{Rayen Tan\thanks{Department of Industrial and Operations Engineering, University of Michigan, Ann Arbor, USA. Research supported in part by NSF grants CMMI-1940766 and CCF-2006778.} \and Rohan Ghuge\thanks{Department of Industrial and Systems Engineering / Algorithms and Randomness Center, Georgia Institute of Technology, Atlanta, USA.} \and Viswanath Nagarajan$^*$}
\date{October 2023}
\begin{document}

\maketitle
\begin{abstract}
We consider the \emph{informative path planning} (\prob) problem 
in which a robot
interacts with an \emph{uncertain} environment and gathers information by visiting locations. 
The goal  is to minimize its expected travel cost to 
cover a given submodular function. Adaptive solutions, 
where the robot  incorporates all available information to select the next location to visit, 
achieve the best objective. 
However, such a solution is resource-intensive as it entails recomputing after every visited location.
A more practical approach is to design solutions with a small number of adaptive ``rounds'', where the robot  recomputes  only once at the start of each round. 
In this paper, 
we design an algorithm for \prob parameterized by the number $k$ of adaptive rounds, and prove a  smooth trade-off between $k$ and the solution quality (relative to fully adaptive solutions).
We validate our theoretical results  by experiments on a real road network, where we observe that a few rounds of adaptivity suffice to obtain solutions
of cost almost as good as fully-adaptive ones.
\end{abstract}
\section{Introduction}
We consider the \emph{informative path planning} (\prob) problem 
in which a robot
interacts with an \emph{uncertain} environment and gathers information by visiting locations. 
The informative path planning problem has been widely studied, and has applications in 
information gathering \cite{SinghKG+06},
object detection \cite{PlattKL+11}, and 
manipulating a robot arm for tasks like 
pushing a button or grasping \cite{JavdaniCK+14}.
We discuss two applications of \prob.

First, consider the following disaster management application. 
Suppose that an autonomous unmanned aerial
vehicle (UAV) 
is searching for a lost victim \cite{LimHL16}.
The UAV acquires new information on the
victim’s location by 
using onboard sensors, and the goal is to 
plan a search strategy in order to find the victim as fast as possible.
Another application of \prob, arising in information gathering, is in 
monitoring algae biomass in a lake \cite{DhariwalZS+06}.
It is not economical to
cover the lake sufficiently with 
\emph{static} sensors, and instead  one wants to plan a route for a
robotic boat (carrying a sensor) to move to various
locations in the lake to gather information.

We note that both these applications 
involve \emph{stochastic submodular optimization}: the uncertainty stems from not knowing the underlying state of the world (the victim's true location, or the concentration of pollutants in the lake), and progress (eliminating possible locations from consideration, or collecting information from different parts of the lake) 
can be captured 
using a submodular function (see \cite{LimHL16} and \cite{SinghKG+06} for details). 
In most works, stochastic submodular optimization is restricted over a set domain; that is,  the goal is to select some subset to optimize the expected objective.
However, we are interested in settings where the robot
interacts with the environment  by visiting different sensing location; in other words,
the 
robot's decisions are constrained to form a \emph{path} (rather than  an arbitrary subset).
So, the goal in \prob is to \emph{minimize its expected travel cost to  
cover a given submodular function} (see \S\ref{sec:defn} for a formal definition).

Solutions
to such a stochastic problem are 
sequential decision processes, making them highly adaptive:
at each step,
the robot incorporates all available information to select the next location to visit.
This process continues until the submodular function is covered.
 However, such a solution entails recomputing after every visited location, which can be  resource-intensive. So, fully adaptive solutions may not be feasible in practical situations. This motivates the design of solutions with a small number of adaptive ``rounds'', where the robot  recomputes  only once at the start of each round.
Such solutions strike a balance between achieving the best objective and keeping resource utilization low. For example, from an  energy efficiency perspective: a robot often operates with limited battery capacity, and the need for frequent re-computations after each visited location can lead to substantially faster energy consumption. Additionally, the process of incorporating observed data and preparing it for re-computation is not instantaneous as it involves data gathering/analysis, and may lead to increased computational time. 
By keeping the  rounds of adaptivity small, we aim to mitigate 
such operational challenges.  
We note that the trade-off between rounds of adaptivity and solution quality is not new, and has been studied in various streams of literature (see \S\ref{sec:related-work}).
We make the following contributions.

\vspace{-3mm}
\begin{enumerate}
    \item We design an algorithm for \prob
    parameterized by the number $k$ of adaptive rounds, and prove a smooth trade-off between $k$ and
    the solution quality (relative to fully adaptive solutions). 

    \item  We consider separately an important special case of \prob: path planning for hypothesis identification \cite{LimHL16}, and obtain a   better performance guarantee via a more   efficient algorithm.
    
    \item Finally, we run computational experiments on 
    a real road network dataset and
    previously-used instances of  hypothesis identification.  
    For these instances, we observe that with $2$ rounds of adaptivity, the cost is on average within $50\%$ and $12\%$ of the fully adaptive cost, respectively. 
Moreover, the $2$-round algorithm is on average $15$ times faster than the fully adaptive one.  
\end{enumerate}

\vspace{-2mm}
\subsection{Definitions}\label{sec:defn}
\vspace{-2mm}

An instance of \prob is given by the tuple $(X, r, d, M, \D, O, f)$, 
where $X$ corresponds to a finite set of 
$n$ sensing locations,
$r$ is the initial location of the robot, and $d$ is a metric on $X \cup \{r\}$. We assume throughout that $d$ is symmetric and satisfies triangle inequality. 
Here $M = \{1, \ldots, m\}$ denotes a  finite set of \emph{hypotheses} or \emph{scenarios}. We take the Bayesian approach, and  
use $p_\scn$ to denote the prior  probability of each hypothesis $\scn\in M$, where $\sum_{\scn\in M} p_\scn =1$. 
The set $O$ denotes all  possible observations:  each location $v  \in X$ realizes to a random observation in 
$O$. 
The distribution $\D$ specifies the probability $p_\scn$ of each hypothesis $\scn\in M$ as well as the observations $\{\scn(v) \in O : v\in X\}$ at all locations {\em under hypothesis} $\scn$. The true hypothesis   $\scn^*$ is drawn from $  M$ according to the distribution $\D$; however $\scn^*$ is initially unknown to the algorithm. When location $v\in X$ is visited, the robot observes $\scn^*(v)\in O$ and can use this information to update its priors.

If  the robot has visited a subset $S\sse X$ of locations and observed $o_v\in O$ at each $v\in S$ then the set $\{(v,o_v) : v\in S\}\sse X\times O$ is called a {\em partial realization}. We use $\psi\sse X\times O$ to denote a generic partial realization; if we want to additionally specify the locations $S$  contained in $\psi$ then we use the notation $\psi(S)$. We define $\psi_{\scn}(X) := \{(v,\scn(v)): v \in X\}$ to denote the partial realization associated with hypothesis $\scn$ at all locations. 
We say that hypothesis $\scn$ is \emph{compatible} with a partial realization $\psi$, denoted $\scn \sim \psi$, if $\psi \sse \psi_{\scn}(X)$.

Let $\Psi = 2^{X\times O}$ be the power-set of all location-realization pairs. Note that $\Psi$ contains every partial realization. Not every subset in $\Psi$ corresponds to an actual  partial realization, but using the full power-set $\Psi$ makes   the definition of our utility function cleaner. In particular, function $f:\Psi \rightarrow \mathbb{Z}_+$ is a monotone submodular set function. 
{Formally, we say that $f$ is \emph{submodular} if whenever $A \sse B \sse \Psi$ and $e \notin B$, we have $f(A \cup \{e\}) - f(A) \geq f(B \cup \{e\}) - f(B)$, and we say that $f$ is \emph{monotone} if whenever $A \sse B$, we have $f(A) \leq f(B)$.}
Let $Q\in \mathbb{Z}_+$ be the maximal value of the function. We say that a partial realization $\psi\in \Psi$ {\em covers} function $f$ if $f(\psi)=Q$.
 We assume that $f(\psi_{\scn} (X)) = Q$ for each hypothesis $\scn \in M$. In other words, if we visit all locations then the function $f$ will be covered, irrespective of the true hypothesis $\scn^*$.

Let  $\Pi = (r, v_1, \ldots, v_{\ell}, r)$ be any tour that starts and ends at $r$; we abuse notation and also use $\Pi$ to denote the set of locations visited.
All tours in this paper will begin/end at $r$: we will not state this condition each time (to avoid clutter).  We say that tour $\Pi$  \emph{covers}  hypothesis $\scn$ if,
and only if, $f(\psi_\omega(\Pi)) = Q$. The goal in \prob is to design a tour $\Pi$ (possibly adaptively) 
that covers the true hypothesis $\scn^*$. An equivalent condition is that the observed partial realization $\psi(\Pi)$ at the end of  the tour must satisfy $f(\psi(\Pi))=Q$. The 
  objective is to minimize the expected 
distance $\E[d(\Pi)]$ of the  tour $\Pi$, where the expectation is taken over $\scn^*$.
We note that an adaptive tour $\Pi$  decides on the next location to visit based on the observations at all previous locations.  We are interested in solutions that have limited adaptivity as defined next.

\begin{definition}\label{def:k-round}
For an integer $k\ge 1$, a {\bf $k$-round} solution
proceeds in $k$ rounds of adaptivity. In each round
$\ell\in \{1,\ldots, k\}$, the solution specifies a tour on all remaining locations and visits them in this order until some stopping rule (at which point  it starts  the next round).
The tour  in round $\ell$ can depend on the observations seen in  rounds $1,\ldots, \ell-1$.
\end{definition}
\vspace{-2mm}
In a $k$-round solution,   tour re-computation only occurs  at the start each round, which happens at most $k$ times. 
Setting $k=1$ in Definition~\ref{def:k-round} gives us a \emph{non-adaptive} tour that does not have to recompute after it starts. On the other hand, setting $k\rightarrow \infty$ (effectively  $k=n$) gives us a \emph{fully  adaptive} tour, that recomputes after each visited location.
Having more rounds potentially leads to a smaller objective value, so fully adaptive solutions have the least objective value. Our performance guarantees are relative to an optimal fully adaptive solution; 
let $\OPT$ denote this solution and its cost. 
The \emph{$k$-round-adaptivity gap} is defined as follows:
$$\sup_{\text{instance } I} \frac{\E[\text{cost of best $k$-round  solution on }I]}{\E[\text{cost of best fully adaptive solution on }I]}.$$ 

In formulating \prob, we required the solution to be a {\em tour} originating from $r$. 
We note that one could also ask for a {\em path}   originating from $r$, which is allowed to end at any location as long as the function $f$ is covered.  All our results also apply to this path variant, formalized in the following proposition. 
\begin{proposition}\label{prop-tour-path}
Any $\alpha$-approximation algorithm for the tour version of \prob gives a $2\alpha$-approximation to the path version of \prob.
\end{proposition}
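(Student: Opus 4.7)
The plan is to show two simple facts and chain them: (i) any algorithm for the tour version can be used verbatim as an algorithm for the path version with the same expected cost; (ii) on any common instance, the optimal fully adaptive tour has cost at most twice the optimal fully adaptive path.

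For (i), suppose $\ALG$ is an $\alpha$-approximation for the tour version, and fix an instance $I$. Run $\ALG$ on $I$ to produce an adaptive tour $\Pi$ (possibly $k$-round) that starts at $r$, covers $f(\psi(\Pi))=Q$, and returns to $r$. Since $\Pi$ starts at $r$ and covers $f$, it is also a feasible solution to the path version: the path-version objective only requires starting at $r$ and covering $f$, with no constraint on the endpoint. The cost is exactly the same, so the algorithm's expected cost on $I$ viewed as a path instance equals its expected cost on $I$ viewed as a tour instance, which is at most $\alpha \cdot \OPT_{\text{tour}}(I)$.

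For (ii), let $P^*$ be the optimal fully adaptive path policy for $I$, with expected cost $\OPT_{\text{path}}(I)$. For every realized hypothesis $\scn$, let $v_\scn$ denote the last location visited along $P^*$ (when $f$ is first covered) and let $d(P^*(\scn))$ denote the length of the realized path. Modify $P^*$ into a tour policy $T$ by appending the edge $(v_\scn, r)$ at the end of every realization. Because $P^*(\scn)$ starts at $r$ and ends at $v_\scn$, the triangle inequality on the metric $d$ gives $d(v_\scn, r) \le d(P^*(\scn))$, so the realized tour cost satisfies $d(T(\scn)) \le 2 d(P^*(\scn))$. Taking expectation over $\scn^*$ yields $\OPT_{\text{tour}}(I) \le \E[d(T)] \le 2\,\OPT_{\text{path}}(I)$.

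Combining (i) and (ii) gives $\E[d(\Pi)] \le \alpha \cdot \OPT_{\text{tour}}(I) \le 2\alpha \cdot \OPT_{\text{path}}(I)$, proving the proposition. There is no real obstacle here; the only mild care needed is to note that the conversion in (ii) preserves adaptivity (the appended return segment depends only on the already-observed realization, so if $P^*$ is fully adaptive or $k$-round, so is $T$), which ensures the comparison is against the correct class of optimal solutions.
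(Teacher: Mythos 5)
Your proof is correct and follows exactly the paper's (one-line) argument: a tour solution is automatically path-feasible at the same cost, and the optimal path can be closed into a tour of at most twice the cost via symmetry and the triangle inequality. The extra care you take about preserving adaptivity of the converted policy is a reasonable elaboration but does not change the argument.
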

Indeed, by symmetry and triangle inequality, the cost of an optimal tour is at most $2$ times the cost of an optimal path, and any $\alpha$-approximate solution  for the tour version of \prob is also feasible  for the path version.

\subsection{Results and Techniques}

Our first result is a $k$-round algorithm for \prob. 
\begin{theorem}\label{thm:main-k-round}
For any integer $k \geq 1$ and constant $\epsilon > 0$, there is a $k$-round adaptive algorithm for the 
informative path planning
problem with cost 
at most $\bigOh\left(\log^{2+\epsilon}(n) \cdot m^{1/k}\cdot (\log m + k\log Q)\right)$ times the cost of an optimal
adaptive algorithm.
\end{theorem}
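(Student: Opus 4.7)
The plan is to realize the $k$-round algorithm as $k$ invocations of a non-adaptive subroutine: in round $\ell$, based on the partial realization observed in rounds $1,\ldots,\ell-1$, we plan a single tour whose goal is to either cover $f$, or shrink the set $H_\ell$ of currently compatible hypotheses by a factor of $m^{1/k}$. If both sub-goals succeed approximately in every round, then after $k$ rounds the expected size of the compatible set drops from $m$ to at most $1$, at which point the assumption $f(\psi_\scn(X))=Q$ forces $f$ to be covered. The target bound factors into the product of the orienteering approximation $\log^{2+\epsilon} n$, the per-round non-adaptive-to-adaptive slack $m^{1/k}$, and the submodular-cover depth $\log m + k\log Q$ accumulated over the $k$ rounds.

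The first ingredient is to cast the per-round task as a submodular orienteering problem. Given the current compatible set $H_\ell$, I would construct a monotone submodular reward $g_{H_\ell}\colon 2^X\to\mathbb{Z}_+$ whose maximum value is achieved exactly by tours that cover $f$ under every $\scn\in H_\ell$ not already ruled out, together with tours that distinguish enough pairs of hypotheses in $H_\ell\times H_\ell$ to guarantee the compatible set shrinks to at most $|H_\ell|/m^{1/k}$ on any realization. A natural construction sums a truncated $f$-shortfall term averaged over $\scn\in H_\ell$ and pairwise coverage indicators on distinguishable hypothesis pairs; submodularity reduces to standard coverage arguments. I would then feed $g_{H_\ell}$ into the Chekuri--P\'al quasi-polynomial-time submodular orienteering algorithm to obtain a tour at cost $\bigOh(\log^{2+\epsilon} n)$ times the optimal reward-achieving tour.

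The second ingredient is to bound this optimal reward-achieving tour by $\bigOh(m^{1/k}(\log m+\log Q))\cdot \OPT_\ell$, where $\OPT_\ell$ is the expected cost of the optimal adaptive continuation from the current partial realization. Looking at the decision tree of the optimal adaptive policy for the residual problem, the root-to-leaf tours partition the current compatibility mass. Taking a \emph{heavy union} of at most $m^{1/k}$ such leaf-tours (just enough so that the total uncovered compatibility mass falls below the target) yields a non-adaptive tour of cost $\bigOh(m^{1/k})\cdot \OPT_\ell$; concatenation through the depot $r$ using symmetry and triangle inequality loses only a constant factor. A level-set decomposition of $f$ contributes the $\log Q$ factor per round, and a doubling guess on the unknown per-round budget contributes the $\log m$ factor overall. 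Summing per-round costs via the tower law over rounds and substituting $\sum_{\ell=1}^k \log Q = k\log Q$ closes the analysis and yields the claimed bound.

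The hardest step will be the second one: making the ``heavy union of root-to-leaf tours'' argument produce a valid tour (not merely an unstructured visit set) while keeping the cost blow-up truly $\bigOh(m^{1/k})$ relative to $\OPT_\ell$. The cleanest route I see is a coupling argument between the optimal adaptive policy's trajectories and the non-adaptive tour, combined with depot-based concatenation that exploits the symmetric metric. A secondary subtlety is the stopping rule in Definition~\ref{def:k-round}: I would stop round $\ell$ the instant the orienteering reward threshold is achieved, ensuring the algorithm respects the $k$-round budget and the partial realization at the start of round $\ell+1$ is what the analysis assumes. With these technical pieces in place, multiplying the three factors proves Theorem~\ref{thm:main-k-round}.
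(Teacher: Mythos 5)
Your high-level skeleton matches the paper's: $k$ rounds, each realized by a non-adaptive subroutine that either covers $f$ or shrinks the compatible set by a factor $m^{1/k}$, with an orienteering oracle inside and an induction over rounds to accumulate $\log m + k\log Q$. However, the core technical step --- showing that a cheap non-adaptive tour exists that makes the required progress relative to the \emph{adaptive} optimum --- is where your argument has a genuine gap. Your ``heavy union of at most $m^{1/k}$ root-to-leaf tours'' of $\OPT$'s decision tree does not work as stated, for two reasons. First, a root-to-leaf trajectory of an adaptive policy is observation-dependent: a non-adaptive tour cannot ``follow a branch'' of $\OPT$, so the set of locations on a chosen leaf-path is not something you can commit to in advance, and a coupling alone does not fix this. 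Second, $\OPT$'s cost is an \emph{expectation} over leaf-path lengths, so individual leaf paths (even heavy ones) need not have length $\bigOh(1)\cdot\E[\OPT]$; truncation plus a doubling guess changes which scenarios are covered and breaks the mass-accounting. Relatedly, your claim that ``the expected size of the compatible set drops to at most $1$'' is the wrong invariant: an expected size of $1$ does not force identification or coverage; the paper's rounds guarantee \emph{deterministically} (on every realization) that either $|H|<\delta m$ or $f$ is covered, with only the \emph{cost} controlled in expectation.

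The paper's resolution of exactly this difficulty is the piece missing from your sketch. For each ``large'' part $Z$ of scenarios sharing the same partial realization, it defines $\Stem_Z$, the path in $\OPT$'s tree (truncated at time $i+1$) that at every node follows the \emph{majority} branch within $Z$; closing $\Stem_Z$ into a tour $T_Z$ costs at most $2(i+1)$, and concatenating over the at most $1/\delta$ large parts gives a candidate tour of length $\bigOh((i+1)/\delta)$. Scenarios that leave the stem get at least halved (``okay''), those that follow it are either covered (``good'') or witness that $\OPT$ has not finished by time $i+1$ (``bad''), which is how the non-completion probabilities $a(\cdot)$ and $o(\cdot)$ enter. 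This forces the subroutine to be \emph{ratio} submodular orienteering applied iteratively (score $=$ gain$/$distance), rather than a single budgeted Chekuri--P\'al tour hitting a reward threshold, and the $\log Q + \log(1/\delta)$ factor then falls out of a telescoping upper bound on the total accumulated score along any realized trajectory --- not from a level-set decomposition or a budget-guessing scheme. Without the majority-branch construction (or an equivalent device for converting adaptive trajectories into committed tours with bounded length and quantified progress), your second ingredient cannot be completed, and the theorem does not follow from the remaining steps.
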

In each round of the algorithm, we 
build a 
\emph{non-adaptive} tour using observations from previous rounds that partially covers function $f$.  
This tour is built as follows: we iteratively compute a ``score'' for each tour and greedily select a tour based on these scores. 
The algorithm then visits sensing locations according to the constructed tour until a stopping condition is met.
A similar idea was used in \cite{GGN21} in obtaining $k$-round adaptive algorithms for the ``scenario submodular cover'' problem, where they computed a score for each item, and greedily (selecting an item of maximum score) built a non-adaptive list of items. 
Since our solutions are constrained to be tours, simply scoring each sensing location does not work: we need an appropriate scoring function for paths/tours.
Doing this naively, however, causes the search space 
to blow-up from $n$ to $n!$ as we need to search for a tour rather than a single location. Crucially, 
 we show  that
selecting a tour to maximize the score function turns out to be
an instance of ratio submodular orienteering (see \S\ref{sec:k-round}, Definition~\ref{def:rso}). This is an NP-hard problem, for which an $\bigOh(\log^{2+\epsilon}n)$-approximation algorithm is known~\cite{CalinescuZelikovsky05}. So,  
at each step we pick a tour that \emph{approximately} maximizes the score function,
which is subsequently
appended to the non-adaptive tour. This ``partial covering'' result is summarized in Theorem~\ref{thm:partial-cover-thm}. Then, we design each round so that it guarantees a suitable  measure of progress (roughly, eliminating $m^{1/k}$ fraction of hypotheses), and the overall $k$-round algorithm fully covers the function $f$.

The high-level analysis for Theorem~\ref{thm:main-k-round} 
is similar to that in \cite{GGN21} for  scenario submodular cover. However, there are some key differences in handling paths/tours rather than individual items. 
As with a number of stochastic covering problems, e.g., \cite{INZ12,CuiNagarajan23},  we relate the “non-completion” probabilities of the algorithm after cost 
$\alpha \cdot t$ 
to the optimal
adaptive solution after cost $t$ (for all $t\ge 0$).
The factor $\alpha$ (specified later)  appears in the final approximation guarantee (along with the approximation for ratio submodular orienteering).
The $m^{1/k}$ dependence in the $k$-round adaptivity gap from Theorem~\ref{thm:main-k-round} is the best possible even in the special case of a star metric: this follows from the lower bound result for  scenario submodular cover in \cite{GGN21}.

Next, we consider an important special case of \prob: path planning for hypothesis identification (\probh). Instead of covering an arbitrary submodular function, 
the goal in \probh is to identify the underlying hypothesis $\scn^*$ by visiting locations at the minimum expected distance. 
We obtain the following improved result for \probh.
\begin{theorem}\label{thm:k-round-id}
For any integer $k \geq 1$, there is a $k$-round adaptive algorithm for the 
hypothesis identification
problem with cost 
at most $\bigOh\left(\log^{2}(n) \cdot m^{1/k}\cdot k\cdot \log m)\right)$ times the cost of an optimal
adaptive algorithm.
\end{theorem}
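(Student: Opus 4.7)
The plan is to adapt the $k$-round framework of Theorem~\ref{thm:main-k-round} to hypothesis identification via two modifications that together shrink the cost bound from $\log^{2+\epsilon}(n)\cdot m^{1/k}\cdot (\log m + k\log Q)$ to $\log^2(n)\cdot m^{1/k}\cdot k\log m$: first, a choice of $f$ that makes $Q$ polynomial in $m$, and second, an improved ratio-orienteering subroutine that exploits the coverage structure of the resulting function.

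For the first step, I would set
\[
f(\psi)\;=\;\bigl|\{\scn\in M:\scn\not\sim\psi\}\bigr|,
\]
the number of hypotheses eliminated by $\psi$.  This is a coverage function — each $(v,o)\in\psi$ ``covers'' the hypotheses $\scn$ with $\scn(v)\ne o$ — hence monotone submodular; its maximum value is $Q=m-1$, attained exactly when a unique hypothesis survives, which is precisely the identification condition.  Under the standing pairwise-distinguishability assumption we have $f(\psi_\scn(X))=Q$ for every $\scn\in M$, so the \prob setup applies.  Plugging $\log Q=\bigOh(\log m)$ into Theorem~\ref{thm:main-k-round} already yields an $\bigOh(\log^{2+\epsilon}(n)\cdot m^{1/k}\cdot k\log m)$ bound, and only the $\epsilon$ in the $n$-dependence remains to be removed.

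For the second step, I would revisit the ratio submodular orienteering subproblem that drives each greedy iteration.  Conditional on the partial realization $\psi$ seen so far, the expected marginal gain of a candidate tour $\Pi$ is
\[
\E\bigl[\Delta f(\Pi\mid\psi)\bigr]\;=\;\sum_{\scn,\scn'\sim\psi,\,\scn\ne\scn'} p(\scn^*=\scn\mid\psi)\cdot\mathbf{1}\bigl[\exists\,v\in\Pi:\scn(v)\ne\scn'(v)\bigr],
\]
which is a weighted coverage function over \emph{pairs} of still-compatible hypotheses, with each pair $(\scn,\scn')$ acting as a ``group'' consisting of the locations that separate it.  Maximizing the ratio of this coverage to $d(\Pi)$ is therefore an instance of ratio group Steiner orienteering rather than general submodular orienteering, and a standard threshold-guess reduction to minimum-cost group Steiner on a tree embedding, combined with the LP-rounding technique of Garg--Konjevod--Ravi, delivers an $\bigOh(\log^2 n)$-approximation — sharper than the $\bigOh(\log^{2+\epsilon}n)$ black-box used in Theorem~\ref{thm:main-k-round}.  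Substituting this subroutine into the partial-covering step of Theorem~\ref{thm:partial-cover-thm} while leaving the remaining $k$-round analysis (the potential argument relating non-completion probabilities to the adaptive optimum) untouched yields the claimed bound.

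The main obstacle is composing the improved subroutine with the conditioning that happens across rounds and within-round greedy iterations.  The residual marginal-gain function is always a weighted pair-coverage over the current compatible sub-family, so the group-Steiner structure is preserved and the $\log^2 n$ guarantee continues to apply at every iteration; one must also absorb a constant-factor overhead from the ratio-to-min-cost threshold search.  Once these pieces are verified, the proof is a transparent specialization of the argument behind Theorem~\ref{thm:main-k-round} with the tighter $f$ and the tighter orienteering subroutine.
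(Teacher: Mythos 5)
Your overall strategy matches the paper's: instantiate $f$ as the ``number of eliminated hypotheses'' coverage function so that $Q=m-1$ and the $k\log Q$ term in Theorem~\ref{thm:k-round-ipp} becomes $\bigOh(k\log m)$, then observe that for this $f$ the \RSO subproblem inside the partial-covering algorithm specializes to ratio group Steiner, whose $\bigOh(\log^2 n)$-approximation replaces the $\bigOh(\log^{2+\epsilon}n)$ bound for general submodular orienteering. One minor omission: the score function \eqref{eqn:greedy-choice} has two terms, and you only rewrite the function-gain term as a pair-separation coverage. The information-gain term $\sum_{\scn\in L_T(Z)}p_\scn$ must also be expressed as group coverage --- it is, with one group per large part $Z$ and scenario $\scn\in Z$ whose nodes are $\{v:\scn\in L_v(Z)\}$ and whose weight is $p_\scn$ --- so the reduction still goes through, but as written your argument only handles half the objective.

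The more substantive issue is your justification of the $\bigOh(\log^2 n)$ ratio group Steiner subroutine, which is where the real work lies; the mechanism you sketch does not work as stated. A ``threshold-guess reduction to minimum-cost group Steiner'' does not apply: guessing the coverage of the optimal ratio solution yields a quota version (cover at least weight $W$, not all groups), which is not the problem that the Garg--Konjevod--Ravi algorithm solves, and plain randomized rounding controls the expected cost and the expected coverage separately, which does not bound the expectation of their ratio. The paper resolves this by writing an LP for the ratio objective directly and giving a deterministic rounding driven by a pessimistic estimator of the coverage (Theorem~\ref{thm:tree-ratio}); alternatively one can invoke the derandomization of \cite{CharikarCGG98} as a black box. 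Either way this step needs an actual argument or citation rather than an appeal to standard machinery. Once that subroutine is in hand, plugging $\rho=\bigOh(\log^2 n)$ and $Q=m-1$ into Theorem~\ref{thm:k-round-ipp} gives the claimed bound exactly as you describe.
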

The main idea here is to exploit the special structure of the submodular function $f$ corresponding to  hypothesis identification. We show that for this special function $f$, maximizing the score-function corresponds to solving the {\em ratio group Steiner} problem (rather than  ratio submodular orienteering).  This allows us to  use a better $\bigOh(\log^2n)$ approximation algorithm for ratio group Steiner \cite{CharikarCGG98}. We actually present a self-contained algorithm for ratio group Steiner  that is simpler (and easier to implement) than the one in \cite{CharikarCGG98}.

In fact, using our  partial covering algorithm (Theorem~\ref{thm:partial-cover-thm}) and a  different  measure of progress in each round  (as in Theorem 6.7 of \cite{GGN21}), we can also obtain $2k$-round algorithms with better approximation guarantees of $\bigOh\left(\log^{2+\epsilon}(n) \cdot m^{1/k} \cdot \log(Q m)\right)$ for \prob and $\bigOh\left(\log^{2}(n) \cdot m^{1/k}\cdot \log m)\right)$ 
for \probh \ {(see \S\ref{app:better-appx} for details)}. Setting the number of rounds to $\bigOh(\log m)$, we then 
 get  approximation guarantees of $\bigOh\left(\log^{2+\epsilon}(n) \cdot \log(Q m)\right)$ and $\bigOh\left(\log^2(n) \cdot \log m\right)$ for \prob and \probh respectively. These approximation ratios match the previous-best approximation ratios for these problems, even for fully-adaptive algorithms~\cite{NavidiKN20,GNR17}. In fact, \probh and \prob generalize the {\em group Steiner tree} problem~\cite{GargKR00}, for which the best known approximation ratio is $\bigOh\left(\log^2(n) \cdot \log m\right)$; there is also an $\Omega(\log^{2-\epsilon} n)$ hardness of approximation \cite{HalperinK03}.

\subsection{Related Work}\label{sec:related-work}

It is known that metrics of informativeness in several domains (for example,  
sensor placement \cite{KrauseGuestrin05} and target search \cite{HollingerSD+09}) exhibit \emph{submodularity}. 
Submodular set function optimization has been studied extensively \cite{W82, NemhauserWF78}, and has also been extended to 
optimizing over paths \cite{ChekuriPal05, CalinescuZelikovsky05}.
Consequently, any approximation algorithm for \emph{submodular path orienteering} \cite{ChekuriPal05} can be used to 
plan a path for a robot in order to
maximize a submodular function of the visited locations.
\cite{SinghKG+06} provided an approach
for extending any single robot
algorithm
to the multi-robot
setting, with a (nearly) matching approximation guarantee.

Submodular  optimization (over sets) has been extended to the stochastic setting in a number of works, e.g.,~\cite{AsadpourN16,GolovinK-arxiv,INZ12,GHKL16,GuptaNS17}. Recent works \cite{AAK19}, \cite{EsfandiariKM19}, \cite{GGN21} are particularly relevant to us: these papers establish trade-offs between  {\em rounds of adaptivity} and the approximation factor for  stochastic submodular cover problems, where one wants to select a {\em subset} to cover a submodular function  (there are different settings with independent, scenario-based and adaptive-submodularity conditions). Our work extends the results from \cite{GGN21} for scenario-based distributions  to the case of optimizing over  {\em paths} in a metric.

Stochastic submodular optimization over paths has also received significant attention. 
\prob has been studied in robotics and related fields, and
many heuristic approaches have been proposed to solve the problem.
For example, in \cite{HollingerEH+13},  a minimum-cost tour is constructed on ``informative'' sensing locations,  
and in \cite{HollingerMS11}, the idea is to search for 
a strategy over a finite planning horizon.
An adaptive approach appeared in \cite{SinghKK09}: their algorithm re-plans every step using
a non-adaptive information path planning algorithm.

The special case of \probh has itself been studied widely.
This appears in \cite{GNR17} as the ``isolation problem'' enroute to obtaining approximation algorithms for the \emph{adaptive traveling salesman} problem.
\cite{GNR17} obtained a fully-adaptive $\bigOh(\log^2n\,\log m)$-approximation algorithm for \probh. 
\cite{LimHL15} and \cite{LimHL16} obtained similar algorithms for \prob; these hold for a slightly more general  definition involving adaptive-submodularity \cite{GolovinK-arxiv}. When applied to \prob, the algorithms in \cite{LimHL15} and \cite{LimHL16} yield a fully adaptive $\bigOh\left(\log^{2+\epsilon}(n)\cdot \log(m) \cdot \log\left({1}/{p_{\min}}\right)\right)$-approximation algorithm; here $p_{min}\le 1/m$ is the minimum probability of any hypothesis. \cite{NavidiKN20} obtained an improved 
$\bigOh\left(\log^{2+\epsilon}(n)\cdot \log(m) \right)$-approximate fully adaptive algorithm for \prob.

The trade-off between rounds of adaptivity and solution quality has also been considered in other contexts. \cite{GaoHR+19,EsfandiariKM+21}, and \cite{AgarwalGN22} study   online learning problems, where observations are made in batches.  \cite{BalkanskiBS18,BalkanskiS18b,BalkanskiRS19}, and \cite{ChekuriQ19} study
deterministic submodular optimization, where function queries are batched. However, the techniques used in these papers are completely different from ours.

\section{$k$-Round Algorithm for \prob}\label{sec:k-round}
\vspace{-2mm}

In this section, 
we design a $k$-round algorithm for the 
{\it informative path planning} problem (\prob) and prove Theorem~\ref{thm:main-k-round}. A key component of our algorithm is a non-adaptive algorithm to solve a \emph{partial cover} version of \prob. 
Formally, an instance of the partial cover version of \prob is the same as an instance of \prob with an additional parameter $\delta \in (0, 1]$. 
Now, the goal is to visit a set of locations $T$ that realize to $\psi(T)\in \Psi$ such that either 
(i) number of compatible scenarios  $|\{\scn\in M: \psi(T) \sse \psi_{\scn}\}|<\delta m$, or (ii) the function $f$ is fully covered, i.e., $f(\psi(T)) = Q$. 
The $k$-round algorithm for \prob will then recursively solve the partial cover version with carefully chosen values for the parameter $\delta$.

An important subroutine in our algorithm is the following {\em deterministic} problem.
\begin{definition}[Ratio Submodular Orienteering (\RSO)]\label{def:rso}
    Given a metric $d$ on locations $X\cup\{r\}$ and a monotone submodular function $g:2^X \rightarrow \mathbb{Z}_+$, find an $r$-tour $T$ 
    that maximizes the ratio $\frac{g(T)}{d(T)}$, where $g(T)$ is the function value on the nodes of $T$ and $d(T)$ is the total distance in $T$.
\end{definition}
This problem is NP-hard, but there are poly-logarithmic approximation ratios known. In particular, \cite{CalinescuZelikovsky05} gave a
$\bigOh(\log^{2+\epsilon}n)$-approximation algorithm with runtime $n^{\bigOh(1/\epsilon)}$, where $\epsilon>0$ is a constant.
If one allows for quasi-polynomial time $n^{\bigOh(\log n)}$ then a better $\bigOh(\log n)$-approximation algorithm is known \cite{ChekuriPal05}.   
It is also hard to approximate to a factor better than $\bigOh(\log^{1-\epsilon} n)$ \cite{HalperinK03}.

\begin{theorem} \label{thm:partial-cover-thm}
There is a non-adaptive algorithm for the partial cover version of \prob  
with expected cost 
$\mathcal{O} \left(\frac{\rho}{\delta}\log\left(\frac{Q}{\delta}\right)\right)$ times the cost of the optimal adaptive solution for \prob,
where $\rho$ is the best  approximation guarantee for ratio submodular orienteering.
\end{theorem}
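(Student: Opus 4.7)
The plan is to build the non-adaptive tour iteratively in greedy ``phases,'' where each phase appends a sub-tour chosen to approximately maximize a progress-to-cost ratio. The maximization will be cast as an instance of \RSO, so the best-known $\rho$-approximation for \RSO enters as the only source of the $\rho$ factor in the final bound. The remaining factor $\tfrac{1}{\delta}\log(Q/\delta)$ will come from a standard greedy covering analysis, tuned to the partial-cover setting.

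\textbf{Step 1 (scoring function and reduction to \RSO).} I will define a potential $\phi: \Psi \to \R_+$ on partial realizations which is zero exactly when the partial-cover condition holds. A natural candidate is a truncated sum over compatible scenarios $\scn$ of $p_\scn \cdot (Q - f(\psi \cup \psi_\scn(\mathrm{visited})))$, with the sum truncated so as to ``discount'' the cheapest $\delta m$ scenarios. For the current partial realization $\psi$, the score of a candidate sub-tour $T$ is
\[
\score_\psi(T) \;:=\; \phi(\psi) \;-\; \E_{\scn^* \sim \psi}\bigl[\phi\bigl(\psi \cup \psi_{\scn^*}(T)\bigr)\bigr].
\]
Using monotonicity/submodularity of $f$ together with linearity of expectation, $\score_\psi(\cdot)$ viewed as a set function of the nodes visited by $T$ is monotone and submodular. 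Hence maximizing $\score_\psi(T)/d(T)$ is exactly an \RSO instance and can be approximated to within a factor $\rho$.

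\textbf{Step 2 (existence of a good tour).} The heart of the argument is to show that whenever $\phi(\psi)>0$, there exists a tour $T^\star$ with $d(T^\star) = \bigOh(\OPT/\delta)$ and $\score_\psi(T^\star) = \Omega(\phi(\psi))$. I plan to use the optimal adaptive solution as a hypothetical ``advisor'': conditioning on the event $\scn^* \sim \psi$, its expected completion distance is at most $\OPT/\pr[\scn^* \sim \psi]$ by the law of total expectation. When partial cover fails, the compatible mass $\pr[\scn^* \sim \psi]$ is $\Omega(\delta)$, so the conditional cost is $\bigOh(\OPT/\delta)$. Applying Markov's inequality to this conditional distribution and then averaging the induced state-dependent tours into a single tour (using submodularity of $\score_\psi$) yields the desired non-adaptive $T^\star$.

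\textbf{Step 3 (greedy convergence and wrap-up).} Combining the two steps, each phase appends a sub-tour of cost $\bigOh(\rho \cdot \OPT/\delta)$ that shrinks $\phi$ by a constant factor (the $\rho$ factor in cost, rather than in rate, because the \RSO approximation preserves the ratio up to a factor $\rho$). Since $\phi$ starts at $\bigOh(Q)$ and its minimum nonzero value can be arranged to be $\Omega(\delta)$, only $\bigOh(\log(Q/\delta))$ phases are required. Multiplying phase count by per-phase cost gives a total expected cost of $\bigOh\!\left(\tfrac{\rho}{\delta}\log(Q/\delta)\right)\cdot \OPT$, as claimed.

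\textbf{Main obstacle.} The principal technical hurdle is Step 2: the adaptive optimum is a distribution over state-dependent tours rather than a single fixed tour, and extracting from it a non-adaptive tour that simultaneously has length $\bigOh(\OPT/\delta)$ and captures a constant fraction of the remaining potential requires care. The $1/\delta$ slack afforded by the partial-cover definition is exactly what makes the Markov-style averaging feasible; the correct design of the truncated potential $\phi$, so that this existence argument actually applies and so that $\score_\psi$ is simultaneously submodular and a faithful measure of partial-cover progress, is the most delicate part of the proof.
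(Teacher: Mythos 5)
Your Step 1 is essentially on target: the paper also builds the tour greedily by $\rho$-approximately maximizing a (progress)/(distance) ratio cast as an \RSO instance, and this is the only place $\rho$ enters. Two caveats even here: to keep the algorithm genuinely non-adaptive, the score must be an average over \emph{all} possible realizations of the already-planned locations (the paper sums over all ``large'' parts $Z$ of the partition $\H(S)$), not a function of ``the current partial realization $\psi$'' as you write; and your truncation device (discounting the cheapest $\delta m$ scenarios) is an order-statistic operation that will generally destroy submodularity. The paper instead keeps two separate progress terms for each fixed large part $Z$ --- an information-gain term $\sum_{\scn\in L_T(Z)}p_\scn$ that credits eliminating scenarios, and a relative function-gain term --- each of which is submodular in $T$ for fixed $S$.

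The genuine gap is in Steps 2--3. The claim that whenever $\phi(\psi)>0$ there is a tour $T^\star$ with $d(T^\star)=\bigOh(\OPT/\delta)$ and $\score_\psi(T^\star)=\Omega(\phi(\psi))$ is false in general, and the ``constant-factor potential drop per phase, times $\bigOh(\log(Q/\delta))$ phases'' accounting does not go through. Markov applied to the conditional law of $\OPT$ only guarantees that a constant fraction of the \emph{probability mass} of compatible scenarios is finished by $\OPT$ within the budget $\bigOh(\OPT/\delta)$; the residual potential $\phi$ may be concentrated entirely on the complementary scenarios, for which $\OPT$ itself is still running and no short tour can make progress. These scenarios cannot be absorbed into a per-phase multiplicative decrease --- they must be \emph{charged to $\OPT$'s own tail}. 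This is exactly what the paper's analysis does: it defines non-completion probabilities $a(t)$ and $o(t)$ for the algorithm and for $\OPT$, splits each large part $Z$ into good/okay/bad pieces along the ``stem'' of $\OPT$'s decision tree truncated at time $i+1$, lower-bounds the phase-$i$ score by $\Omega\bigl(\tfrac{\delta}{\rho(i+1)}\bigr)\cdot\bigl(a(i+1)-2\,o(i+1)\bigr)$ (note the subtracted $o(i+1)$ term, which is where the bad scenarios go), and upper-bounds the cumulative gain by $\log(Q/\delta)\cdot a(i)$ via a per-scenario telescoping argument. Summing the resulting inequality over all time scales $i$ yields $\sum_i a(i) \le \bigOh(1)\sum_i o(i)+\bigOh(1)$, which is what actually converts into the claimed expected-cost bound. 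Without this coupling of the two tail distributions at every scale --- rather than a single per-phase existence claim --- the argument cannot be completed, precisely because $\OPT$ is only an expectation and its conditional cost on the ``slow'' scenarios is unbounded relative to $\OPT/\delta$.
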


The algorithm creates a
\emph{pre-planned} (non-adaptive) tour;
that is, without
knowing the realizations at the locations. 
We find that iteratively selecting tours that maximize a carefully-defined score function (see Equation~\eqref{eqn:greedy-choice}) works well;
however, selecting such tours turns out to be an NP-hard problem. 
So,
at each step we pick a tour that \emph{approximately} maximizes the score function,
which is subsequently
appended to the non-adaptive tour: this process continues until all 
locations are included in the non-adaptive tour, or we can conclude that the number of compatible scenarios after visiting the already selected locations will be less than $\delta m$ (see Definition~\ref{defn:partn-scn}).
We note that the score of a
tour (roughly) measures
the progress we can makes
towards (i) eliminating scenarios and (ii) covering function $f$ on visiting the tour. Crucially, we prove that the numerator of this score function corresponds to a  monotone submodular function (see Lemma~\ref{lem:score-submod}). So, we can use an approximation algorithm for  \RSO to optimize the score. 
Before we state the score
function, we need some definitions.

\begin{definition} \label{defn:partn-scn}
For any  ${S}\sse X$, let $\H({S})$
denote the partition $\{ Y_1, \cdots, Y_{\ell}\}$ of the scenarios $M$ where
all scenarios in a part have the same realization for the locations in ${S}$. 
Let $\mathcal{Z} := \{ Y \in \H({S}) : |Y| \geq \delta |M|\}$ 
be the set of ``large'' parts.
\end{definition}

Consider scenarios $\scn_1$ and $\scn_2$. According to Definition~\ref{defn:partn-scn}, $\scn_1$ and $\scn_2$ belong to the same part of $\H(S)$ if and only if $\scn_1(v)=\scn_2(v)$ for all $v \in S$; i.e., visiting the locations in $S$ leads to the {\em same} partial realization under  either scenario $\scn_1$ or $\scn_2$.
After observing the realization of $S$, the set of compatible scenarios must be one of the parts in $\H(S)$. 
Also note that $|\mathcal{Z}| \leq \frac{1}{\delta}$ as each part in ${\cal Z}$ has at least $\delta|M|$ scenarios.

\begin{definition}\label{defn:parts-item}
For any location $v \in X$ and subset $Z \sse M$ of scenarios, consider the partition 
of $Z$ based on the realization of $v$. 
Let $B_v(Z) \sse Z$ be the largest cardinality 
part, and define $L_v(Z) := Z \setminus B_v(Z)$.
\end{definition}

The above definition is used to quantify the ``information gain'' of visiting a single location. 
If the realized scenario $\scn^* \in L_v(Z)$, then
we can eliminate at least half the scenarios in $Z$ by visiting location $v$. For any part $Z \in {\cal H}(S)$, note that the partial realizations $\psi_\scn(S)$
are identical for all $\scn \in Z$: we use 
$\psi_Z(S) \sse X \times O$ to denote this partial realization. 

Let $\Pi$  denote the non-adaptive tour constructed so far in our algorithm, and let   $S$ be the set of locations in $\Pi$. 
The score \eqref{eqn:greedy-choice} of a new tour $T$ is computed by considering two notions of progress for each $Z \in \mathcal{Z}$:

\begin{itemize}
    \item \emph{Information gain} $\sum_{\scn \in L_T(Z)} p_{\scn}$,    
    measures the total 
    probability of the scenarios that belong to $L_v(Z)$ for some $v \in T$.
 
    \item \emph{Relative function gain} $\sum_{\scn \in Z} p_{\scn} \cdot \frac{f(\psi_Z(S) \cup \psi_{\scn}(T)) - f(\psi_Z(S))}{Q - f(\psi_Z(S))}$   measures the expected relative gain obtained by
    visiting locations in tour $T$ (expectation is w.r.t. scenarios in $Z$).
\end{itemize}
\vspace{-2mm}
The overall score of tour $T$ is the sum of these terms (over all parts in $\mathcal{Z}$) normalized by the distance $d(T)$ of the tour.
Note that 
the score of a tour
is computed only using 
“large” parts $\mathcal{Z}$. 
This is because if the realization of $S$
corresponds to any other part then the number of compatible scenarios would be less than $\delta m$ (and
the partial cover algorithm would have terminated). We show that the numerator of \eqref{eqn:greedy-choice} is a monotone and submodular function.

\begin{lemma}\label{lem:score-submod}
    Let 
    \begin{align*}
    g(T) &=  \sum_{Z \in {\cal Z}}\Bigg( \sum_{\scn \in L_T(Z)} p_{\scn} + \sum_{\scn \in Z} p_{\scn} \cdot \frac{f(\psi_Z(S) \cup \psi_{\scn}(T)) - f(\psi_Z(S))}{Q - f(\psi_Z(S))} \Bigg), 
    \end{align*}
    where $S \sse X$ is some fixed subset. Then $g$ is monotone and submodular.
\end{lemma}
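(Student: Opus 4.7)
The plan is to show that each of the two summands inside the outer sum $\sum_{Z \in \mathcal{Z}}$ is monotone and submodular as a function of $T$; since nonnegative sums of monotone submodular functions are monotone submodular, this suffices. I would also note up front that $\mathcal{Z}$, $S$, and $\psi_Z(S)$ are fixed (independent of $T$), and that the denominators $Q - f(\psi_Z(S))$ are positive constants, so they can be ignored for purposes of checking submodularity.

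\textbf{Information gain term.} First I would rewrite $\sum_{\scn \in L_T(Z)} p_\scn$ as a weighted coverage function. Define $L_T(Z) := \bigcup_{v \in T} L_v(Z)$, and for each $\scn \in Z$ let
\[
C_\scn := \{v \in X : \scn \in L_v(Z)\}.
\]
Then $\scn \in L_T(Z)$ iff $T \cap C_\scn \neq \emptyset$, so
\[
\sum_{\scn \in L_T(Z)} p_\scn \;=\; \sum_{\scn \in Z} p_\scn \cdot \mathbbm{1}[T \cap C_\scn \neq \emptyset].
\]
Each indicator $T \mapsto \mathbbm{1}[T \cap C_\scn \neq \emptyset]$ is the classical (monotone submodular) coverage function, and a nonnegative combination of such indicators is monotone submodular.

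\textbf{Relative function gain term.} Since $f(\psi_Z(S))$ is a constant in $T$, it suffices to show that for each fixed $\scn$ the map $T \mapsto f(\psi_Z(S) \cup \psi_\scn(T))$ is monotone submodular. Let $A_0 := \psi_Z(S) \subseteq X \times O$ and define the injection $\phi_\scn : X \to X \times O$ by $\phi_\scn(v) = (v, \scn(v))$, so that $\psi_\scn(T) = \phi_\scn(T)$ and the map in question becomes $h(T) := f(A_0 \cup \phi_\scn(T))$. Monotonicity of $h$ follows because $T_1 \subseteq T_2$ implies $\phi_\scn(T_1) \subseteq \phi_\scn(T_2)$ and hence $A_0 \cup \phi_\scn(T_1) \subseteq A_0 \cup \phi_\scn(T_2)$, and $f$ itself is monotone. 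For submodularity, take $T \subseteq T' \subseteq X$ and $v \notin T'$; the marginal of $h$ at $v$ equals
\[
f\bigl(A_0 \cup \phi_\scn(T) \cup \{\phi_\scn(v)\}\bigr) - f\bigl(A_0 \cup \phi_\scn(T)\bigr),
\]
and the submodularity of $f$ applied to the nested sets $A_0 \cup \phi_\scn(T) \subseteq A_0 \cup \phi_\scn(T')$ with the element $\phi_\scn(v)$ yields the required inequality (the argument goes through whether or not $\phi_\scn(v)$ already lies in $A_0$, since in that case both marginals vanish). Weighting by $p_\scn/(Q - f(\psi_Z(S))) \ge 0$ and summing over $\scn \in Z$ preserves monotonicity and submodularity.

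\textbf{Where the only subtlety lies.} The one thing one has to be a touch careful about is that $S$ and $T$ are not assumed disjoint, so $A_0$ and $\phi_\scn(T)$ may overlap; but since $\phi_\scn$ is an injection and submodularity/monotonicity of $f$ is preserved under adding a fixed ``context'' set and pulling back along an injection, this causes no problem. Combining the two parts and summing over $Z \in \mathcal{Z}$ gives that $g$ is monotone submodular, as desired.
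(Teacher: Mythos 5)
Your proof is correct and follows essentially the same route as the paper's: decompose the score into the information-gain and relative-function-gain terms for each part $Z$, observe that the first is a weighted coverage function (the paper verifies this either directly via $L_T(Z)=\bigcup_{v\in T}L_v(Z)$ or by invoking the set-coverage viewpoint, exactly as you do) and that the second inherits monotonicity and submodularity from $f$, then sum. Your extra care about the injection $v\mapsto(v,\scn(v))$ and possible overlap with $\psi_Z(S)$ is a valid elaboration of a step the paper leaves implicit, not a different argument.
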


\begin{proof}[Proof of Lemma~\ref{lem:score-submod}]
    Fix $Z \in \mathcal{Z}$, and let $$ g_1(T, Z) = \sum_{\scn \in L_T(Z) } p_{\scn} \qquad \qquad \text{ and }  \qquad \qquad  g_2(T, Z) = \sum_{\scn \in Z} p_{\scn} \cdot \frac{f(\psi_Z(S) \cup \psi_{\scn}(T)) - f(\psi_Z(S))}{Q - f(\psi_Z(S))}. $$
    Since addition preserves monotonicity and submodularity, 
    it suffices to show that 
    $g_1(\cdot, Z)$ and $g_2(\cdot, Z)$ are both monotone and submodular for all $Z$. Note that the monotonicity and submodularity of $g_2(\cdot, Z)$ follows from the monotonicity and submodularity of $f$ (since $S$ is fixed).
    Towards proving the monotonicity and submodularity of $g_1(\cdot, Z)$,
    recall that $L_T(Z) = \cup_{v \in T} L_v(Z)$. Suppose $T \sse \overline{T}$. Then, by definition, we have $L_T(Z) \sse L_{\overline{T}}(Z)$ which implies $g_1(T, Z) \leq g_1(\overline{T}(Z))$. Furthermore, consider $v \notin \overline{T}$. Then, $ L_v(Z) \setminus L_{\overline{T}}(Z)  \sse  L_v(Z) \setminus  L_{{T}}(Z)$ which implies $g_1(\overline{T} \cup \{v\}, Z) - g_1(\overline{T}, Z) \leq g_1({T} \cup \{v\}, Z) - g_1({T}, Z)$. Thus, we can conclude that $g_1(\cdot, Z)$ is monotone and submodular, which concludes the proof.  
\end{proof}

\begin{algorithm}[t]
\caption{Partial Covering Algorithm $\mathtt{PCA}((X, r, d, M, \D, O, f), \delta)$} \label{alg:partial-cover}
\begin{algorithmic}[1]
  \State $S \leftarrow \emptyset$, $\Pi \leftarrow \emptyset$
\While{$S \neq X$}
\State Define $\H({S})$, ${\cal Z}$ and $L_v(Z)$ as in Definitions~\ref{defn:partn-scn} and \ref{defn:parts-item}
\If{$\mathcal{Z}$ is empty} \ \textbf{break}
\EndIf
\State Select  tour ${T}$ that $\rho$-approximately maximizes:

\begin{align}\label{eqn:greedy-choice}
    &\text{score}(T) = \frac{1}{d(T)} \cdot \sum_{Z \in {\cal Z}}\Bigg( \sum_{\scn \in L_T(Z)} p_{\scn} +\\& \notag \sum_{\scn \in Z} p_{\scn} \cdot \frac{f(\psi_Z(S) \cup \psi_{\scn}(T)) - f(\psi_Z(S))}{Q - f(\psi_Z(S))} \Bigg)
\end{align}

where $L_T(Z) = \cup_{v \in T}L_v(Z)$.
\State $S \gets S \cup T$, $\Pi \gets \Pi \circ T$
\EndWhile
\State  $R \gets \emptyset$, $\psi(R) \gets \emptyset$, $H\gets M$.
\While{$|H|\geq \delta m$ and $f(\psi) < Q$} 
\State $T \gets$ first tour in $\Pi$ not yet visited
\State $\psi(T) \gets$ realization of vertices in $T$
\State $R \gets R \cup T$, $\psi(R) \gets \psi(R) \cup \psi(T)$
\State $H \gets \{ \scn \in H: \scn \sim \psi(R)\}$; that is, set of compatible scenarios
\EndWhile
\State return visited locations $R$, partial realization $\psi(R)$ and compatible scenarios $H$.
\end{algorithmic}
\end{algorithm}

Note that $\text{score}(T) = g(T)/d(T)$, and since $g(T)$ is monotone and submodular, we can use 
an approximation algorithm for \RSO to optimize the score.
Once the non-adaptive tour $\Pi$, which itself is a concatenation of many smaller tours, is specified, the algorithm starts by visiting tours in this order until
(i) the number of compatible scenarios drops below $\delta m$, or (ii) the realized
function value equals $Q$. Note that in case (ii), the function is fully covered. See Algorithm~\ref{alg:partial-cover} for a
formal description of the non-adaptive algorithm.

We recursively use this 
non-adaptive partial cover 
algorithm to get a  $k$-round solution for   \prob. The first round involves setting  $\delta = m^{-1/k}$ in $\mathtt{PCA}$. 
At the end of  round $\#1$,  let $R$ be the set of locations visited, $\psi = \psi(R)$ be   the partial realization observed, and   $H\sse M$ be the  compatible scenarios.
Then, we can condition on the scenarios in $H$, and define a ``residual'' function $f_{\psi}:\Psi\rightarrow \ZZ_+$ as $ f_{\psi}(\phi) = f(\psi \cup \phi) - f(\psi)$, which is also monotone and submodular. Finally, we recurse on this residual function $f_{\psi}$ to get a $k-1$ round solution.  See Algorithm~\ref{alg:k-rounds} for a formal description. We formalize this discussion in the following result.

\begin{algorithm}[t]
\caption{$k$-round adaptive algorithm for \prob, $k\text{-}\mathtt{ADAP}(\mathcal{I} = (X, r, d, M, \D, O, f), k)$} \label{alg:k-rounds}
\begin{algorithmic}[1]
\State Run $\mathtt{PCA}(\mathcal{I}, m^{-1/k})$ for the first round. Let $R$ denote the set of locations visited. Let $\psi$ and $H$ denote the partial realization and set of compatible scenarios respectively.

\State Define the residual submodular function $f_{\psi}(\phi) = f(\psi \cup \phi) - f(\psi)$, and define the distribution $D_H$ by conditioning on the remaining scenarios $H$.

\State Solve $k\text{-}\mathtt{ADAP}(\widehat{\mathcal{I}} = (X \setminus R, r, d, H, \D_H, O, f_{\psi}), k-1)$
\end{algorithmic}
\end{algorithm}

\begin{theorem}\label{thm:k-round-ipp}
\sloppy Algorithm~\ref{alg:k-rounds} is a $k$-round algorithm for 
\prob  with expected cost 
$\bigOh\left(\rho \cdot m^{1/k}\cdot (\log m + k\log Q)\right)$ times the   optimal fully adaptive cost. Here, $m$ is the number of scenarios and  $\rho$ is the  approximation guarantee for \RSO. 
\end{theorem}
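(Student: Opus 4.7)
The plan is to apply Theorem~\ref{thm:partial-cover-thm} recursively across the $k$ rounds, tracking both the decay of the compatible scenario count and the round-by-round cost relative to the original fully adaptive optimum $\OPT$. Let $m_j$ and $k_j := k-j+1$ denote, respectively, the number of compatible scenarios and the number of remaining rounds at the start of round $j$; then $\mathtt{PCA}$ is invoked with parameter $\delta_j = m_j^{-1/k_j}$. By the termination rule of $\mathtt{PCA}$, at the end of round $j$ either $f$ is fully covered, or the number of compatible scenarios drops to $m_{j+1} \le \delta_j m_j = m_j^{1-1/k_j}$. Unfolding this recurrence via the telescoping product $\prod_{i=1}^{j-1}(k-i)/(k-i+1) = (k-j+1)/k$ yields $m_j \le m^{(k-j+1)/k}$. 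In particular, at round $j=k$ the parameter $\delta_k = 1/m_k$ forces $\mathtt{PCA}$ to continue until $|H|<1$, which together with the assumption $f(\psi_{\scn}(X)) = Q$ guarantees that $f$ is fully covered by the end.

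Next I would bound $\E[C_j]$, the expected cost of round $j$, by $\alpha_j \cdot \OPT$, where $\alpha_j = \bigOh(\rho/\delta_j \cdot \log(Q/\delta_j))$ is the guarantee of Theorem~\ref{thm:partial-cover-thm}. Conditioning on the history $h_{j-1}$ observed just before round $j$ yields a residual instance with scenario set $H_{j-1}$, distribution $\D_{H_{j-1}}$, and submodular function $f_{\psi}$, for which Theorem~\ref{thm:partial-cover-thm} gives $\E[C_j \mid h_{j-1}] \le \alpha_j \cdot \OPT^{res}(h_{j-1})$. I expect the main obstacle to be showing $\E_{h_{j-1}}[\OPT^{res}(h_{j-1})] \le \OPT$ under the original distribution; naive conditioning loses a factor $1/\Pr[h_{j-1}]$. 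The idea is to exhibit a feasible strategy for the residual instance: starting at $r$, simulate the optimal fully adaptive tour $\pi^*$ for the original instance, observing any new locations as usual but shortcutting past every location already contained in $R_{j-1}$ (whose realization is determined by $\psi$). By the triangle inequality, shortcutting does not increase distance, so the cost of this strategy on any $\scn \in H_{j-1}$ is at most $\mathrm{cost}(\pi^*,\scn)$, and hence $\OPT^{res}(h_{j-1}) \le \frac{1}{\Pr[H_{j-1}]}\sum_{\scn \in H_{j-1}} p_{\scn} \cdot \mathrm{cost}(\pi^*,\scn)$. Since distinct histories at the start of round $j$ induce a partition of the surviving scenarios, weighting by $\Pr[h_{j-1}] = \sum_{\scn \in H_{j-1}} p_{\scn}$ and summing telescopes to at most $\sum_{\scn} p_{\scn} \cdot \mathrm{cost}(\pi^*,\scn) = \OPT$, as required.

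Finally, I would combine the two steps by computing $\alpha_j$ uniformly in $j$. The key identity is $m_j^{1/k_j} \le (m^{(k-j+1)/k})^{1/(k-j+1)} = m^{1/k}$ for every $j$, which gives $\alpha_j = \bigOh(\rho \cdot m^{1/k} \cdot (\log Q + \tfrac{1}{k}\log m))$. Summing $\E[C_j] \le \alpha_j \cdot \OPT$ over the $k$ rounds yields a total expected cost of $\bigOh(\rho \cdot m^{1/k} \cdot (k\log Q + \log m)) \cdot \OPT$, matching the claimed bound.
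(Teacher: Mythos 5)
Your proposal is correct and follows essentially the same route as the paper: the paper organizes the argument as an induction on $k$, while you unroll it into a direct sum over rounds, but the ingredients are identical --- applying Theorem~\ref{thm:partial-cover-thm} once per round, the recurrence $m_{j+1}\le m_j^{1-1/k_j}$ giving $m_j^{1/k_j}\le m^{1/k}$, and the observation that the original adaptive optimum restricted to the surviving scenarios certifies $\E[\OPT^{res}]\le \OPT$. Your shortcutting argument for that last step is a more explicit version of the paper's one-line justification, and your final summation $\sum_j \alpha_j = \bigOh(\rho\, m^{1/k}(\log m + k\log Q))$ matches the paper's telescoped bound.
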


\begin{proof}[Proof of Theorem~\ref{thm:k-round-ipp}.]
We prove the theorem by induction on $k$. 
Let $\mathcal{I} = (X, r, d, M, \D, O, f)$ denote the instance of \prob, and let $\OPT$ denote the expected cost of an optimal fully adaptive solution for $\mathcal{I}$. 
In the base case where $k = 1$, we set $\delta = 1/m$. 
Let $R$ be the set of locations visited by $\mathtt{PCA}(\mathcal{I}, m^{-1})$; let $\psi(R)$ denote the corresponding partial realization and $H$ the set of compatible scenarios. By Theorem~\ref{thm:partial-cover-thm}, either (i) $|H| < m^{-1} \cdot m = 1$, or (ii) $f(\psi(R)) = Q$. Since case $(i)$ cannot happen, we must have $f(\psi(R)) = Q$; that is, the realized scenario is fully covered. Furthermore, the expected cost of the solution (again, by Theorem~\ref{thm:partial-cover-thm}) is at most $\bigOh(\rho \cdot m \cdot (\log m + \log Q)) \cdot \OPT$, which proves the base case.

Now assume that $k > 1$. Recall that we set $\delta = m^{-1/k}$, and by executing $\mathtt{PCA}(\mathcal{I}, m^{-1/k})$, we get a tour visiting locations $R$ with partial realization $\psi(R)$ and remaining compatible scenarios $H$ such that (i) $|H| < \delta m = m^{(k-1)/k}$, or (ii) $f(\psi(R)) = Q$. 
By Theorem~\ref{thm:partial-cover-thm}, the total cost incurred in the first round is $\bigOh(\rho \cdot m^{1/k} \cdot (\frac{1}{k} \log m + \log Q)) \cdot \OPT$. If we are in case (ii) above; that is, if $f(\psi(R)) = Q$, then we are done and the algorithm incurs no further cost. 
So, we can assume that case (i) holds instead. 
Let $\widehat{{\cal I}}$ denote the residual instance,  $\widehat{m} = |H|<m^{1-1/k}$ the number of scenarios, $\widehat{Q} = Q - f(\psi(R))$ the target,
and  $\widehat{\OPT}$ the optimal adaptive cost of $\widehat{{\cal I}}$; note that these parameters are conditional on the observations $\psi(R)$ in round $\#1$. 
By the inductive hypothesis, Algorithm~\ref{alg:k-rounds} gives a  $(k-1)$-round solution with expected cost $\bigOh(\rho \cdot \widehat{m}^{1/(k-1)} \cdot (\log \widehat{m} + (k-1) \log \widehat{Q})) \cdot \widehat{\OPT}$. Note that the optimal adaptive solution for the original instance $\cal{I}$ provides a feasible (adaptive) solution for the residual instance by just restricting to the scenarios in $H$. Moreover, taking an expectation over $\widehat{{\cal I}}$, we get $\E[\widehat{\OPT}] \leq \OPT$. 
So, the expected cost in the remaining $(k-1)$-rounds is upper bounded by
\begin{equation*}\label{eq:k-round-thm-conditonal-cost} \bigOh(\rho \cdot \widehat{m}^{1/(k-1)} \cdot (\log \widehat{m} + (k-1) \log \widehat{Q})) \cdot \E[\widehat{\OPT}] = \bigOh(\rho \cdot {m}^{1/k} \cdot (\log {m}^{\frac{k-1}{k}} + (k-1) \log Q)) \cdot \OPT,\end{equation*}
where we used $\widehat{m}  <m^{1-1/k}$. 
Finally, combining this with the cost of the first round, we get that the total cost incurred by our algorithm is
$$\bigOh\left(\rho \cdot m^{1/k} \cdot \left(\frac{1}{k} \log m + \log Q\right)\right) \cdot \OPT + \bigOh\left(\rho \cdot {m}^{1/k} \cdot \left(\frac{k-1}{k}\log {m}  + (k-1) \log Q\right)  \right) \cdot \OPT,$$ 
which equals $\bigOh\left(\rho \cdot m^{1/k}\cdot (\log m + k\log Q)\right)$ 
as desired.
\end{proof}

Combined with the $\bigOh(\log^{2+\epsilon}n)$ approximation algorithm for submodular orienteering \cite{CalinescuZelikovsky05}, we have Theorem~\ref{thm:main-k-round}.

\subsection{Proving Theorem~\ref{thm:partial-cover-thm}}
For the analysis, we 
denote our non-adaptive policy, and its (random) cost as $\NA$. Similarly, we use $\OPT$ to refer to an optimal fully adaptive policy and its (random) cost.
We
refer to the cumulative cost incurred by either policy as \emph{elapsed time}.

We define constants $\beta$ (specified later)  and $L \coloneqq \log \left(\frac{Q}{\delta}\right)$. 
Next, we define terms that are used to track the 
progress of $\OPT$ and $\NA$ respectively.
 
\begin{itemize}
    \item $o(t) \coloneqq \pr \left(\OPT \text{ does not terminate by time } t \right) $
    \item $a(t) \coloneqq\pr \left(\NA \text{ does not terminate by time } \beta L t\right)$
\end{itemize}

Observe that $o(t)$ and $a(t)$ are non-increasing functions of $t$, and 
$o(0) = a(0) = 1$. 
We can view $o(t)$ and $a(t)$ as ``non-completion'' probabilities of $\OPT$ and $\NA$ respectively.
The following lemma 
relates these non-completion probabilities, and forms the crux of our analysis.
\begin{lemma} \label{lem:key-lemma}
For any $i \geq 0$, we have 
\begin{equation}\label{eq:key-eq}
    \frac{\beta\delta}{4\rho}  \cdot \mathlarger{\sum}_{j \geq i} \left(\frac{a(j+1) - 2\cdot o(j+1)}{j+1}\right) \leq a(i).
\end{equation}
\end{lemma}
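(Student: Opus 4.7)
The plan is a discrete amortization that converts a per-index inequality into the stated sum bound via telescoping. Since $a$ is non-increasing with $a(\infty)=0$, the identity $\sum_{j \ge i}\bigl(a(j) - a(j+1)\bigr) = a(i)$ reduces the lemma to establishing, for each $j \ge i$, the per-index inequality
\[
a(j+1) - 2\,o(j+1) \;\le\; \frac{4\rho}{\beta\delta}\cdot (j+1)\cdot\bigl(a(j) - a(j+1)\bigr).
\]
Dividing by $j+1$ and summing over $j \ge i$ telescopes the right-hand side to $\tfrac{4\rho}{\beta\delta}\,a(i)$, which is exactly \eqref{eq:key-eq}. Indices with $a(j+1)\le 2o(j+1)$ contribute nothing, so I will assume $a(j+1) > 2o(j+1)$ throughout.

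For the per-index step, fix $j\ge i$ and condition on the event $\calE_j$ that $\NA$ is alive at elapsed time $\beta L j$, of unconditional probability $a(j)$. Let $S_j$, $\psi(S_j)$, $H_j$, and $\mathcal{Z}_j$ denote $\NA$'s current visited set, observed partial realization, compatible scenarios, and large parts, respectively. Applying Markov's inequality to the conditional cost of $\OPT$ restricted to $H_j$, the conditional probability mass of scenarios on which $\OPT$ finishes within additional cost $2(j+1)$ is at least $(a(j+1) - 2o(j+1))/a(j)$; the factor $2$ is the Markov constant, and the $-2o(j+1)$ term subtracts off scenarios on which $\OPT$ itself has not finished by time $j+1$. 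Taking the union of the $\OPT$ sub-tours restricted to these ``good'' scenarios and shortcutting via the triangle inequality produces a single non-adaptive $r$-tour $T^\star$ of cost $O(j+1)$ that covers all of them.

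Next I would lower-bound $\text{score}(T^\star)$ from \eqref{eqn:greedy-choice} by aggregating across $\mathcal{Z}_j$: on each $Z\in \mathcal{Z}_j$, a good scenario $\scn$ contributes either to the information-gain term $\sum_{\scn\in L_{T^\star}(Z)} p_\scn$ (if $T^\star$ splits it off the majority branch of $Z$) or to the relative $f$-gain term (if $T^\star$ pushes $f$ to $Q$ under $\scn$), using the monotonicity and submodularity of Lemma~\ref{lem:score-submod}. Because the probability mass outside all large parts is at most $\delta$, the numerator of $\text{score}(T^\star)$ is $\Omega\bigl(\delta\,(a(j+1) - 2o(j+1))\bigr)$, so $\text{score}(T^\star) = \Omega\bigl(\delta(a(j+1)-2o(j+1))/(j+1)\bigr)$. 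Since $\NA$'s greedy step is a $\rho$-approximate solver for \RSO, the selected sub-tour has score at least $1/\rho$ this bound on $\calE_j$, and within the scaled window $[\beta L j,\,\beta L(j+1)]$ of total cost $\beta L$ the accumulated $g$-gain is at least of order $\beta L\cdot \delta\,(a(j+1)-2o(j+1))/(\rho (j+1))$. Relating this gain back to the drop $a(j) - a(j+1)$ in non-completion probability --- each realization contributes at most $O(L)$ of further gain before it finishes, since $L = \log(Q/\delta)$ is the total depth of progress possible --- and choosing $\beta$ as a sufficiently large absolute constant closes the per-index inequality and hence the lemma.

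The main obstacle is two-fold. First, the witness construction: turning an adaptive $\OPT$ policy, which can branch arbitrarily per-scenario, into a single non-adaptive test tour $T^\star$ whose cost/score ratio provably bounds the greedy ratio from below is where both the factor $2$ in front of $o(j+1)$ and the $\delta$ in the LHS coefficient originate, via the Markov step and the discarding of small parts outside $\mathcal{Z}_j$. Second, linking accumulated score on the submodular potential $g$ back to the termination probability $a$ is subtle because score is defined per-iteration while $a$ is defined per-realization; tracking all constants consistently across these two steps and absorbing them into $\beta$ is the main care required.
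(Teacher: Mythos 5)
Your high-level plan (lower-bound the greedy score in each phase by a witness tour extracted from $\OPT$, upper-bound the total accumulated gain, and combine) matches the paper's strategy, but two steps as written do not go through. First, the witness construction: you claim that taking the union of $\OPT$'s sub-tours over all scenarios on which it finishes by cost $O(j+1)$ and shortcutting yields a single $r$-tour $T^\star$ of cost $O(j+1)$. This is false in general --- $\OPT$ is adaptive and its decision tree can trace essentially disjoint paths on different scenarios, so the union can have length $\Theta(j+1)$ \emph{per scenario}. The paper avoids this by extracting, for each large part $Z\in\mathcal{Z}$, only the single ``stem'' path $\Stem_Z$ that follows the majority branch $B_v(Z)$ at every node; each such closed tour $T_Z$ has length at most $2(i+1)$, and since $|\mathcal{Z}|\le 1/\delta$ the concatenated witness has length at most $2(i+1)/\delta$. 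That is where the $\delta$ in the score bound actually comes from (it sits in the denominator as tour length, not in the numerator as you assert via ``mass outside large parts is at most $\delta$,'' which is not the relevant quantity). The per-part good/okay/bad trichotomy is also what produces the $-2\,o(j+1)$ term (a bad part only certifies half its mass as uncovered by $\OPT$), rather than a Markov bound on $\OPT$'s cost.

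Second, your telescoping reduction requires the per-phase inequality $a(j+1)-2o(j+1)\le \frac{4\rho}{\beta\delta}(j+1)\bigl(a(j)-a(j+1)\bigr)$, which in turn needs a per-phase gain bound of the form $G_j\lesssim L\bigl(a(j)-a(j+1)\bigr)$. Neither holds: in a phase where $\NA$ terminates on no scenario we have $a(j)=a(j+1)$, making your right-hand side zero while the left-hand side can be positive. The correct statement is the \emph{aggregated} bound $\sum_{j\ge i}G_j\le L\cdot a(i)$, proved by a per-scenario accounting (conditioned on any scenario still alive after phase $i$, the information-gain indicators fire at most $\log(1/\delta)$ times and the relative function gains telescope to at most $\log Q$, totaling $L$). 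This is precisely why the lemma is stated as a sum over $j\ge i$ bounded by $a(i)$ rather than as a per-index inequality; your reduction discards the slack that makes the argument work.
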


Using this lemma, we can immediately prove Theorem~\ref{thm:partial-cover-thm}.

\begin{proof}[Proof of Theorem~\ref{thm:partial-cover-thm}]
Using the integral identity for expectations, we can write the expected cost of our non-adaptive policy as follows.
\begin{equation*}
    \E[\NA] = \int_{0}^{\infty} \pr(\NA > t) dt = \int_{0}^{\infty} a\left(\frac{t}{\beta L}\right) dt = \beta L \int_{0}^{\infty} a(t) dt
\end{equation*}
where the final equality follows by applying a change of variables. 
Since $a(t)$ is non-increasing in $t$, we have 
\begin{equation}\label{eq:na-exp}
    \E[\NA] = \beta L \int_{0}^{\infty} a(t) dt \leq \beta L \sum_{i \geq 0} a(i) = \beta L \cdot A
\end{equation}
where we set $A = \sum_{i \geq 0} a(i)$. Similarly, we let $O = \sum_{i \geq 0} o(i)$, and 
sum \eqref{eq:key-eq} over $i \geq 0$ to obtain
\begin{equation*}
        \frac{4 \rho}{\beta \delta}\cdot A = \frac{4 \rho}{\beta \delta} \cdot \mathlarger{\sum}_{i \geq 0} a(i) \geq \mathlarger{\sum}_{i \geq 0}\mathlarger{\sum}_{j \geq i} \left(\frac{a(j+1) - 2\cdot o(j+1)}{j+1}\right) = \sum_{j \geq 1} (a(j) - 2\cdot o(j)) \geq A - 2O
\end{equation*}
where the final inequality uses $a(0) = o(0) = 1$. On rearranging the above inequality, we obtain
\begin{equation}\label{eq:relate-na-opt}
A \leq \frac{2\beta\delta} {\beta\delta - 4\rho} \cdot O.
\end{equation}
Finally, we can write the expected cost of $\OPT$ in terms of $O$ as follows.
\begin{equation}\label{eq:opt-exp}
    O - 1 = \sum_{j \geq 1}o(j) \leq \int_{0}^{\infty} o(t) dt = \E[\OPT].
\end{equation}
where the inequality holds since $o(\cdot)$ is non-increasing.

On combining Equations~\eqref{eq:na-exp},\eqref{eq:relate-na-opt}, and \eqref{eq:opt-exp} we get 
\begin{equation*}
    \E[\NA] \leq \beta L \cdot A \leq \frac{2\beta^2 L \delta}{\beta \delta - 4\rho} \cdot O \leq \frac{2\beta^2 L \delta}{\beta \delta - 4\rho} \cdot \left(\E[\OPT] + 1\right).
\end{equation*}
Setting $\beta = \frac{8\rho}{\delta}$ implies that $\E[\NA] \leq \frac{32L\rho}{\delta}\cdot \left(\E[\OPT]+1\right)$.
We note that the +1 term can be eliminated by a straightforward scaling argument: note that, for any $b \geq 1$, if all costs are scaled by $b$ (resulting in $\OPT$ and $\NA$ being scaled by $b$), we get $\E[\NA] \leq \frac{32L\rho}{\delta}\cdot \left(\E[\OPT]+\frac1b\right)$, thus implying that for large enough $b$, we get $\E[\NA] \leq \frac{32L\rho}{\delta}\cdot \E[\OPT]$, as desired.
\end{proof}

\paragraph{Proof of the Lemma~\ref{lem:key-lemma}.} 
Recall that $\Pi$ is the tour returned by our non-adaptive algorithm, and that $\Pi$ is a concatenation of multiple tours.
For each time $t \geq 0$, let $\Pi_t$ denote the tour being visited at time $t$; that is,
$\Pi_t$ is the tour that causes the cumulative cost to exceed $t$. We say that our non-adaptive policy ($\NA$) is in phase $i$ in the time interval $[\beta L i, \beta L (i+1))$ for any $i \geq 0$.
We define the \emph{total gain} of phase $i$ as $$ G_i := \sum_{t = \beta L i}^{\beta L (i+1)} \text{score}(\Pi_t).$$

\paragraph{Lower bounding $G_i$.}
For lower bounding $G_i$, it is convenient to view the optimal adaptive policy ($\OPT$) as
a single tour, and its cost to be the 
distance until the location
where $f$ is covered for the underlying scenario (doing this only lowers the cost of $\OPT$).

Now, fix some time $t \in [\beta L i, \ \beta L (i+1))$, and let
$S$ denote the set of locations visited prior to the selection of tour $\Pi_t$. 
This does not include the locations that may have been visited on $\Pi_t$.
Define $\H({S})$, and ${\cal Z}$ as in Definition~\ref{defn:partn-scn}.
For each part $Z \in \mathcal Z$, let
$\psi_Z(S)$ denote the realization of $S$ under scenarios in $Z$.
Let $Q_{Z} = Q - f(\psi_Z(S))$ 
denote the 
residual target after visiting locations in $S$,
if the realized scenario is in $Z$, and let $f_Z = f_{\psi_Z(S)}$ be the corresponding residual submodular function. Let $L_v(Z) \sse Z$ be the set of scenarios as in Definition~\ref{defn:parts-item}. Furthermore, define $\OPT_{Z}$ as the sub-tree of $\OPT$ until time $i+1$ when restricted to paths traced by scenarios in $Z$. 
Finally, let 
$\Stem_{Z}$ be the path in $\OPT_Z$ that, at each node $v$, follows the branch corresponding to the realization of the scenarios $B_v(Z) = Z \setminus L_v(Z)$. 
Let $\psi(\Stem_Z)$ denote this partial realization.
We also use $\Stem_Z$ to denote the set of locations on this path. 
Note that each part $Z\in \mathcal{Z}$ is a set of scenarios: we use the definition of $\Stem_Z$ to create a partition of $Z$ as follows.

\begin{itemize}
    \item $Z_{\text{good}} = \{\scn \in Z: \scn \sim \psi(\Stem_Z), \ f \text{ covered, i.e., }f(\psi(\Stem_Z))=Q \}$.
    \item $Z_{\text{bad}} = \{\scn \in Z: \scn \sim \psi(\Stem_Z), \ f \text{ uncovered} \}$.
    \item  $Z_{\text{okay}} = \{\scn \in Z: \scn \in L_v(Z) \text{ for some } v \in \Stem_Z\}$.
\end{itemize}
Recall that $\scn \sim \psi(\Stem_Z)$ means that the scenario $\scn$ is compatible with the partial realization $\psi(\Stem_Z)$; that is, $\psi(\Stem_Z) \sse \psi_{\scn}(X)$.
Using this partition, we classify every part $Z$ as \emph{good}, \emph{okay} or \emph{bad}.
\begin{definition}
We say that part $Z \in \mathcal{Z}$ is {\bf good} if  \(\pr(Z_\text{good}) \geq \pr(Z) / 2\), {\bf okay} if \(\pr(Z_\text{okay}) \geq \pr(Z) / 2\), and {\bf bad} if \(\pr(Z_\text{bad}) \geq \pr(Z) / 2\). 
\end{definition}

\begin{lemma}
Each part $Z \in \mathcal{Z}$ is either good, okay, or bad.
\end{lemma}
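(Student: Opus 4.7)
The plan is to show that $Z_{\text{good}}$, $Z_{\text{bad}}$, $Z_{\text{okay}}$ partition $Z$, and moreover that one of $Z_{\text{good}}$, $Z_{\text{bad}}$ is always empty, so that $Z$ is really the disjoint union of at most two of these sets. Once this is established, the conclusion follows immediately from pigeonhole: if $Z$ is partitioned into two sets, at least one has probability mass $\geq \pr(Z)/2$, and that set's label is exactly the corresponding category.

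First I would characterize $Z_{\text{okay}}$ as precisely the set of scenarios in $Z$ that are \emph{incompatible} with $\psi(\Stem_Z)$. Indeed, by construction $\Stem_Z$ follows, at each of its nodes $v$, the observation shared by the majority class $B_v(Z) = Z \setminus L_v(Z)$. So any $\scn \in L_v(Z)$ for $v \in \Stem_Z$ produces a different observation at $v$ and cannot be compatible with $\psi(\Stem_Z)$. Conversely, if $\scn \in Z$ is incompatible with $\psi(\Stem_Z)$, it must disagree with the stem's observation at some $v \in \Stem_Z$, which forces $\scn \in L_v(Z)$; hence $\scn \in Z_{\text{okay}}$. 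This gives the identity $Z_{\text{okay}} = \{\scn \in Z : \scn \not\sim \psi(\Stem_Z)\}$, and so $Z \setminus Z_{\text{okay}} = Z_{\text{good}} \cup Z_{\text{bad}}$.

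Next I would observe the key structural fact: the condition ``$f$ covered, i.e., $f(\psi(\Stem_Z))=Q$'' is a property of the partial realization $\psi(\Stem_Z)$ alone, not of the individual scenario. Hence either $f(\psi(\Stem_Z)) = Q$, in which case $Z_{\text{bad}} = \emptyset$ and $Z = Z_{\text{good}} \sqcup Z_{\text{okay}}$, or $f(\psi(\Stem_Z)) < Q$, in which case $Z_{\text{good}} = \emptyset$ and $Z = Z_{\text{bad}} \sqcup Z_{\text{okay}}$. In either case, $Z$ is partitioned into at most two nonempty sets among $\{Z_{\text{good}}, Z_{\text{bad}}, Z_{\text{okay}}\}$, so by pigeonhole one of them has probability mass at least $\pr(Z)/2$, and $Z$ inherits the corresponding label.

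The only step requiring care — and the one I expect to be the main conceptual hurdle for a reader — is recognizing that ``$f$ covered'' depends only on the stem's partial realization, so $Z_{\text{good}}$ and $Z_{\text{bad}}$ cannot both be nonempty simultaneously. Without this observation, a naive three-way pigeonhole would only yield mass $\geq \pr(Z)/3$, which is weaker than what the definitions of good/okay/bad demand.
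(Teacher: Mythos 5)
Your proof is correct and follows essentially the same route as the paper's: both arguments note that every scenario in $Z$ lies in $Z_{\text{okay}}$ or in $Z_{\text{good}} \cup Z_{\text{bad}}$, and that since ``$f$ covered'' is determined by the single partial realization $\psi(\Stem_Z)$, one of $Z_{\text{good}}, Z_{\text{bad}}$ must be empty, so a two-way pigeonhole gives mass at least $\pr(Z)/2$ to one of the three sets. Your write-up just makes the characterization $Z_{\text{okay}} = \{\scn \in Z : \scn \not\sim \psi(\Stem_Z)\}$ more explicit than the paper does.
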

\begin{proof}
Observe that either
$\scn \sim \psi(\Stem_Z)$ or
$\scn\in L_v(Z)$ for some $v \in \Stem_Z$. 
Thus, a scenario $\scn$ must either be in $Z_{\text{okay}}$ or in $\left(Z_{\text{good}} \cup Z_{\text{bad}}\right)$. 
Furthermore, suppose that $\scn \sim \psi(\Stem_Z)$: so, it may belong to either 
\(Z_{\text{good}}\) or \(Z_{\text{bad}}\). 
If $f$ is covered, then $\scn \in Z_{\text{good}}$, else $\scn \in Z_{\text{bad}}$. 
This implies that one of \(Z_{\text{good}}\) or \(Z_{\text{bad}}\) must be empty.
Thus, $\max\{\pr(Z_{\text{good}}), \pr(Z_{\text{okay}}), \pr(Z_{\text{bad}})\} \geq \pr(Z)/2$.
\end{proof}

\paragraph{Candidate tour based on $\Stem_Z$.} Note that $\Stem_Z$ is a path originating from $r$. Let $T_Z$ denote the 
tour obtained by returning to     $r$ at the end of $\Stem_Z$. As the distance on $\Stem_Z$ is at most $i+1$, using symmetry and triangle inequality, it follows that the distance $d(T_Z)\le 2(i+1)$. We note that this tour $T_Z$ is only used in our proof (we don't find it in our algorithm). 
    
\begin{lemma}\label{lem:stem_func_gain}
Fix part $Z \in \mathcal{Z}$. If $Z$ is good, then $\sum_{\scn \in Z} p_\scn \frac{f_Z(\psi_{\scn}(T_Z))}{Q_Z} \geq \frac{\pr(Z)}{2}.$
\end{lemma}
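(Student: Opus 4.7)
The plan is to unpack the three layered definitions ($f_Z$, $Q_Z$, $\Stem_Z$, $Z_{\text{good}}$) and observe that for every scenario in $Z_{\text{good}}$, the term inside the sum is at least $1$, after which the bound follows just by applying the definition of a ``good'' part.

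The key observation is that the tour $T_Z$ contains $\Stem_Z$ as a subpath. Therefore for any scenario $\scn \in Z$, by monotonicity of $f$,
\[
f\bigl(\psi_Z(S) \cup \psi_\scn(T_Z)\bigr) \;\geq\; f\bigl(\psi_\scn(\Stem_Z)\bigr),
\]
since $\scn \in Z$ implies $\psi_\scn(S) = \psi_Z(S)$, so the left-hand set contains $\psi_\scn(\Stem_Z)$ whenever $\Stem_Z \subseteq T_Z$. For $\scn \in Z_{\text{good}}$ we additionally have $\scn \sim \psi(\Stem_Z)$, which means the partial realization under $\scn$ on $\Stem_Z$ coincides with $\psi(\Stem_Z)$; combined with $f(\psi(\Stem_Z)) = Q$ (the definition of $Z_{\text{good}}$), this yields $f(\psi_\scn(\Stem_Z)) = Q$. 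Substituting back into the definition $f_Z(\psi_\scn(T_Z)) = f(\psi_Z(S)\cup\psi_\scn(T_Z)) - f(\psi_Z(S))$ and $Q_Z = Q - f(\psi_Z(S))$ gives
\[
\frac{f_Z(\psi_\scn(T_Z))}{Q_Z} \;\geq\; \frac{Q - f(\psi_Z(S))}{Q_Z} \;=\; 1 \qquad \text{for every } \scn \in Z_{\text{good}}.
\]

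With this per-scenario lower bound, dropping all other scenarios (they contribute a nonnegative quantity by monotonicity of $f$ and the choice $Q_Z>0$) gives
\[
\sum_{\scn \in Z} p_\scn \cdot \frac{f_Z(\psi_\scn(T_Z))}{Q_Z} \;\geq\; \sum_{\scn \in Z_{\text{good}}} p_\scn \;=\; \pr(Z_{\text{good}}) \;\geq\; \frac{\pr(Z)}{2},
\]
where the last inequality is exactly the definition of $Z$ being good.

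The proof is essentially routine once the definitions are aligned; the only subtlety (and thus the main thing to state carefully) is the equality $\psi_\scn(\Stem_Z) = \psi(\Stem_Z)$ for $\scn \in Z_{\text{good}}$, which relies on the compatibility condition $\scn \sim \psi(\Stem_Z)$ in the definition of $Z_{\text{good}}$, together with the fact that $\scn \in Z$ forces $\scn$ to also be compatible with $\psi_Z(S)$. Once these compatibility facts are in hand, monotonicity of $f$ finishes the argument.
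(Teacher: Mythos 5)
Your proof is correct and follows essentially the same route as the paper's: restrict to $\scn \in Z_{\text{good}}$, use $\Stem_Z \subseteq T_Z$ together with monotonicity and compatibility to conclude $f_Z(\psi_\scn(T_Z)) = Q_Z$ so each such term equals $1$, then drop the remaining (nonnegative) terms and apply $\pr(Z_{\text{good}}) \geq \pr(Z)/2$. Your write-up is in fact slightly more careful than the paper's one-line version, which states $f_Z(\psi_\scn(T_Z)) = f(\psi_\scn(\Stem_Z)) = Q$ where the first quantity is really $Q_Z$; the subsequent displayed chain in the paper uses $Q_Z/Q_Z$ and is fine.
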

\begin{proof}
Let $\scn \in Z_{\text{good}}$. Then, $f_Z(\psi_{\scn}(T_Z)) =f(\psi_{\scn}(\Stem_Z)) = Q$. 
Taking expectations over all scenarios $\scn \in Z$ gives
$$\sum_{\scn\in Z} p_{\scn} \cdot \frac{f_Z(\psi_{\scn}(T_Z))}{Q_Z} \geq \sum_{\scn\in Z_{\text{good}}} p_{\scn} \cdot \frac{f_Z(\psi_{\scn}(T_Z))}{Q_Z} = \sum_{\scn \in Z_{\text{good}}}p_{\scn} \cdot \frac{Q_Z}{Q_Z} \geq \frac{\pr(Z)}{2}$$
where the final inequality follows from the definition of $Z_{\text{good}}$.
\end{proof}

\begin{lemma}\label{lem:info_gain}
Fix part $Z \in \mathcal{Z}$. If $Z$ is okay, then $\pr\left(L_{T_Z}(Z)\right) \geq \frac{\pr(Z)}{2}$ where $L_{T_Z}(Z) = \bigcup_{v \in T_Z}L_v(Z)$.
\end{lemma}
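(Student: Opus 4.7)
The plan is to exploit the fact that the tour $T_Z$ is just the \emph{closure} of $\Stem_Z$ back to $r$, so every location visited along $\Stem_Z$ also lies in $T_Z$. This set-inclusion at the location level will translate into a set-inclusion of the corresponding scenario sets $L_v(Z)$, which is exactly what we need to lower bound $\pr(L_{T_Z}(Z))$ by $\pr(Z_{\text{okay}})$.

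More concretely, I would proceed as follows. First, I would observe that by the construction of $T_Z$ from $\Stem_Z$ (traverse $\Stem_Z$ and return to $r$), every vertex of $\Stem_Z$ is also a vertex of $T_Z$. Therefore
\[
L_{\Stem_Z}(Z) \;=\; \bigcup_{v \in \Stem_Z} L_v(Z) \;\subseteq\; \bigcup_{v \in T_Z} L_v(Z) \;=\; L_{T_Z}(Z).
\]
Next, I would unpack the definition of $Z_{\text{okay}}$: a scenario $\scn \in Z_{\text{okay}}$ satisfies $\scn \in L_v(Z)$ for some $v \in \Stem_Z$, which means precisely that $\scn \in L_{\Stem_Z}(Z)$. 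Hence $Z_{\text{okay}} \subseteq L_{\Stem_Z}(Z) \subseteq L_{T_Z}(Z)$.

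Finally, I would conclude by taking probabilities under the prior and using the assumption that $Z$ is okay:
\[
\pr\bigl(L_{T_Z}(Z)\bigr) \;\geq\; \pr\bigl(Z_{\text{okay}}\bigr) \;\geq\; \frac{\pr(Z)}{2},
\]
where the first inequality is monotonicity of probability under set inclusion and the second is precisely the definition of $Z$ being an \emph{okay} part. There is no real obstacle here — the lemma is essentially unpacking definitions once one notices that $\Stem_Z \subseteq T_Z$; the only thing to be careful about is not to confuse $\Stem_Z$ as a \emph{path} with $\Stem_Z$ as its vertex set (a convention that the paper already adopts just above the lemma statement).
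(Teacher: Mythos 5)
Your proof is correct and follows essentially the same route as the paper: both arguments reduce to the observation that $Z_{\text{okay}} = \bigcup_{v \in \Stem_Z} L_v(Z)$ and that the vertices of $\Stem_Z$ are contained in those of $T_Z$, so $\pr(L_{T_Z}(Z)) \geq \pr(Z_{\text{okay}}) \geq \pr(Z)/2$. Your version is, if anything, slightly more careful in using an inclusion rather than the paper's asserted equality, but the content is identical.
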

\begin{proof}
It follows from the definition of $Z_{\text{okay}}$ that $Z_{\text{okay}} = \cup_{v \in \Stem_Z}L_v(Z)$. So, $\pr\left(L_{T_Z}(Z)\right) = \pr\left(\cup_{v\in T_Z} L_v(Z)\right) =   \pr(Z_{\text{okay}}) \geq \pr(Z)/2$, where the inequality uses $Z$ is okay.
\end{proof}

Lemmas~\ref{lem:stem_func_gain} and \ref{lem:info_gain} allow 
us to relate the probability of 
parts labeled {good} or {okay}
to the score of the selected tour. 
The following lemma bounds the total 
probability of parts that are either good or okay.

\begin{lemma}\label{lem:noncompletion_prob_bound}
We have $\sum_{Z:\text{okay or } \text{good}} \pr(Z) \geq a(i+1) - 2\cdot o(i+1).$
\end{lemma}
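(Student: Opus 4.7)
The plan is to decompose the target inequality into two complementary bounds: (i) $a(i+1) \le \sum_{Z \in \mathcal{Z}} \pr(Z)$, linking the NA non-completion probability to the mass sitting in large parts, and (ii) $\sum_{Z:\text{bad}} \pr(Z) \le 2 o(i+1)$, linking the bad mass to the OPT non-completion probability. Combined with the preceding classification lemma (each $Z \in \mathcal{Z}$ is good, okay, or bad), these yield
\[
\sum_{Z:\text{okay or good}} \pr(Z) \;\ge\; \sum_{Z \in \mathcal{Z}} \pr(Z) - \sum_{Z:\text{bad}} \pr(Z) \;\ge\; a(i+1) - 2\,o(i+1).
\]

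For step (i), I would first observe that NA's set of visited locations is a deterministic function of elapsed time (up to early termination), so NA not terminating by time $\beta L(i+1)$ means in particular that NA has not terminated by the earlier time $t$ at which $S$ was defined. At that moment the set of scenarios consistent with the observations on $S$ has size at least $\delta m$, which by definition of $\mathcal{Z}$ means $\scn^*$ lies in some $Z \in \mathcal{Z}$. Since the parts of $\H(S)$ are disjoint, this gives $a(i+1) \le \pr(\scn^* \in \bigcup_{Z \in \mathcal{Z}} Z) = \sum_{Z \in \mathcal{Z}} \pr(Z)$.

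For step (ii), I would use the definition of $Z_{\text{bad}}$: any $\scn \in Z_{\text{bad}}$ satisfies $\scn \sim \psi(\Stem_Z)$, so OPT run on $\scn$ traces exactly the stem path (which, by construction, is a path of depth at most $i+1$ in OPT's decision tree), and at its end $f$ is uncovered. Hence OPT has not covered $f$ by time $i+1$, i.e., $\scn$ contributes to the event underlying $o(i+1)$. Since the parts $Z \in \mathcal{Z}$ are disjoint, $\sum_{Z} \pr(Z_{\text{bad}}) \le o(i+1)$. For each $Z$ classified as bad we have $\pr(Z_{\text{bad}}) \ge \pr(Z)/2$, so $\sum_{Z:\text{bad}} \pr(Z) \le 2\sum_Z \pr(Z_{\text{bad}}) \le 2 o(i+1)$.

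The main subtle point (more a bookkeeping matter than a genuine obstacle) is step (i): the partition $\mathcal{Z}$ is computed relative to a time $t \in [\beta L i, \beta L(i+1))$, whereas $a(i+1)$ concerns the later time $\beta L(i+1)$. This monotonicity is exactly what we need — partitions only get finer over time, so any large part at the later time is contained in a large part at time $t$ — and is why the inclusion in step (i) goes through. Once those two bounds are in hand, the conclusion is immediate from the classification lemma via the two-line union-complement calculation displayed above.
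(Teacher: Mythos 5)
Your proposal is correct and follows essentially the same route as the paper: the bad-part mass is charged to $2\,o(i+1)$ via the stem path of $\OPT_Z$, and the small-part mass is charged to $1-a(i+1)$ via $\NA$'s stopping rule (you state this second bound in its contrapositive form $a(i+1)\le\sum_{Z\in\mathcal{Z}}\pr(Z)$, which is equivalent). The remaining combination with the good/okay/bad classification lemma matches the paper's calculation.
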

\begin{proof}
Fix $Z \in \mathcal{Z}$, and 
consider $\scn \in Z_{\text{bad}}$.
By definition, $\OPT$ does not cover $f$ when 
$\scn$ is the underlying scenario.
Hence, $\OPT$ costs at least $(i+1)$.
Summing over all parts $Z \in \mathcal{Z}$ (which are disjoint) and scenarios $\scn \in Z_{\text{bad}}$, we get
\begin{equation*}
  o(i+1) \geq \sum_{Z\in \mathcal{Z}}\pr(Z_{\text{bad}}) \geq \sum_{Z: \text{bad}}\pr(Z_{\text{bad}}) \geq \sum_{Z: \text{bad}} \frac{\pr(Z)}{2}
\end{equation*}
where the final inequality follows from $\pr(Z_{\text{bad}}) \geq \pr(Z)/2$ when $Z$ is bad. On rearranging, this yields
\begin{equation}\label{eq:bad-parts}
\sum_{Z: \text{bad}} \pr(Z) \leq 2\cdot o(i+1).
\end{equation}

Next, consider part $Y\in \mathcal H(S) \setminus {\cal Z}$. By Definition~\ref{defn:partn-scn}, 
we have $|Y| \leq \delta m$. 
if the underlying scenario is in $Y$, then
$\NA$ terminates at time $t \leq \beta L (i+1)$; that is, $\NA$ does not go beyond phase $i$. So, we have 
\begin{equation}\label{eq:small-parts}
  \sum_{Y\in \mathcal{H(S)}}\pr(Y) \leq 1 - a(i+1)  
\end{equation}
Combining \eqref{eq:bad-parts} and \eqref{eq:small-parts} yields
\begin{align*}
\sum_{Z:\text{okay or good}} \pr(Z) &= 1 - \sum_{Z: \text{bad}} \pr(Z)  - \sum_{Y\in H(S)\setminus {\cal Z}}\pr(Y) \\
    &\geq 1 - 2\cdot o(i+1) - (1-a(i+1))\\
    & = a(i+1) - 2\cdot o(i+1).
\end{align*}
\end{proof}

\begin{lemma} For any $S\sse X$, the function 
$$g(T) =\sum_{Z \in {\cal Z}}\left( \sum_{\scn \in L_T(Z)} p_{\scn} + \sum_{\scn \in Z} p_{\scn} \cdot \frac{f(\psi_Z(S) \cup \psi_{\scn}(T)) - f(\psi_Z(S))}{Q - f(\psi_Z(S))} \right), \quad \forall T\sse X, $$ 
is nonnegative, monotone and submodular. Hence, the problem in \eqref{eqn:greedy-choice} is an instance of \RSO. 
\end{lemma}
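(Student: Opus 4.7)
The plan is to observe that this statement is essentially a restatement of Lemma~\ref{lem:score-submod} (whose proof is already given), augmented with nonnegativity and an observation that the resulting optimization problem fits the \RSO framework. So I would split the proof into three short pieces: monotonicity/submodularity, nonnegativity, and the reduction to \RSO.

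For monotonicity and submodularity, I would simply invoke Lemma~\ref{lem:score-submod}: that lemma decomposes $g$ as a sum, over $Z \in \mathcal{Z}$, of $g_1(T,Z) = \sum_{\scn \in L_T(Z)} p_\scn$ and $g_2(T,Z) = \sum_{\scn \in Z} p_\scn \cdot (f(\psi_Z(S) \cup \psi_\scn(T)) - f(\psi_Z(S)))/(Q - f(\psi_Z(S)))$, and shows each summand is monotone and submodular. Since monotonicity and submodularity are preserved under nonnegative linear combinations, $g$ inherits both properties.

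For nonnegativity, I would check each summand. The term $g_1(T,Z)$ is a sum of nonnegative probabilities. For $g_2(T,Z)$, the numerator $f(\psi_Z(S) \cup \psi_\scn(T)) - f(\psi_Z(S))$ is nonnegative by monotonicity of $f$, and the denominator $Q - f(\psi_Z(S))$ is strictly positive, because the partial-cover algorithm only reaches the scoring step while $f$ is not yet fully covered under $\psi_Z(S)$ (otherwise the outer termination condition would have fired). Hence each summand is nonnegative, and clearly $g(\emptyset) = 0$.

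Finally, for the reduction to \RSO: by Definition~\ref{def:rso}, an instance of \RSO consists of a metric on $X \cup \{r\}$ together with a monotone submodular function $g:2^X \to \mathbb{Z}_+$, and the objective is to find an $r$-tour maximizing $g(T)/d(T)$. Here $\mathrm{score}(T) = g(T)/d(T)$ and the $g$ just shown is nonnegative, monotone, and submodular on $X$; the only mismatch with Definition~\ref{def:rso} is that $g$ takes rational rather than integer values, which is handled by multiplying by a common denominator (this scales the objective by a constant and does not affect the maximizer). Hence \eqref{eqn:greedy-choice} is an \RSO instance. The main "obstacle" is really just the bookkeeping verification that the denominator $Q - f(\psi_Z(S))$ is strictly positive whenever $g$ is evaluated and that nothing in Lemma~\ref{lem:score-submod}'s proof needed $f$ to be integer valued—once these are noted, the result follows directly.
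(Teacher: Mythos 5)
Your proof is correct and takes essentially the same approach as the paper: decompose $g$ over the parts $Z \in \mathcal{Z}$, verify that each of the two summands is nonnegative, monotone and submodular (the first as a weighted coverage function, the second inherited from $f$), and conclude by closure under nonnegative linear combinations. The only inaccuracy is your justification for the denominator $Q - f(\psi_Z(S))$ being strictly positive --- the tour-construction loop in Algorithm~\ref{alg:partial-cover} checks only $S \neq X$ and $\mathcal{Z} \neq \emptyset$, not per-part coverage, so a large part could in principle already be fully covered --- but this is harmless since monotonicity of $f$ forces the numerator to vanish whenever the denominator does, so that term is read as zero.
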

\begin{proof}
We show that the term in function $g(T)$ for each $Z \in {\cal Z}$ is nonnegative, monotone and submodular. This suffices to prove the lemma as the sum of submodular functions remains submodular.  

We now fix $Z\in {\cal Z}$. 
Notice that $\sum_{\scn \in L_T(Z)} p_{\scn}$ as a function of $T$ is a weighted set-coverage function, which is known to be monotone submodular. 
Further, for any fixed $\scn\in Z$, the function $g_\scn(T):=f(\psi_Z(S) \cup \psi_{\scn}(T)) - f(\psi_Z(S))$ is nonnegative, monotone and submodular in $T$ because the original function $f$ is. This implies that $\sum_{\scn \in Z} p_{\scn} \cdot \frac{f(\psi_Z(S) \cup \psi_{\scn}(T)) - f(\psi_Z(S))}{Q - f(\psi_Z(S))}$, which is a nonnegative combination of $g_\scn(T)$s, is also  nonnegative, monotone and submodular.
\end{proof}

The following lemma gives a lower bound on the score of tour $\Pi_t$.

\begin{lemma}\label{lem:submod}
For time $t \in [\beta L i, \ \beta L (i+1))$, we have
$$\text{score}(\Pi_t) \geq \frac{\delta}{4\rho \cdot (i+1)} \cdot \left(a(i+1) - 2\cdot o(i+1)\right)$$
\end{lemma}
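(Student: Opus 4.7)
\textbf{Proof proposal for Lemma~\ref{lem:submod}.} The plan is to exhibit an explicit candidate tour whose score is at least the right-hand side times $\rho$, and then invoke the fact that $\Pi_t$ was selected as a $\rho$-approximate maximizer of score against this candidate.

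First I would fix notation. Let $S$ denote the set of locations visited before the selection of $\Pi_t$, and let $\mathcal{Z}$, $\psi_Z(S)$, and the tours $\{T_Z\}_{Z \in \mathcal{Z}}$ be exactly as defined in the preamble to the lemma. Recall that $T_Z$ is obtained from $\Stem_Z$ by returning to $r$, so $d(T_Z) \le 2(i+1)$, and that $|\mathcal{Z}| \le 1/\delta$ since each large part carries at least a $\delta$-fraction of scenarios. Define the candidate tour
\[
T^\star \;=\; \bigcup_{Z \in \mathcal{Z}:\ Z \text{ good or okay}} T_Z,
\]
interpreted as the concatenation of these tours. Since each $T_Z$ starts and ends at $r$, so does $T^\star$, and
\[
d(T^\star) \;\le\; \sum_{Z \text{ good or okay}} d(T_Z) \;\le\; 2(i+1)\cdot |\mathcal{Z}| \;\le\; \frac{2(i+1)}{\delta}.
\]

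Next I would lower-bound the numerator $g(T^\star)$ of $\text{score}(T^\star)$. Write $g(T) = \sum_{Z \in \mathcal{Z}} \text{term}_Z(T)$; the proof of Lemma~\ref{lem:score-submod} shows that each $\text{term}_Z$ is monotone in $T$. Hence for every good or okay part $Z$, since $T^\star \supseteq T_Z$, we have $\text{term}_Z(T^\star) \ge \text{term}_Z(T_Z)$. For a good $Z$, Lemma~\ref{lem:stem_func_gain} applied to the relative function gain piece gives $\text{term}_Z(T_Z) \ge \pr(Z)/2$; for an okay $Z$, Lemma~\ref{lem:info_gain} applied to the information gain piece gives the same bound $\text{term}_Z(T_Z) \ge \pr(Z)/2$. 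Summing over all such $Z$ and using Lemma~\ref{lem:noncompletion_prob_bound}, I would get
\[
g(T^\star) \;\ge\; \tfrac{1}{2}\sum_{Z \text{ good or okay}} \pr(Z) \;\ge\; \tfrac{1}{2}\bigl(a(i+1) - 2\,o(i+1)\bigr).
\]

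Combining the two bounds yields $\text{score}(T^\star) = g(T^\star)/d(T^\star) \ge \frac{\delta(a(i+1)-2o(i+1))}{4(i+1)}$. Since at the moment $\Pi_t$ was picked the algorithm computed a $\rho$-approximate maximizer of the score (with respect to the same $S$, $\mathcal{Z}$), and $T^\star$ is a feasible tour, we conclude $\text{score}(\Pi_t) \ge \text{score}(T^\star)/\rho$, which is the desired inequality.

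The main obstacle I anticipate is identifying the right candidate tour: scoring a single $T_Z$ would give a bound scaling with one $\pr(Z)$ rather than the sum $a(i+1)-2o(i+1)$, so we must aggregate across all good/okay parts. The key observation that makes aggregation cheap is that concatenating the $T_Z$'s multiplies the length only by $|\mathcal{Z}| \le 1/\delta$ while (by monotonicity of each $\text{term}_Z$) the numerator accumulates the contributions from every part — which is exactly why the factor $\delta$ appears in the final bound.
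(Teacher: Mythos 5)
Your proposal is correct and follows essentially the same route as the paper: the paper also builds a candidate tour by concatenating the tours $T_Z$ (it uses all $Z\in\mathcal{Z}$ and then drops the non-good/non-okay terms by nonnegativity, whereas you concatenate only the good/okay ones — an immaterial difference), bounds its length by $2(i+1)/\delta$ via $|\mathcal{Z}|\le 1/\delta$, and lower-bounds its score using Lemmas~\ref{lem:stem_func_gain}, \ref{lem:info_gain}, and \ref{lem:noncompletion_prob_bound} together with monotonicity of the per-part terms, before invoking the $\rho$-approximate maximization.
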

\begin{proof}
Recall that by construction of tour $T_Z$ that  $d(T_Z) \leq 2 \cdot (i+1)$. Let $T$ be a concatenation of all  tours $\{T_Z: Z \in \mathcal{Z} \}$. So, we have $d(T) = \sum_{Z \in \mathcal{Z}} d( T_Z) \leq |{\cal Z}| \cdot 2(i+1) \leq \frac{2\cdot (i+1)}{\delta}$, 
where the final inequality uses the fact that  $|\mathcal{Z}| \leq 1/\delta$. Moreover, tour $\Pi_t$ is a $\rho$-approximately optimal solution to the \RSO instance solved in \eqref{eqn:greedy-choice}; see Lemma~\ref{lem:submod}. So, we can lower  bound the score of $\Pi_t$ as follows.
\begin{align}
    \text{score}(\Pi_t) &\geq \frac{1}{\rho} \cdot \max_{\text{tour }\Pi } \text{score}(\Pi) \geq \frac{1}{\rho} \cdot  \text{\score}(T) \notag \\
    &= \frac{1}{\rho} \cdot \frac{1}{d(T)}\cdot \sum_{Z \in \mathcal{Z}} \left( \pr(L_{T}(Z)) + \sum_{\scn \in Z} p_{\scn} \cdot \frac{f(\psi_Z(S) \cup \psi_{\scn}(T)) - f(\psi_Z(S))}{Q - f(\psi_Z(S))} \right) \label{eq:avg-scores} \\
    &\geq \frac{1}{\rho} \cdot \frac{1}{d(T)}\cdot \left(\sum_{Z: \text{ okay}} \pr(L_{T}(Z)) + \sum_{Z: \text{ good}} p_{\scn} \cdot \frac{f(\psi_Z(S) \cup \psi_{\scn}(T)) - f(\psi_Z(S))}{Q - f(\psi_Z(S))}   \right) \notag \\
    &\geq \frac{1}{\rho} \cdot \frac{1}{d(T)}\cdot \left(\sum_{Z: \text{ okay}} \frac{\pr(Z)}{2} + \sum_{Z: \text{ good}} \frac{\pr(Z)}{2} \right) \label{eq:lower-bound-score} \\
    &\geq \frac{1}{\rho} \cdot \frac{1}{d(T)}\cdot \frac{a(i+1) - 2\cdot o(i+1)}{2} \label{eq:lower-bound-non-completion}\\
    &\geq \frac{1}{\rho} \cdot \frac{\delta}{4\cdot (i+1)} \cdot \left(a(i+1) - 2\cdot o(i+1)\right) \label{eq:tour-combined-cost}
\end{align}
where \eqref{eq:avg-scores} uses the definition of score , \eqref{eq:lower-bound-score} uses Lemmas~\ref{lem:info_gain} and \ref{lem:stem_func_gain} and $T_Z\sse  T $ for all $Z\in {\cal Z}$, \eqref{eq:lower-bound-non-completion} uses Lemma~\ref{lem:noncompletion_prob_bound}, and \eqref{eq:tour-combined-cost} uses the upper bound on the distance of tour $T$.

\end{proof}

On summing over $t \in [\beta Li, \ \beta L (i+1))$, we get
\begin{equation}\label{eq:G-i-lb}
    G_i = \sum_{t=\beta Li}^{\beta L (i+1)} \score(\Pi_t) \geq \frac{\beta L \delta}{4\rho\cdot (i+1)}\cdot \left(a(i+1) - 2\cdot o(i+1)\right)
\end{equation}

\paragraph{Upper bounding $G_i$.} Instead of upper bounding the gain term $G_i$, we show a stronger result by bounding the sum $\sum_{j \geq i} G_j$. Specifically, we prove the following.
\begin{equation}\label{eq:G-i-ub}
    \sum_{j \geq i} G_j \leq L \cdot a(i).
\end{equation}
The proof of this proceeds by viewing $G_i$ as a sum over decision paths.
Towards this end, fix scenario $\scn \in M$. Let $G_i(\scn)$ denote the gain term \emph{conditioned} on the underlying scenario being $\scn$.
We will provide an upper bound for the term $\sum_{j \geq i} G_j(\scn)$.
Let $\langle \Pi_1, \Pi_2, \ldots \rangle$ denote the sequence of tours given by $\NA$. 
Let $\Pi_h$ be the first tour (if any) selected in phase $i$, and let $\Pi_{\ell}$ denote the final tour selected (before $\NA$ terminates). 
We set $G_j(\scn) = 0$ for all $j \geq i$, if $h$ is undefined.
Let $Z_j$ be the set of scenarios compatible with the realization $\cup_{p=1}^{j-1}\psi_{\scn}(\Pi_p)$. So, $M = Z_1 \supseteq Z_2 \cdots Z_{\ell} \supseteq Z_{\ell+1}$. Here $Z_{\ell+1}$ denotes the set of scenarios that are compatible when $\NA$ terminates. 
Lastly, let $c_p$ denote the cost of tour $\Pi_p$ paid \emph{after} phase $i$: note that $c_p = d(\Pi_p)$ for all $p > h$. So, we have 

\begin{align}
    \sum_{j \geq i} G_j(\scn) &= \sum_{p=h}^{\ell} \frac{c_p}{d(\Pi_p)}  \left(\mathbbm{1} \left[\scn \in L_{\Pi_{p}}(Z_p) \right] + \frac{f\left(\psi_{\scn}(\Pi_p) \cup \cup_{q=1}^{p-1}\psi_{\scn}(\Pi_q) \right) - f\left(\cup_{q=1}^{p-1}\psi_{\scn}(\Pi_q)\right)}{Q - f\left(\cup_{q=1}^{p-1}\psi_{\scn}(\Pi_q)\right)}\right) \notag \\
    &\leq \sum_{p=h}^{\ell} \left(\mathbbm{1} \left[\scn \in L_{\Pi_{p}}(Z_p) \right] + \frac{f\left(\psi_{\scn}(\Pi_p) \cup \cup_{q=1}^{p-1}\psi_{\scn}(\Pi_q) \right) - f\left(\cup_{q=1}^{p-1}\psi_{\scn}(\Pi_q)\right)}{Q - f\left(\cup_{q=1}^{p-1}\psi_{\scn}(\Pi_q)\right)}\right). \label{eq:sum-gain-i-ub}
\end{align}
where the inequality follows from \(c_p \leq d(\Pi_p)\). Note that $\sum_{j \geq i}G_j = \E_{\scn}\left[\sum_{j \geq i}G_j(\scn)\right]$.
We recall that that $\NA$ 
continues visiting sensing locations
until either
(i) the number of compatible scenarios is less than $\delta m$, or (ii) the function is already covered (i.e., $f(S) = Q$). In particular, this implies that $|Z_{\ell}| \geq \delta m$.
We proceed by separately bounding the terms in \eqref{eq:sum-gain-i-ub}. We begin by analyzing the second term. Since $f$ is integral, monotone and has full coverage value equal to $Q$, we get
\begin{equation}\notag
      \mathlarger{\sum}_{p=h}^{\ell} \left( \frac{f\left(\psi_{\scn}(\Pi_p) \cup \cup_{q=1}^{p-1}\psi_{\scn}(\Pi_q) \right) - f\left(\cup_{q=1}^{p-1}\psi_{\scn}(\Pi_q)\right)}{Q - f\left(\cup_{q=1}^{p-1}\psi_{\scn}(\Pi_q)\right)}\right) \leq \sum_{p=1}^{Q}\frac{1}{p} \leq \log(Q).
\end{equation}
Next, we analyze the first term: $\sum_{p=h}^{\ell} \mathbbm{1} \left[\scn \in L_{\Pi_{p}}(Z_p) \right]$. 
Consider $h \leq p \leq \ell$. If $\scn \in L_{\Pi_p(Z_p)}$, then the number of compatible scenarios is at most half the total number of scenarios in $Z_p$ (this follows from Definition~\ref{defn:parts-item}). So $|Z_{p+1}| \leq \frac{1}{2}\cdot |Z_p|$.
Since 
$|Z_1| = m$, and $|Z_{\ell}| \geq \delta m$,
it follows that

\[\sum_{p=h}^{\ell}\mathbbm{1} \left[\scn \in L_{\Pi_p}(Z_p) \right] \leq \log_2\left(\frac{m}{\delta m}\right) = \log\left(\frac{1}{\delta}\right).\]

Thus, we have $\sum_{j \geq i}G_j(\scn) = \log\left(\frac{1}{\delta}\right) + \log(Q) = \log\left(\frac{Q}{\delta}\right)$.
Finally, we have 
\begin{equation}\notag
\sum_{j \geq i}G_j = \E_{\scn}\left[\sum_{j \geq i}G_j(\scn)\right] \leq \log\left(\frac{Q}{\delta}\right) \cdot \pr(\NA \text{ does not terminate before phase $i$}) = \log\left(\frac{Q}{\delta}\right) \cdot a(i),
\end{equation}
which proves \eqref{eq:G-i-ub} with $L = \log\left(\frac{Q}{\delta}\right)$.

\paragraph{Finishing Up.} On summing \eqref{eq:G-i-lb} for $j \geq i$, and using \eqref{eq:G-i-ub} we get 
\begin{equation}\notag
\frac{\beta L \delta}{4\rho} \cdot \mathlarger{\sum}_{j \geq i}\left(\frac{a(j+1) - 2\cdot o(j+1))}{j+1} \right)\quad \leq \quad \sum_{j \geq i} G_j \quad \leq \quad L \cdot a(i),
\end{equation}
which on rearranging gives
\begin{equation}\notag
    \frac{\beta\delta}{4\rho}  \cdot \mathlarger{\sum}_{j \geq i} \left(\frac{a(j+1) - 2\cdot o(j+1)}{j+1}\right) \leq a(i).
\end{equation}
This completes the proof of Lemma~\ref{lem:key-lemma}.

\section{Improved Algorithm for Hypothesis Identification}\label{sec:ipp-h}
\vspace{-2mm}
Here, we consider an important special case of \prob: path planning for hypothesis identification (\probh). An instance is given by  the tuple $(X, r, d, M, \D, O)$. Here, $X$ is the set of 
sensing locations,
$r$ is the root location and $d$ is a metric on $X \cup \{r\}$.
 Set $M = \{1, \ldots, m\}$ is a  finite set of \emph{hypotheses}/\emph{scenarios} with probabilities  $\{p_\scn\}_{\scn\in M}$.
The set $O$ denotes the  possible observations;  each $v  \in X$ realizes to a random observation in 
$O$. 

The distribution $\D$ specifies the probability $p_\scn$ of each scenario $\scn\in M$ as well as the observations $\{\scn(v) \in O : v\in X\}$ at all locations under scenario $\scn$. The true hypothesis/scenario  $\scn^*\in M$ according to the distribution $\D$; however $\scn^*$ is initially unknown to the algorithm. The goal is to identify $\scn^*$ by visiting locations at the minimum expected distance. When location $v\in X$ is visited, the robot observes $\scn^*(v)\in O$ and can use this information to update its priors.

In order to cast \probh as a special case of \prob, we define a submodular function $f:2^{X\times O}\rightarrow \R_+$ as follows. 
For each $v\in X$ and $o\in O$, let $E_{v,o}\sse M$ be the set of hypotheses that are {\em incompatible} with observation $o$ at location $v$. Note that if we observe $o$ at location $v$ then we must have $\scn^*\not\in E_{v,o}$. Now, we define, 
$\forall S\sse X \mbox{ and } o_v\in O \mbox{ for }v\in S$:
\begin{equation}\label{eq:f-hyp-id}
f(\{(v,o_v) : v\in S\}) = \big| \bigcup_{v\in S} E_{v,o_v} \big|. 
\end{equation} 
Note that this is exactly the number of incompatible hypotheses after having visited locations $S$ and observed $o_v$ at each $v\in S$. It is easy to see that $f$ is monotone and submodular: it is a set coverage function. 
Clearly, $\scn^*$ is identified  precisely when this number is $m-1$. So, we set our target $Q=m-1$. 

Recall that at each step of the partial covering algorithm $\mathtt{PCA}$, we need to solve an instance of \RSO with the following objective function:
\begin{align*}
 g(T) &=  \sum_{Z \in {\cal Z}}\bigg( \sum_{\scn \in L_T(Z)} p_{\scn} + \sum_{\scn \in Z} p_{\scn} \cdot \frac{f(\psi_Z(S) \cup \psi_{\scn}(T)) - f(\psi_Z(S))}{Q - f(\psi_Z(S))} \bigg),
\end{align*}
$\forall T\sse X$. Above, $S\sse X$ is a fixed subset. 
This is exactly the criterion in \eqref{eqn:greedy-choice}.

In the special case of \probh, we show below that all \RSO instances correspond to the simpler {\em ratio group Steiner} problem: 
\begin{definition}[Ratio Group Steiner] \label{defn:ratio-grp}
An instance consists of a metric $(V,d)$ with nodes $V$ and distances $d:V\times V \to \R_+$. There is a special root node $r\in V$ and $k$ {\em groups}, where each group $i\in [k]$ is associated with a subset $S_i\sse V$ and weight $w_i\ge 1$. We want to find a tour $\tau$ originating from $r$ that minimizes the ratio of its distance to the weight of groups covered.
\end{definition}

Moreover,  we provide a better and more efficient approximation algorithm for this problem (see \S~\ref{sec:ratio-group-steiner}).
\begin{theorem}\label{thm:grp-st-ratio}
There is a randomized $O(\log^2n)$-approximation algorithm for ratio group Steiner, where $n=|V|$ is the number of nodes. 
\end{theorem}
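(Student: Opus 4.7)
The plan is to reduce the ratio problem to a tree metric via a Fakcharoenphol--Rao--Talwar (FRT) embedding, and then solve the ratio problem on the tree via LP relaxation and Garg--Konjevod--Ravi (GKR)-style top-down rounding. Apply FRT to $(V,d)$ to obtain a random dominating tree metric $(V,d_T)$ satisfying $d_T(u,v)\ge d(u,v)$ for all pairs and $\E[d_T(u,v)]\le \bigOh(\log n)\cdot d(u,v)$; the groups $S_i$, weights $w_i$, and root $r$ are unchanged. For the optimal original tour $\tau^*$ with ratio $\lambda^* = d(\tau^*)/w(\tau^*)$, its expected cost on $T$ is $\bigOh(\log n)\cdot d(\tau^*)$ while $w(\tau^*)$ is unchanged, so the expected optimum tree ratio is $\bigOh(\log n)\lambda^*$. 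Conversely, any tour $\tau$ built in $T$ has $d(\tau)\le d_T(\tau)$, so its original-metric ratio is no worse than its tree-metric ratio. It therefore suffices to design an $\bigOh(\log n)$-approximation for ratio group Steiner on a rooted tree of depth $\bigOh(\log n)$.

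On $T$ rooted at $r$, I would solve the natural cut LP with edge variables $x_e\in[0,1]$ and group variables $y_i\in[0,1]$: for every edge-cut $C$ separating $r$ from $S_i$, impose $\sum_{e\in C} x_e\ge y_i$; impose $\sum_i w_i y_i\ge 1$; and minimize $\sum_e d_T(e)\,x_e$. Let $\lambda_{LP}$ denote the optimum. Apply GKR rounding top-down: visit edges in BFS order from $r$, and for each edge $e$ with parent edge $e'$ include $e$ (conditional on $e'$ being already included) with probability $x_e^*/x_{e'}^*$. The resulting subtree $\tau$ satisfies $\E[d_T(\tau)]=\sum_e d_T(e)\,x_e^* = \lambda_{LP}$, and the standard single-group GKR analysis on a depth-$\bigOh(\log n)$ tree yields $\pr[\tau\text{ covers }S_i]\ge y_i^*/\bigOh(\log n)$. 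Linearity of expectation then gives $\E[w(\tau)]=\Omega(1/\log n)$, so the ratio of expectations is $\E[d_T(\tau)]/\E[w(\tau)]\le \bigOh(\log n)\lambda_{LP}$. Converting the sampled subtree into a tour by an Eulerian doubling loses only a factor of $2$.

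The final step is to convert this ratio-of-expectations bound into an output tour with a genuine ratio guarantee: I would draw polynomially many independent samples from the rounding and return the one of smallest ratio. Choosing $\alpha=\bigOh(\log n)$ with a sufficiently large constant makes $\E[d_T(\tau)-\alpha\lambda_{LP}\,w(\tau)]<0$; after a preliminary bucketing of groups into $\bigOh(\log n)$ weight classes (and running the algorithm on each class separately, losing only a constant factor by taking the best class) the total weight is polynomially bounded, and a Markov-plus-reverse-Markov argument then shows that a single sample achieves $d_T(\tau)/w(\tau)\le \alpha\lambda_{LP}$ with probability at least $1/\mathrm{poly}(n)$. Taking the best of polynomially many samples succeeds with high probability, and composing with the FRT expected distortion yields the claimed $\bigOh(\log^2 n)$ approximation. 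The main obstacle I anticipate is precisely this last conversion: a single rounded subtree's ratio does not concentrate, and controlling the scale of $w(\tau)$ via weight bucketing (or an analogous preprocessing) is what makes the Markov-type argument quantitative. The FRT reduction, LP feasibility certificate against $\lambda^*$, and the GKR per-group bound are the standard building blocks I would invoke without re-deriving.
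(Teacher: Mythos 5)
Your overall route is sound and uses the same two standard building blocks as the paper (FRT embedding to a depth-$\bigOh(\log n)$ tree, then an LP plus GKR-style dependent rounding), but the two proofs diverge at the crucial last step. You keep the rounding randomized and then convert the ratio-of-expectations bound $\E[d_T(\tau)]/\E[w(\tau)]\le \bigOh(\log n)\,\lambda_{LP}$ into a per-sample guarantee by repetition and a reverse-Markov argument on $X=\alpha\lambda_{LP}w(\tau)-d_T(\tau)$. The paper instead derandomizes on the tree: it defines a pessimistic estimator $P(x)$ for the coverage (the first- and second-moment expression $\sum_i w_i\sum_{v\in S_i}[\Pr(v\in\gkr(x))-\frac{1}{2H+2}\sum_{u\in S_i}\Pr(v,u\in\gkr(x))]$), shows the ratio $D(x)/P(x)$ never increases as edges are rounded one by one to $\{0,1\}$, and sandwiches $P$ between $\frac12\sum_i w_iy_i$ at the LP optimum and $\frac{H+1}{2}\cov(x^*)$ at the integral endpoint. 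This buys a deterministic tree-rounding (the only randomness left is FRT) and, importantly, completely avoids the issue you correctly identify as your main obstacle: controlling the scale of $w(\tau)$ so that $\Pr[X>0]\ge 1/\mathrm{poly}(n)$.

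On that obstacle, one detail of your fix is not right as stated: bucketing the groups into weight classes and ``taking the best class'' loses a factor equal to the \emph{number} of classes, not a constant (consider an optimum covering equal total weight from each of many geometric classes), and the number of classes is $\Theta(\log(w_{\max}/w_{\min}))$, which need not be $\bigOh(1)$ or even $\bigOh(\log n)$. The standard repair is to guess the largest weight $w^*$ of any group covered by the optimum ($k$ choices), discard heavier groups, and discard groups of weight below $w^*/(2k)$; the discarded light groups contribute at most half the optimal coverage, the surviving weights span a polynomial range, and only a constant factor is lost. With that substitution (or by adopting the paper's pessimistic-estimator rounding), your argument goes through; your per-group coverage bound $\Pr[\tau\text{ covers }S_i]\ge y_i^*/\bigOh(H)$ is exactly the second-moment inequality the paper proves as Lemma 3.9, and the rest of your reduction (LP feasibility against $\lambda^*$, domination and expected distortion of FRT) matches the paper.
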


We now construct an instance of ratio group Steiner corresponding to the \RSO instance when function $f$ is given by \eqref{eq:f-hyp-id}. The metric and root remains the same. The groups and weights are as follows:
\vspace{-2mm}
\begin{itemize}
\item {\em Groups for information gain (1st term in $g$).} For each $Z\in {\cal Z}$ and scenario $\scn\in Z$ there is a group 
consisting of nodes $\{v\in X: \scn\in L_v(X)\}$ with weight $p_\scn$.  
\item {\em Groups for function gain (2nd term in $g$).} For each $Z\in {\cal Z}$, note that the compatible scenarios after observing partial realization $\psi_Z(S)$ is exactly $Z$. 
So, $f(\psi_Z(S))=m-|Z|$ and $\forall R\sse X\times O$,
$$f(\psi_Z(S)\cup R)-f(\psi_Z(S))=|Z\cap \left( \cup_{(v,o)\in R} E_{v,o}\right)|.$$
Hence, for any $\scn\in Z$ and $\forall T\sse X$, we have
\begin{align*}
   \frac{f(\psi_Z(S) \cup \psi_{\scn}(T)) - f(\psi_Z(S))}{Q - f(\psi_Z(S))} = \frac{1}{|Z|-1}\sum_{\theta\in Z} \mathbf{1}\left[\theta \in \cup_{v \in T} E_{v,\psi_\scn(v)}\right].
\end{align*}
Moreover, note that $\theta \in \cup_{v \in T} E_{v,\psi_\scn(v)}$ iff the outcomes at $v$ under scenarios $\scn$ and $\theta$ are different, i.e., $\psi_\scn(v)\ne \psi_\theta(v)$.
Now, we introduce a group for each $Z\in {\cal Z}$ and scenarios $\scn,\theta\in Z$ with nodes $\{v\in X : \psi_\scn(v)\ne \psi_\theta(v)\}$ and weight $\frac{p_\scn}{|Z|-1}$.
\end{itemize}
\vspace{-2mm}
So, it follows that for any $T\sse X$, the total weight of covered groups is $g(T)$. Hence, this \RSO instance reduces to the ratio group Steiner problem.

\paragraph{Completing the proof of Theorem~\ref{thm:k-round-id}.} Using the above ratio group Steiner instance and Theorem~\ref{thm:grp-st-ratio}, we obtain a  $\rho=\bigOh(\log^2n)$ approximation algorithm for \RSO instances arising from \probh. Combined with Theorem~\ref{thm:k-round-ipp}, this  implies Theorem~\ref{thm:k-round-id}.

\vspace{-2mm}
\paragraph{Tighter Approximation Using More Rounds.} Using our  partial covering algorithm (Theorem~\ref{thm:partial-cover-thm}) and a  different  measure of progress in each round  (as in Theorem 6.7 of \cite{GGN21}), we can also obtain $2k$-round algorithms with better approximation guarantees of $\bigOh\left(\log^{2+\epsilon}(n) \cdot m^{1/k} \cdot \log(Q m)\right)$ for \prob and $\bigOh\left(\log^{2}(n) \cdot m^{1/k}\cdot \log m)\right)$ 
for \probh \ {(see \S~\ref{app:better-appx} for details)}. Setting the number of rounds to $\bigOh(\log m)$, we then 
 get  approximation guarantees of $\bigOh\left(\log^{2+\epsilon}(n) \cdot \log(Q m)\right)$ and $\bigOh\left(\log^2(n) \cdot \log m\right)$ for \prob and \probh respectively. These approximation ratios match the previous-best approximation ratios for these problems, even for fully-adaptive algorithms \cite{NavidiKN20,GNR17}. In fact, \probh and \prob generalize the {\em group Steiner tree} problem \cite{GargKR00}, for which the best known approximation ratio is $\bigOh\left(\log^2(n) \cdot \log m\right)$; there is also an $\Omega(\log^{2-\epsilon} n)$ hardness of approximation \cite{HalperinK03}. 

\vspace{-2mm}

\subsection{Ratio  Group Steiner Tree}\label{sec:ratio-group-steiner}
\def\cov{\mathsf{cov}}
\def\gkr{\ensuremath{\mathsf{GKR}}\xspace}
In this section, we consider the minimum ratio group Steiner  problem (Definition~\ref{defn:ratio-grp}). 
Recall that we have a metric $(V,d)$ with nodes $V$ and distances $d:V\times V \to \R_+$, root node $r\in V$ and $k$ {\em groups} where each group $i\in [k]$ has a subset $S_i\sse V$ and weight $w_i\ge 1$. We want to find a tour $\tau$ originating from $r$ that minimizes the ratio of its distance to the weight of groups covered. Formally, a group $i\in [k]$ is covered by tour $\tau$ if it contains any node of $S_i$ (i.e., $S_i\cap \tau \ne \emptyset$). Then, we want to find a tour $\tau$ that minimizes
$$Ratio(\tau)=\frac{d(\tau)}{\cov(\tau)}=\frac{\sum_{e\in \tau} d_e}{\sum_{i : S_i\cap \tau \ne \emptyset}w_i},$$
where  $d(\tau)$ is the total distance on tour $\tau$ and $\cov(\tau)$  is  its total coverage.  In this section, we prove Theorem~\ref{thm:grp-st-ratio}, which we restate for completeness.
\begin{theorem}\label{thm:grp-st-ratio-app}
There is a randomized $\bigOh(\log^2n)$-approximation algorithm for ratio group Steiner, where $n=|V|$ is the number of nodes. 
\end{theorem}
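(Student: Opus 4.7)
My plan is to first reduce the general metric case to trees via the FRT tree embedding (losing an $\bigOh(\log n)$ factor in expected distances) and then design an $\bigOh(\log n)$-approximation for ratio group Steiner on trees, yielding the claimed $\bigOh(\log^2 n)$ bound overall.

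For the tree case, root the tree at $r$ and write a Garg--Konjevod--Ravi style LP relaxation: introduce variables $x_e\in[0,1]$ for each edge $e$ and $y_i\in[0,1]$ for each group $i$, cut constraints $\sum_{e\in \mathrm{cut}(r,v)} x_e \ge y_i$ for every $v\in S_i$, a normalization $\sum_i w_i y_i = 1$, and objective $\min \sum_e d_e x_e$. Any integral tour of ratio $\mu$ yields (after rescaling its $y$-values) an LP solution of value $\mu$, so the LP optimum lower-bounds the optimal ratio on the tree. I would then apply GKR top-down randomized rounding: for each tree edge $e$ from parent $u$ to child, include $e$ with probability $x_e/x_{e'}$ where $e'$ is the edge into $u$. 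Two standard properties follow: (i) the expected cost of the produced tour equals the LP value, and (ii) each group $i$ is covered with probability at least $\Omega(y_i/h)$, where $h$ is the tree depth, which the FRT construction makes $\bigOh(\log n)$ after rescaling so the minimum interpoint distance is $1$.

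These properties together give $\E[d(\tau)]/\E[\cov(\tau)] = \bigOh(\log^2 n)\cdot \OPT$, so the main technical obstacle is converting this \emph{ratio of expectations} into a bound on the realized ratio $d(\tau)/\cov(\tau)$. I would handle this by applying Markov's inequality to $d(\tau)$ (so with probability $\ge 1/2$ the cost is at most $2\,\E[d(\tau)]$), combined with a reverse-Markov argument on $\cov(\tau)$ that uses the trivial upper bound $\cov(\tau)\le W_{\mathrm{tot}} := \sum_i w_i$. Since a naive reverse-Markov step degrades with $W_{\mathrm{tot}}$, the cleanest workaround is to guess the optimal covered weight $W^*$ in powers of two: for each guess, restrict to an LP solution normalized so $\sum_i w_i y_i \approx W^*$ and to groups with $y_i$ bounded below by $\Omega(1/k)$ (which still account for a constant fraction of $W^*$), run the LP+GKR pipeline on this restricted instance, and return the tour with the best realized ratio across guesses. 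Finally, boosting with $\bigOh(\log n)$ independent trials and a union bound converts the constant-probability guarantee into a high-probability one, completing the $\bigOh(\log^2 n)$-approximation claimed in Theorem~\ref{thm:grp-st-ratio-app}.
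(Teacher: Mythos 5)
Your overall architecture---FRT embedding to a tree of depth $H=\bigOh(\log n)$, a GKR-style LP whose optimum lower-bounds the optimal ratio, and GKR randomized rounding with expected cost equal to the LP value and per-group coverage probability at least $y_i/(H+1)$---is exactly the first half of the paper's proof (Theorem~\ref{thm:frt} and Lemmas~\ref{lem:gkr-1}, \ref{lem:gkr-2}), and you correctly identify the crux: these bounds control a ratio of expectations, not the realized ratio $d(\tau)/\mathsf{cov}(\tau)$. However, your proposed fix does not close that gap. Intersecting the Markov event ``$d(\tau)\le 2\,\E[d(\tau)]$'' (probability $\ge 1/2$) with a coverage lower bound requires the coverage event to have probability strictly above $1/2$, and GKR rounding provides nothing of the sort: for a single group with $y_1=1$ whose flow is spread over the leaves of a depth-$H$ binary tree with all conditional selection probabilities $1/2$, the rounded subtree is a critical branching process and $\Pr[\mathsf{cov}(\tau)>0]=\Theta(1/H)$, so the two events need not intersect, and restricting to groups with large $y_i$ cannot help since $y_1$ is already maximal. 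Moreover, the claim that groups with $y_i=\Omega(1/k)$ retain a constant fraction of $\sum_i w_iy_i$ holds only when the weights are comparable; in general the required threshold is $y_i\ge \sum_j w_jy_j/(2\sum_j w_j)$, which reintroduces the dependence on the total weight $W_{\mathrm{tot}}=\sum_j w_j$ that you were trying to avoid. The one correct randomized statement in this direction is reverse Markov applied to the bounded variable $\mu\,\mathsf{cov}(\tau)-d(\tau)$ with $\mu=\Theta(H)$ times the LP ratio, but it only yields success probability $\Omega\bigl(1/(H\,W_{\mathrm{tot}})\bigr)$ per trial, which is not polynomial for general weights and is not repaired by guessing $W^*$ in powers of two.

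This is exactly why the paper derandomizes. It replaces the (hard to compute, non-concentrated) coverage by the pessimistic estimator $P(x)=\sum_i w_i\sum_{v\in S_i}\bigl[\Pr[v\in T']-\frac{1}{2H+2}\sum_{u\in S_i}\Pr[u,v\in T']\bigr]$, a truncated inclusion--exclusion lower bound on the expected coverage; shows that $D$ and $P$ are affine in the coordinate being rounded, so that the ratio $D/P$ never increases (Equation~\eqref{eq:round-ratio}); and sandwiches $P$ between $\frac12\sum_i w_iy_i$ at the fractional start (Lemma~\ref{lem:profit_flow_bound}, which uses precisely your second-moment bound via Lemma~\ref{lem:gkr-2}) and $\frac{H+1}{2}\mathsf{cov}(x^*)$ at the integral end (Lemma~\ref{lem:profit_coverage_bound}). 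To salvage your randomized route you would need either this estimator as a conditional-expectation guide, or a direct lower bound on $\Pr[\mu\,\mathsf{cov}(\tau)>d(\tau)]$ that does not degrade with $W_{\mathrm{tot}}$; as written, the argument is incomplete.
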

We note that the same approximation ratio was obtained previously  in \cite{CharikarCGG98}. However, our algorithm is   simpler because it involves a simpler ``pessimistic estimator''. Specifically, the algorithm in \cite{CharikarCGG98} is a de-randomization of the original group Steiner  rounding algorithm/analysis from \cite{GargKR00}. Whereas, our algorithm is a  de-randomization of a simpler algorithm/analysis. 

For completeness, we present all the details.

As is standard in group Steiner algorithms, we make use of the following well-known tree embedding result. 
\begin{theorem}[\cite{FakcharoenpholRT03}]\label{thm:frt}
Given any metric $(V,d)$, there is a polynomial time algorithm that returns a random tree $G=(V, E)$ with  edge-lengths  $\ell: E \to \R_+$ such that for every pair $u,v\in V$, we have:
\begin{itemize}
\item $\ell_G(u,v)\ge d(u,v)$ with probability one, and 
\item the expectation $\E\left[\ell_G(u,v)\right] \le \bigOh(\log |V|)\cdot d(u,v)$.
\end{itemize} 
Above, $\ell_G(u,v)$ is the distance between $u$ and $v$ on the tree, obtained by adding the lengths of all edges on the $u-v$ path in $G$. Furthermore, the depth of tree $G$ is bounded by $\bigOh(\log |V|)$.
\end{theorem}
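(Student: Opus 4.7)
}
The plan has two layers: a probabilistic reduction to trees via FRT that costs an $\bigOh(\log n)$ factor, followed by an $\bigOh(\log n)$-approximation on trees via LP rounding that is derandomized by a pessimistic estimator. Multiplying the two losses yields the advertised $\bigOh(\log^2 n)$.

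For the reduction, I would invoke Theorem~\ref{thm:frt} to embed $(V,d)$ into a random tree $G=(V,E)$ of depth $\bigOh(\log n)$ with $\ell_G(u,v)\ge d(u,v)$ and $\E[\ell_G(u,v)]\le \bigOh(\log n)\cdot d(u,v)$. Any subtree of $G$ that contains $r$ can be converted into an $r$-tour in $(V,d)$ of at most twice the subtree length via an Eulerian doubling, while covering the same groups. Thus it suffices to design an $\bigOh(\log n)$-approximation for ratio group Steiner on the rooted tree $G$, and composing with FRT gives the final guarantee in expectation; a standard Markov argument then yields a deterministic bound with constant loss.

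On the tree, I would write the natural LP with variables $x_e\in[0,1]$ for edges and $y_i\in[0,1]$ for groups, subject to cut constraints $y_i\le \sum_{e\in C}x_e$ for every $r$-$S_i$ cut $C$ (on a tree these collapse to path-inclusion constraints). To convert ratio minimization into a sequence of standard LPs, I would enumerate $\bigOh(\log n)$ geometric guesses $W$ for the target coverage $\sum_i w_i y_i$ and, for each, minimize $\sum_e \ell_e x_e$ subject to $\sum_i w_i y_i \ge W$ and the cut constraints; the smallest ratio $\text{LP}(W)/W$ is within a constant of $\OPT$. Given the LP optimum $(x^*,y^*)$, I would scale $x^*$ by $\Theta(\log n)$ (capping at $1$) and perform the top-down randomized rounding of \cite{GargKR00}: root $G$ at $r$ and include each edge $e$ independently with conditional probability $x_e/x_{\pi(e)}$ given that its parent $\pi(e)$ was included, so unconditionally $\Pr[e\in T]=x_e$. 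The expected length of $T$ is $\bigOh(\log n)\cdot \sum_e \ell_e x^*_e$, and a first-moment calculation on the depth-$\bigOh(\log n)$ tree gives $\Pr[i\text{ covered}]=\Omega(y^*_i)$, so the expected total covered weight is $\Omega(W)$.

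The crux is derandomization via a pessimistic estimator that is simpler than the one in \cite{CharikarCGG98}. For each group $i$, I would define $\Phi_i$ to be a closed-form upper bound on the probability that $i$ remains uncovered conditional on the rounding decisions made so far, obtained purely from a product-form first-moment calculation on the tree (essentially a product of terms $(1-\text{conditional reach-probability})$ at nodes of $S_i$), bypassing the delicate second-moment recursion underlying \cite{GargKR00}. Using the conditional expected length as a pessimistic estimator $\Phi_{\text{cost}}$ for the cost, I would aggregate into $\Phi = \Phi_{\text{cost}}/\text{LP}_{\text{cost}} + \sum_i w_i \Phi_i / W$, which is at most a constant initially. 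Processing edges in BFS order, at each $e$ I would deterministically pick inclusion or exclusion so that $\Phi$ does not increase; the terminal tree then satisfies both the length and coverage bounds up to constants. The main obstacle will be exhibiting a $\Phi_i$ that is simultaneously (i) efficiently evaluable at each step, (ii) a genuine pessimistic estimator (expectation non-increasing under the conditional rounding), and (iii) sharp enough that its initial value is bounded by $1-\Omega(y^*_i)$; once such a simpler $\Phi_i$ is identified, the rest of the derandomization and the final $\bigOh(\log^2 n)$ ratio follow in a straightforward manner.
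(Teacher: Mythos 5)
Your proposal does not prove the statement in question. The statement is Theorem~\ref{thm:frt}, the probabilistic tree-embedding result of Fakcharoenphol, Rao and Talwar, which the paper itself does not prove but imports as an external result from \cite{FakcharoenpholRT03}. Your write-up instead sketches a proof of Theorem~\ref{thm:grp-st-ratio-app} (the $\bigOh(\log^2 n)$ ratio group Steiner bound), and, fatally for the present task, its very first step is to \emph{invoke} Theorem~\ref{thm:frt} as a black box. As a proof of the stated theorem this is circular: nothing in your text constructs the random tree, establishes the domination property $\ell_G(u,v)\ge d(u,v)$ with probability one, bounds the expected stretch by $\bigOh(\log|V|)$, or bounds the tree depth. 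A genuine proof would require the FRT machinery itself: a hierarchical decomposition of the metric driven by a uniformly random permutation of $V$ and random radii at geometrically decreasing scales $2^i$, producing a laminar family that forms a $2$-HST; domination because two points separated at scale $i$ end up in subtrees hanging below a level-$\Theta(2^i)$ node; the $\bigOh(\log|V|)$ stretch by summing, over scales, the probability that the pair $\{u,v\}$ is first separated at scale $i$ (the harmonic-sum argument over the permutation order) times the resulting tree distance $\bigOh(2^i)$; and the depth bound by first reducing to polynomial aspect ratio. None of this appears in, or follows from, your argument.

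As a secondary remark: even if one reads your text as a (mislabeled) attempt at Theorem~\ref{thm:grp-st-ratio-app}, it stops short precisely where the paper does its real work. You defer the key object to future work (``the main obstacle will be exhibiting a $\Phi_i$ \ldots''), whereas the paper's derandomization rests on an explicit estimator $P(x)$ containing the pairwise term $\sum_{u,v\in S_i} x_{\pi(u)}x_{\pi(v)}/x_{a(u,v)}$, the conditional second-moment bound of Lemma~\ref{lem:gkr-2}, and the exact convexity identity \eqref{eq:round-ratio} that lets edges be fixed one at a time without increasing the ratio. Moreover, your claim that a purely product-form first-moment calculation yields $\Pr[i\text{ covered}]=\Omega(y^*_i)$ is exactly the step that fails for group Steiner rounding on deep trees: the reach events of different nodes of $S_i$ are positively correlated, which is why both \cite{GargKR00} and the paper's estimator must carry a second-moment (inclusion--exclusion style) correction. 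So the proposal has a genuine gap on both counts: it addresses the wrong theorem, and the argument it does sketch is missing its central ingredient.
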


Henceforth, we assume that the metric is given by a tree  $G = (V, E)$ rooted  at $r$ with  edge-lengths  $d: E \to \R_+$. Note that tours in a tree metric are equivalent to subtrees (where each edge in the subtree is traversed twice). So, we focus on finding a subtree $T$ containing $r$ that minimizes $ratio(T)$. By duplicating nodes (and adding zero length edges), we can assume (without loss of generality) that the groups are disjoint, i.e., $S_i\cap S_j=\emptyset$ for all $i\ne j$. Moreover, we can assume (w.l.o.g.) that no pair of nodes in the same group share an ancestor-descendant relation: if this were the case, we can remove the descendant node from the group (as the ancestor node must be visited if the descendant is).

 We also let $H$ denote the depth of the tree, i.e., the maximum number of edges on any path from $r$ to a leaf. By Theorem~\ref{thm:frt}, we know that $H=\bigOh(\log n)$. 

We use  a linear program (LP)  relaxation to the ratio problem.  We first set up some notation. 
We  assign directions to  edges    so that every edge is directed away from the root $r$.  For any edge $e\in E$, let $\pi(e)$ denote the parent edge of $e$ (if any). We overload this notation and also use $\pi(v)$ for any node $v$ to denote its parent edge. 

\begin{align}
\min&\quad \sum_{e\in E} d_e \cdot x_e& & &\\
s.t. & \quad x_e  \leq  x_{\pi(e)} & \forall e\in E \label{LP:mon-tree} \\
& \quad f^i_{e}  = \sum_{e':\pi(e') = e} f_{e'}^i & \forall e \in E \setminus\{\pi(v): v\in S_i\}\quad \forall i\in [k] \label{LP:flow}\\
&\quad \sum_{v\in S_i} f_{\pi(v)}^i =  y_i& \forall i\in[k] \label{LP:coverage}\\
& \quad f_e^i  \leq  x_e & \forall e\in E \quad\forall i\in[k] \label{LP:capacity}\\
& \quad \sum_{i = 1}^k  w_i\cdot  y_i  \geq  1\label{LP:normalize}\\
&\quad  x, f, y  \geq  0. &   
\end{align}

In this LP, variables $x$ correspond to the edges in the solution subtree, variables $y$ indicate the groups that get covered, and for each group $i$, variables $f^i$ establish a flow from $r$ to $S_i$. Note that constraint \eqref{LP:flow} is the flow balance constraint, which ensures that each $f^i$ is a flow from $r$ to $S_i$. We first show that this LP is indeed a relaxation.
\begin{lemma}
The optimal value of the above LP is at most the optimal value of the ratio group Steiner problem.
\end{lemma}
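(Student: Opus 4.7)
My plan is to exhibit, for every feasible tour/subtree $T^*$ of the ratio group Steiner problem, a feasible LP solution whose objective value equals $Ratio(T^*) = d(T^*)/\cov(T^*)$. Applying this to the optimal $T^*$ then yields the lemma.

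First I would set up the candidate LP solution. Given $T^*$, scale uniformly by $1/\cov(T^*)$: set
$$
x_e \;=\; \begin{cases}1/\cov(T^*) & e\in T^*\\ 0 & \text{else}\end{cases},\qquad y_i\;=\;\begin{cases}1/\cov(T^*) & S_i\cap T^*\neq\emptyset\\ 0 & \text{else}\end{cases}.
$$
For each covered group $i$, fix an arbitrary node $v_i^*\in S_i\cap T^*$ and define $f^i_e = y_i$ for every edge $e$ on the unique $r$-to-$v_i^*$ path in $T^*$, and $f^i_e=0$ otherwise. (Recall the paper's simplifying assumptions that groups are disjoint and contain no ancestor-descendant pairs, so $v_i^*$ is well-defined and distinct from other terminals in $S_i$.) The objective value is
$$
\sum_{e\in E}d_e\,x_e \;=\; \frac{1}{\cov(T^*)}\sum_{e\in T^*}d_e \;=\; \frac{d(T^*)}{\cov(T^*)} \;=\; Ratio(T^*),
$$
which is what we need.

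Next I would verify the constraints, which is the bulk of the argument but largely mechanical. Constraint~\eqref{LP:mon-tree}: if $x_e>0$ then $e\in T^*$; since $T^*$ is a subtree containing $r$, its parent edge is also in $T^*$, so $x_{\pi(e)}=x_e$. Constraint~\eqref{LP:flow}: for any non-terminal edge $e$, the $r$-to-$v_i^*$ path either traverses $e$ together with exactly one child edge $e'$ of $e$ (and then $f^i_e=f^i_{e'}=y_i$, all other children carry zero), or avoids $e$ entirely (and then $f^i$ is zero on $e$ and all its children); either way flow balance holds. Constraint~\eqref{LP:coverage}: the terminal edge $\pi(v_i^*)$ carries flow $y_i$, while every other $v\in S_i$ lies off the path (by the no-ancestor-descendant assumption) so $f^i_{\pi(v)}=0$, and the sum is exactly $y_i$. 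Constraint~\eqref{LP:capacity}: on edges of the $r$-to-$v_i^*$ path both sides equal $y_i=1/\cov(T^*)$, and off the path $f^i_e=0\le x_e$. Finally, constraint~\eqref{LP:normalize}:
$$
\sum_{i=1}^k w_i y_i \;=\; \frac{1}{\cov(T^*)}\sum_{i:\,S_i\cap T^*\neq\emptyset} w_i \;=\; \frac{\cov(T^*)}{\cov(T^*)} \;=\; 1.
$$

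There is no serious obstacle; the only mild subtlety is choosing the right flow routing so that the capacity constraint~\eqref{LP:capacity} is tight rather than violated — routing along a single $r$-to-$v_i^*$ path (rather than, say, distributing flow across several terminals in $S_i\cap T^*$) automatically enforces $f^i_e\le y_i=x_e$ on edges of $T^*$. Specializing the construction to the optimal $T^*$ and noting that any tour $\tau^*$ contains an underlying subtree with $\cov(\tau^*)=\cov(T^*)$ and $d(T^*)\le d(\tau^*)$ finishes the proof.
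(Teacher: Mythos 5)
Your proposal is correct and matches the paper's proof essentially verbatim: both scale the indicator of the optimal subtree $T^*$ by $1/\cov(T^*)$, set $y_i=1/\cov(T^*)$ for covered groups, route each flow $f^i$ along the path to a single chosen terminal $v_i\in S_i\cap T^*$, and check that all constraints hold with objective value $Ratio(T^*)$. Your extra remarks on why single-path routing keeps the capacity constraint satisfied are a fine (if slightly more detailed) rendering of the same argument.
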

\begin{proof}
Consider the optimal (integral) solution to the ratio group Steiner problem, given by subtree $T^*$ containing $r$. Let $W=\cov(T^*)$ denote the total weight of groups covered by $T^*$. Note that $W\ge 1$ as each weight is at least $1$. Then, the optimal value is $\OPT=\frac{d(T^*)}{W}$. 

We define the following fractional solution corresponding to $T^*$.
$$x_e =\left\{ \begin{array}{ll}
1/W & \mbox{ if }e\in E(T^*)\\
0 & \mbox{ otherwise}
\end{array}
\right., \quad \forall e\in E.
$$
$$y_i =\left\{ \begin{array}{ll}
1/W & \mbox{ if group $i$ covered by } T^*\\
0 & \mbox{ otherwise}
\end{array}
\right., \quad \forall i\in [k].
$$
For each group $i$ that is covered by $T^*$, choose some vertex $v_i\in S_i\cap T^*$, and define:
$$f^i_e =\left\{ \begin{array}{ll}
1/W & \mbox{ if $e$ is on the $r-v_i$ path}\\
0 & \mbox{ otherwise}
\end{array}
\right., \quad \forall e\in E.
$$
For groups $i$ that are not covered by $T^*$, we set $f^i=0$.

Constraint \eqref{LP:mon-tree} is satisfied because $T^*$ is a subtree containing $r$: so if an  edge $e\in E(T^*)$ then its parent $\pi(e)\in E(T^*)$ as well. 
Constraint \eqref{LP:flow} is clearly satisfied for groups that are not covered by $T^*$. Constraint \eqref{LP:flow} is also satisfied for groups $i$ that are covered by $T^*$ because $f^i$ corresponds to a path from $r$ to $v_i\in S_i$. Constraints \eqref{LP:coverage} and \eqref{LP:capacity} are satisfied by definition of the variables. Finally, constraint \eqref{LP:normalize} is satisfied by definition of $W$. 
\end{proof}

We can assume (without loss of generality) that any LP solution $(x,y,f)$ satisfies:
$$f_v^i = x_{\pi(v)} \quad \forall v \in S_i \quad \forall i \in [k].$$
To see this, suppose that we have $f_v^i < x_{\pi(v)} $  for some $v\in S_i$ and group $i$. Then,  we can decrease $x_{\pi(v)}$ to $f^i_v$ without violating any other constraints.  This uses the fact that all groups are disjoint: so edge $\pi(v)$ is only used to cover group $i$. Note that this change only reduces the objective value. 

Given the $x$ variables, we can infer the $y$ and $f$ variables by solving a max-flow instance for each group $i$ (that sends the maximum flow from $r$ to $S_i$). So,  we often refer to the fractional solution $(x,y,f)$ as just $x$.

In this section, we will show:

\begin{theorem}\label{thm:tree-ratio}
There is a rounding algorithm for the ratio group Steiner   	problem on trees that finds a solution of ratio at most $H+1$ times the optimal LP value.

\end{theorem}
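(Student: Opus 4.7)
The plan is to round the LP~\eqref{LP:mon-tree}--\eqref{LP:normalize} using top-down randomized edge-selection in the style of Garg--Konjevod--Ravi (GKR), and then derandomize via the method of conditional expectations with a pessimistic estimator. Given an LP solution $(x,y,f)$ of value $\mathrm{LP}^{*}$, traverse the tree from $r$ and include each non-root edge $e$ in the subtree $T$ independently with conditional probability $x_e/x_{\pi(e)}$ given $\pi(e)\in T$. This yields marginals $\Pr[e\in T]=x_e$, and hence $\E[d(T)] = \sum_e d_e x_e = \mathrm{LP}^{*}$ by linearity of expectation.

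The crux is a covering lemma: for every group $i$, $\Pr[\text{group }i\text{ covered}] \ge y_i/(H+1)$. I would prove this by decomposing the flow $f^i$ guaranteed by~\eqref{LP:flow}--\eqref{LP:coverage} into root-to-$S_i$ paths of total value $y_i$; since every such path has length at most $H$, a first-moment/inclusion-exclusion argument tied to the depth yields the $1/(H+1)$ factor. Combined with~\eqref{LP:normalize} this gives $\E[\cov(T)] = \sum_i w_i \Pr[\text{cov}_i] \ge 1/(H+1)$, and hence $\E[d(T) - (H+1)\,\mathrm{LP}^{*}\cdot \cov(T)] \le 0$, so some outcome achieves $d(T)/\cov(T) \le (H+1)\,\mathrm{LP}^{*}$.

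To produce such a tree deterministically, I would derandomize the top-down rounding by the method of conditional expectations on the potential $\Phi := \sum_e d_e \Pr[e\in T \mid \text{hist}] - (H+1)\,\mathrm{LP}^{*} \sum_i w_i\, \tilde P_i$. The cost term has a clean closed form along the top-down order. For the coverage term, let the pessimistic estimator $\tilde P_i$ be $1/(H+1)$ times the max-flow value from $r$ to $S_i$ in the residual tree in which decided-in edges are contracted, decided-out edges are removed, and the remaining capacities are inherited from $x$. By the covering lemma applied to the residual instance, $\tilde P_i$ is always a valid lower bound on the conditional coverage probability of group $i$; it equals $y_i/(H+1)$ initially. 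Picking each rounding decision so as not to increase $\Phi$ keeps $\Phi\le 0$ throughout, so the final integral outcome satisfies $d(T) - (H+1)\,\mathrm{LP}^{*}\cdot \cov(T)\le 0$.

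The main obstacle is establishing the covering lemma with the clean $1/(H+1)$ factor, and verifying that the max-flow-based estimator is indeed monotone (in expectation) under one step of the GKR conditional rounding; I expect the latter to reduce to an inductive argument on tree depth, where the residual instance after one edge decision is itself a smaller ratio group Steiner LP whose covering bound is governed by the residual depth. The simplicity relative to~\cite{CharikarCGG98}, which derandomizes the second-moment rounding of~\cite{GargKR00}, stems from working with a first-moment covering bound, for which a straightforward max-flow pessimistic estimator suffices.
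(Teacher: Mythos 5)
Your skeleton---GKR top-down rounding plus derandomization via a pessimistic estimator, targeting an $H+1$ loss---matches the paper's, and your covering lemma $\Pr[\text{group } i \text{ covered}]\ge y_i/(H+1)$ is in fact true. But note that even this lemma is inherently a \emph{second}-moment statement: its proof needs $\E\bigl[|T\cap S_i| \mid v\in T\bigr]\le H+1$ (the paper's Lemma~\ref{lem:gkr-2}) combined with a truncated inclusion--exclusion (Paley--Zygmund-type) bound, not the first-moment flow-decomposition argument you sketch; a flow decomposition only gives the union bound in the wrong direction.

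The genuine gap is in the derandomization. Your estimator $\tilde P_i = \frac{1}{H+1}\cdot(\text{residual max-flow from } r \text{ to } S_i)$ must satisfy two properties at once: its conditional expectation must not decrease under each random edge decision, and at the end it must satisfy $\tilde P_i\le \mathbf{1}[\text{group } i \text{ covered}]$, so that $\Phi\le 0$ translates into $d(T)\le (H+1)\,\mathrm{LP}^{*}\cdot\cov(T)$. The second property fails: for an integral outcome in which $T$ contains many nodes of $S_i$ (e.g.\ $r$ with $N$ children all lying in $S_i$, so $H=1$), the max-flow from $r$ to $S_i$ equals $N\gg H+1$, hence $\tilde P_i\gg 1$ and the group is credited far more than $w_i$. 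Capping $\tilde P_i$ at $1$ repairs the final accounting but destroys the averaging property needed for the conditional-expectation step (the minimum with $1$ is concave and Jensen goes the wrong way). This is precisely why the paper's estimator $P(x)$ carries the subtracted pairwise term $\frac{1}{2H+2}\sum_{u\in S_i}\Pr[u,v\in \gkr(x)]$: that term makes $P$ exactly multilinear in each edge decision (so the ratio $D/P$ never increases during rounding, Equation~\eqref{eq:round-ratio}) \emph{and} caps each covered group's contribution in the integral solution at $w_i\bigl(r-\frac{r^2}{2H+2}\bigr)\le \frac{H+1}{2}w_i$ (Lemma~\ref{lem:profit_coverage_bound}), which together with $P(\hat x)\ge\frac12\sum_i w_i y_i$ (Lemma~\ref{lem:profit_flow_bound}) yields the $H+1$ factor. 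Your stated source of simplification over \cite{CharikarCGG98}---that a first-moment bound suffices---is exactly the step that does not go through; some second-moment correction in the estimator appears unavoidable.
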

Combined with the tree embedding result (Theorem~\ref{thm:frt}), this implies Theorem~\ref{thm:grp-st-ratio-app}.

\subsubsection{Basic Randomized Rounding}
We first present a basic random rounding procedure for group Steiner tree, due to \cite{GargKR00}. We refer to this procedure as \gkr rounding.

\begin{algorithm}[t]
\caption{Group Steiner rounding algorithm $\gkr(x)$} \label{alg:GKR}
\begin{algorithmic}[1]
  \State For each edge $e\in E$, select it into $E'$ independently with probability $\frac{x_e}{x_{\pi(e)}}$.
  \State Let $T'\sse E'$ denote the component (subtree) containing the root $r$.
  \State   Return subtree $T'$.
\end{algorithmic}
\end{algorithm}

We now state some properties of this rounding algorithm.

\begin{lemma}\label{lem:gkr-1}
For any fractional solution $x$, the expected cost of the subtree $T'=\gkr(x)$ is $\sum_{e\in E} d_e x_e$. Moreover, for any group $i\in [k]$ and node $v\in S_i$, we have $\Pr[v \in T'] = x_{\pi(v)} = f_v^i$.
\end{lemma}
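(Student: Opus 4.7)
The plan is to exploit the key structural feature of the algorithm: an edge $e$ is retained in $T'$ if and only if every ancestor edge (including $e$ itself) is selected into $E'$ during the independent rounding. This is because $T'$ is defined as the component of $E'$ containing the root $r$, so $e$ survives iff the entire path from $r$ down to $e$ survives.

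First I would establish the per-edge inclusion probability. Fix an edge $e \in E$ and let $e = e_1, e_2, \ldots, e_k$ be the sequence of edges on the $r$-to-$e$ path, with $e_i = \pi(e_{i-1})$ and $e_k$ incident to $r$ (so we adopt the convention $x_{\pi(e_k)} = 1$, which is consistent with simply selecting each root-adjacent edge $e_k$ with probability $x_{e_k}$). By independence of the rounding across edges,
\begin{equation*}
\Pr[e \in T'] \;=\; \prod_{i=1}^{k} \Pr[e_i \in E'] \;=\; \prod_{i=1}^{k} \frac{x_{e_i}}{x_{\pi(e_i)}} \;=\; x_e,
\end{equation*}
where the last equality is a telescoping product (using the root convention). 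By linearity of expectation,
\begin{equation*}
\E\Bigl[\sum_{e \in T'} d_e\Bigr] \;=\; \sum_{e \in E} d_e \cdot \Pr[e \in T'] \;=\; \sum_{e \in E} d_e \, x_e,
\end{equation*}
which proves the first claim. Note that constraint \eqref{LP:mon-tree} ($x_e \le x_{\pi(e)}$) is exactly what is needed so that every rounding probability $x_e / x_{\pi(e)}$ lies in $[0,1]$, so the algorithm is well-defined.

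For the second claim, fix $i \in [k]$ and $v \in S_i$. A node $v$ lies in the component $T'$ of $E'$ containing $r$ if and only if the edge $\pi(v)$ connecting $v$ to its parent is in $T'$. Applying the first claim to $e = \pi(v)$ yields $\Pr[v \in T'] = x_{\pi(v)}$. Finally, the identity $x_{\pi(v)} = f_v^i$ is the LP simplification assumption noted just before the lemma (since groups are disjoint, if $f_v^i < x_{\pi(v)}$ we could decrease $x_{\pi(v)}$ down to $f_v^i$ without affecting feasibility or objective).

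The only subtle point, and the one I would emphasize in writing, is the telescoping step and the root-boundary convention — everything else is linearity of expectation and the definition of $T'$ as the $r$-component. There is no real obstacle here; the lemma is essentially a restatement of the classical analysis from \cite{GargKR00}, and the argument is self-contained once one notices that the conditional selection probabilities have been chosen precisely so that ancestor-chain survival probabilities telescope to the unconditional marginals $x_e$.
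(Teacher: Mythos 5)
Your proof is correct and follows the same route the paper intends: the telescoping ancestor-chain product giving $\Pr[e\in T']=x_e$, linearity of expectation for the cost, and the WLOG normalization $f_v^i=x_{\pi(v)}$ for the second claim. The paper simply asserts "$\Pr[e\in T']=x_e$ is easy to see," so your write-up is a faithful (and more detailed) version of the same argument.
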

\begin{proof}
It is easy to see that $\Pr[e\in T'] = x_e$ for each edge $e\in E$. This implies the first statement directly. For the second statement, we also use the fact that $x_{\pi(v)} = f_v^i$ for any $v\in S_i$. 
\end{proof}

\begin{lemma}For any group $i\in [k]$ let $T_i = T'\cap S_i$ be the nodes in $S_i$ that are covered by the subtree $T'=\gkr(x)$. Then, 
    $\mathbb{E}\left[|T_i| \,\middle| \, v\in T_i \right]\leq H+1$ for any $v\in S_i$ and $i\in [k]$.
    \label{lem:gkr-2}
\end{lemma}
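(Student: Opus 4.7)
The plan is to compute the conditional expectation by linearity,
\[
\E[|T_i| \mid v \in T'] \;=\; \sum_{u \in S_i} \Pr[u \in T' \mid v \in T'],
\]
and bound each summand by combining the edge-independence built into \gkr with the LP flow constraints. Since the edges are included in $E'$ independently with probability $x_e/x_{\pi(e)}$, telescoping along the $r$-to-$u$ path gives $\Pr[u \in T'] = x_{\pi(u)}$ (adopting the convention $x_{\pi(r)} := 1$). Moreover, for any pair $u, v$ with $w = \mathrm{lca}(u,v)$, the three path segments $r \to w$, $w \to u$, $w \to v$ use disjoint sets of edges, so independence yields $\Pr[u \in T',\, v \in T'] = x_{\pi(u)}\,x_{\pi(v)}/x_{\pi(w)}$ and hence
\[
\Pr[u \in T' \mid v \in T'] \;=\; \frac{x_{\pi(u)}}{x_{\pi(\mathrm{lca}(u,v))}}.
\]

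Next I would partition $S_i$ by the LCA with $v$. Writing the root-to-$v$ path as $r = z_0, z_1, \ldots, z_h = v$ with $h \leq H$, define $A_j = \{u \in S_i : \mathrm{lca}(u,v) = z_j\}$ for $j = 0, 1, \ldots, h$. The WLOG no-ancestor-descendant assumption within a single group forces $A_h = \{v\}$, whose contribution to the sum is exactly $1$. For $1 \leq j < h$, the nodes of $A_j$ lie in $\mathrm{Subtree}(z_j)\setminus \mathrm{Subtree}(z_{j+1})$, so using the simplification $f^i_{\pi(u)} = x_{\pi(u)}$ for $u \in S_i$ together with flow conservation on the tree we get
\[
\sum_{u \in A_j} x_{\pi(u)} \;=\; f^i_{\pi(z_j)} - f^i_{\pi(z_{j+1})} \;\leq\; f^i_{\pi(z_j)} \;\leq\; x_{\pi(z_j)}
\]
by the capacity constraint \eqref{LP:capacity}, so that $\sum_{u \in A_j} x_{\pi(u)}/x_{\pi(z_j)} \leq 1$. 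The same flow computation at the top gives $\sum_{u \in A_0} x_{\pi(u)} \leq y_i$, which we can assume is at most $1 = x_{\pi(r)}$ by adding the (WLOG) LP constraint $y_i \leq 1$, since any integral solution already satisfies it. Summing the $h+1 \leq H+1$ level contributions proves the claimed bound.

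The key step is the conditional-probability identity $\Pr[u \in T' \mid v \in T'] = x_{\pi(u)}/x_{\pi(\mathrm{lca}(u,v))}$; once in hand, the argument is a clean level-by-level telescoping of LP flow values along the root-to-$v$ path. The main subtlety I anticipate is the root-level term $j = 0$, which requires the LP bound $y_i \leq 1$ to prevent pathological fractional solutions that put multiple units of flow into a single group and thereby inflate the conditional expectation; this is easily patched by augmenting the LP with $y_i \leq 1$.
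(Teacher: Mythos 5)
Your proof is correct and follows essentially the same route as the paper's: decompose $S_i\setminus\{v\}$ by the level of the least common ancestor with $v$ along the root-to-$v$ path, compute $\Pr[u\in T'\mid v\in T']=x_{\pi(u)}/x_{a(u,v)}$ from the independence of \gkr, and charge each level's total to at most $1$ via flow conservation and the capacity constraint \eqref{LP:capacity}. Your treatment of the root level ($j=0$), where you note the need for $y_i\le 1$ and justify it as a harmless additional LP constraint, is in fact slightly more careful than the paper's convention $x_{e_0}=1$.
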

\begin{proof}
Fix any group $i$ and node $v\in S_i$. Let $\langle r=u_0, u_1, \cdots u_t=v\rangle$ denote the $r-v$ path in $G$. Note that $t\le H$. For each $j=0,1,\cdots t-1$ let $L_j$ denote the subtree rooted at $u_j$ containing all nodes that are {\em not} descendants of   
$u_{j+1}$. Note that $T_i\setminus \{v\} = \cup_{j=0}^{t-1} \left(L_j \cap T_i\right)$ because no $S_i$-node can be a descendant of $v$. 

Now, consider any $j$ and let edge $e_j=(u_{j-1},u_j)$. (If $j=0$ then $e_j$ is undefined and we set $x_{e_j}=1$ below.) We now bound
$$\E\left[ |L_j \cap T_i| \,\mid\, v\in T_i \right] = \sum_{w\in L_j} \Pr\left[ w\in T_i \,|\, v\in T_i\right] = \sum_{w\in L_j} \frac{f^i_w }{x_{e_j}} \le \frac{f^i_{e_j}}{x_{e_j}}\le 1.$$
The second equality above uses the fact that conditioned on $v\in T_i$ (i.e., all edges on the $r-v$ path are selected in $\gkr(x)$), for $w$ to be in $T_i$ we just need all the edges on the $u_j-w$ path to be selected: the joint probability of this event is $\frac{f^i_w }{x_{e_j}}$. The first inequality above uses the flow-conservation constraints \eqref{LP:flow} for $f^i$, which implies that all the $\{f^i_w : w\in L_j\}$ flow passes through edge $e_j$. The last inequality is by constraint \eqref{LP:capacity}.

Adding over all $j=0,1,\cdots t-1$ and including $v$, we get 
$$\E \left[|T_i| \,\middle| \, v\in T_i \right] = 1+ E \left[|T_i\setminus v| \,\middle| \, v\in T_i \right] =1+\sum_{j=0}^{t-1}  \E\left[ |L_j \cap T_i| \,\mid\, v\in T_i \right]  \leq H+1. $$
This completes the proof.
\end{proof}

\subsubsection{Deterministic Rounding}
For any fractional solution $x$, we define estimates of its distance and coverage as follows:
$$D({x}) = \sum_{e\in E}  d_e x_e, \qquad \mbox{and}$$
\begin{align*}
P (x)  & = \sum_{i=1}^k w_i\cdot \sum_{v\in S_i} \left[\Pr\left[v\in \gkr(x)\right] - \frac{1}{2H+2}\sum_{u\in S_i}\Pr \left[v,u \in \gkr(x)\right]\right] \\
& = 
\sum_{i=1}^k w_i\cdot \sum_{v\in S_i} \left[ x_{\pi(v)}  - \frac{1}{2H+2}\cdot \sum_{u\in S_i} \frac{x_{\pi(v)} \cdot x_{\pi(u)}}{x_{a(u,v)}} \right] .\end{align*}
Above, for any pair of nodes $u,v$, we use $a(u,v)\in E$ to denote their least common ancestor  edge. If there is no such edge (i.e., the $r-u$ and $r-v$ paths have no common edge) then we set $x_{a(u,v)}=1$. $D(x)$ is clearly the expected length of solution $\gkr(x)$. The definition of $P(x)$ is based on  a pessimistic estimator of the coverage of   $\gkr(x)$.

The deterministic rounding algorithm starts off with an optimal LP solution and modifies it gradually. It considers edges in  increasing order of depth. When edge $e\in E$ is considered, the algorithm modifies the current fractional solution $x$ so that $x_e=0$ or $x_e=1$ (other variables in the subtree below $e$ will also change). At the end of the algorithm, all edges will have integral values, which means we get an integral solution. 

When processing edge $e$, let $x$ be the current fractional solution. Let $G_e$ denote the subtree below (and including) $e$. We define two new fractional solutions as follows:
$$x'_g = \left\{\begin{array}{ll}
0 & \mbox{ if }g\in G_e\\
x_g& \mbox{ otherwise}
\end{array}\right., \qquad \forall g\in E$$
$$x''_g = \left\{\begin{array}{ll}
\frac{x_g}{x_e} & \mbox{ if }g\in G_e\\
x_g& \mbox{ otherwise}
\end{array}\right., \qquad \forall g\in E$$
Note that $x'_e=0$ and $x''_e=1$. We update the fractional solution to either $x'$ or $x''$ depending on which has the smaller ratio. That is, $x^{new}\gets  \arg\min \left\{\frac{D(x')}{P(x')},\frac{D(x'')}{P(x'')} \right\}$.

A crucial property of this rounding algorithm is that the ratio of the current fractional solution never increases, i.e.,
\begin{equation}\label{eq:round-ratio}
\min \left\{\frac{D(x')}{P(x')},\frac{D(x'')}{P(x'')} \right\}\leq\frac{D( x )}{P( x )}.
\end{equation}

We first complete the proof of Theorem~\ref{thm:tree-ratio} assuming this property. Let $\hat{x}$ be the optimal LP solution (which is our starting solution) and $x^*$ be the rounded integral solution (which is the final solution).  

\begin{lemma}
    We have  $P(x^*) \leq \frac{H+1}{2}\cdot \cov(x^*)$. 
    \label{lem:profit_coverage_bound}
\end{lemma}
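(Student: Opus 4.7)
The plan is to exploit integrality of $x^*$ and reduce the inequality to a one-variable quadratic estimate. Since $x^*$ is produced by the deterministic rounding procedure, every $x^*_e \in \{0,1\}$, and by the monotonicity constraint \eqref{LP:mon-tree} the set of edges with $x^*_e = 1$ forms a subtree $T^*$ rooted at $r$. Thus $x^*_{\pi(v)} = \mathbf{1}[v \in T^*]$ for every node $v$. Moreover, for any pair $u,v \in T^* \cap S_i$, their least-common-ancestor edge $a(u,v)$, if defined, necessarily lies in $T^*$ and hence $x^*_{a(u,v)} = 1$; when $a(u,v)$ is undefined, the convention from the problem setup also gives $x^*_{a(u,v)} = 1$. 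Therefore the quantity $\frac{x^*_{\pi(v)}\cdot x^*_{\pi(u)}}{x^*_{a(u,v)}}$ equals $1$ whenever both $u,v \in T^*$ (including the case $u=v$) and equals $0$ otherwise.

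Second, I would substitute these values into the definition of $P(x^*)$. Writing $n_i := |T^* \cap S_i|$, the outer sum over $v \in S_i$ receives a nonzero contribution only from $v \in T^* \cap S_i$, and the inner sum over $u \in S_i$ contributes $n_i$. Hence
\[
P(x^*) \;=\; \sum_{i=1}^k w_i \cdot n_i \Bigl(1 - \frac{n_i}{2H+2}\Bigr).
\]

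Third, I would invoke the elementary inequality $t\bigl(1 - \frac{t}{2H+2}\bigr) \le \frac{H+1}{2}$ for every real $t$; this follows by maximizing the concave quadratic on the right side, the maximum being attained at $t = H+1$. Applying this term-by-term with $t = n_i$, and noting that $n_i\bigl(1 - \frac{n_i}{2H+2}\bigr) = 0$ whenever $n_i = 0$ so that only indices with $T^* \cap S_i \neq \emptyset$ contribute, gives
\[
P(x^*) \;\le\; \frac{H+1}{2} \sum_{i:\, T^*\cap S_i \neq \emptyset} w_i \;=\; \frac{H+1}{2}\cdot \cov(x^*),
\]
as desired. The only subtlety worth double-checking is the case bookkeeping for $a(u,v)$ (the undefined case, and the $u = v$ case where it reduces to $\pi(v)$); once that is verified, the closed-form expression for $P(x^*)$ and the one-variable maximization finish the proof with no further work.
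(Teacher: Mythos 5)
Your proof is correct and follows essentially the same route as the paper's: both reduce to the per-group contribution $w_i\, n_i\bigl(1-\tfrac{n_i}{2H+2}\bigr)$ using integrality of $x^*$ (with the observation that $x^*_{a(u,v)}=1$ whenever both endpoints are covered), and then bound the quadratic by its maximum value $\tfrac{H+1}{2}w_i$ attained at $n_i=H+1$. Your write-up is slightly more careful about the $a(u,v)$ bookkeeping, but there is no substantive difference.
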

\begin{proof}
Recall that $x^*$ is an integral solution and $\cov(x^*)$ is the total weight of groups covered by it. 
  Fix any group $i\in [k]$, and suppose that $r$ nodes of $S_i$ are connected to $r$ in solution $x^*$. 
  
  If $r =0$ then the contribution of group $i$ to  both $P(x^*)$ and  $\cov(x^*)$ is zero. 

Now, suppose that $r\geq 1$. Then, group $i$ contributes $w_i$ to $\cov(x^*)$. Group $i$'s contribution to $P(x^*)$ is
$$ w_i\cdot \sum_{v\in S_i} \left[ x^*_{\pi(v)}  - \frac{1}{2H+2}\cdot \sum_{u\in S_i} \frac{x^*_{\pi(v)} \cdot x^*_{\pi(u)}}{x^*_{a(u,v)}} \right] = w_i\left(r-\frac{r^2}{2H+2}\right),$$
where we used the fact that $x^*$ is integral and there are $r^2$ pairs of nodes $u,v\in S_i$ with $x^*_{\pi(u)}>0$ and $x^*_{\pi(v)}>0$.   It is easy to see that the maximum (over $r$) is $w_i\cdot \frac{H+1}2$ obtained when $r=H+1$. 

Thus, we get that the contribution of any group $i$ to $P(x^*)$ is at most $\frac{H+1}2$ times its contribution to $\cov(x^*)$. 
    Summing over all groups yields the lemma.
\end{proof}

\begin{lemma} If $(\hat{x},y,f)$ is the optimal LP solution then $P(\hat{x}) \geq \frac{1}{2}\sum_{i=1}^k w_i \cdot y_i$.
 \label{lem:profit_flow_bound}
\end{lemma}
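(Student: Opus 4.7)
The plan is to analyze each group's contribution to $P(\hat{x})$ separately, using the fact that the quadratic cross term in the definition of $P$ has a clean probabilistic interpretation in terms of $\gkr(\hat{x})$ rounding probabilities, which is precisely what Lemma~\ref{lem:gkr-2} is designed to control.

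First, I will handle the linear term. Using the (WLOG) assumption $f^i_{\pi(v)} = \hat{x}_{\pi(v)}$ for all $v \in S_i$, together with the flow-coverage constraint \eqref{LP:coverage}, we immediately get $\sum_{v \in S_i} \hat{x}_{\pi(v)} = y_i$ for every group $i$. So the contribution of group $i$ to the first (positive) term of $P(\hat{x})$ equals $w_i \cdot y_i$, and overall the positive term equals $\sum_i w_i y_i$.

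Next, I will bound the quadratic (negative) term. The key observation is that for any $u,v \in V$,
\[
\frac{\hat{x}_{\pi(v)}\,\hat{x}_{\pi(u)}}{\hat{x}_{a(u,v)}} \;=\; \Pr\bigl[u,v \in \gkr(\hat{x})\bigr],
\]
because in the \gkr procedure the edges on the path from $r$ down to $a(u,v)$ must all be selected (probability $\hat{x}_{a(u,v)}$), and then the two branches to $u$ and to $v$ are independent, telescoping to $\hat{x}_{\pi(u)}/\hat{x}_{a(u,v)}$ and $\hat{x}_{\pi(v)}/\hat{x}_{a(u,v)}$ respectively. Fixing a group $i$ and a node $v \in S_i$, summing over $u \in S_i$ gives
\[
\sum_{u \in S_i} \frac{\hat{x}_{\pi(v)}\,\hat{x}_{\pi(u)}}{\hat{x}_{a(u,v)}}
\;=\; \sum_{u \in S_i} \Pr[u,v \in T']
\;=\; \Pr[v \in T'] \cdot \E\bigl[|T_i| \,\big|\, v \in T_i\bigr]
\;\le\; (H+1)\,\hat{x}_{\pi(v)},
\]
where the final inequality is Lemma~\ref{lem:gkr-2} combined with $\Pr[v\in T']=\hat{x}_{\pi(v)}$ from Lemma~\ref{lem:gkr-1}. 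Summing this bound over $v \in S_i$ and using $\sum_{v \in S_i} \hat{x}_{\pi(v)} = y_i$ yields
\[
\sum_{v,u \in S_i} \frac{\hat{x}_{\pi(v)}\,\hat{x}_{\pi(u)}}{\hat{x}_{a(u,v)}} \;\le\; (H+1)\,y_i,
\]
so the total negative contribution of group $i$ to $P(\hat{x})$ is at most $w_i \cdot \frac{1}{2H+2} \cdot (H+1)\,y_i = \frac{w_i y_i}{2}$.

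Combining the two, each group $i$ contributes at least $w_i y_i - \frac{w_i y_i}{2} = \frac{w_i y_i}{2}$ to $P(\hat{x})$, and summing over $i$ yields $P(\hat{x}) \ge \frac{1}{2}\sum_{i=1}^k w_i y_i$ as desired. The only genuinely non-routine step is recognizing the quadratic cross term as the pairwise covering probability under \gkr rounding, but once this is identified, Lemma~\ref{lem:gkr-2} supplies exactly the $H+1$ factor needed to cancel against the $\frac{1}{2H+2}$ coefficient.
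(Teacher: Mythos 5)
Your proposal is correct and follows essentially the same route as the paper: both identify the cross term $\hat{x}_{\pi(v)}\hat{x}_{\pi(u)}/\hat{x}_{a(u,v)}$ with $\Pr[u,v\in\gkr(\hat{x})]$, rewrite $\sum_{u\in S_i}\Pr[u,v\in T']$ as $\Pr[v\in T']\cdot\E[|T_i|\mid v\in T_i]$, and invoke Lemmas~\ref{lem:gkr-1} and \ref{lem:gkr-2} to cancel the $(H+1)$ against the $\frac{1}{2H+2}$ coefficient. The only difference is bookkeeping: the paper bounds the per-node quantity $P_{i,v}\ge\frac12 f^i_{\pi(v)}$ and then sums, whereas you sum the linear and quadratic parts separately per group, which is an equivalent reorganization.
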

\begin{proof}
Let $T'=\gkr(\hat{x})$.   
Fix any group $i$ and node $v\in S_i$. Let 
\begin{align*}
    P_{i,v} & :=  \Pr\left[v\in T'\right] - \frac{1}{2H+2} \sum_{u\in S_i} \Pr\left[u\in T'\mid v\in T'\right]\Pr\left[v\in T'\right] \\
    &= \Pr\left[v \in T'\right] - \frac{\Pr[v\in S']}{2H+2}\cdot \mathbb{E}\left[|T'\cap S_i|\mid v\in T' \right]\\
    &\geq \Pr[v\in T'] - \frac{1}{2}\Pr\left[v\in T'\right]  \quad = \quad \frac{1}{2}\Pr\left[v\in T'\right] \quad = \quad \frac12\cdot f^i_{\pi(v)} ,
\end{align*}
where the inequality uses Lemma~\ref{lem:gkr-2} and the last equality uses Lemma~\ref{lem:gkr-1}.

Adding over all $v\in S_i$, we have $\sum_{v\in S_i} P_{iv} \ge \frac12 \sum_{v\in S_i} f^i_{\pi(v)} = \frac{y_i}{2}$.

Now, adding over groups $i$,  we get     
$$P(\hat{x}) = \sum_{i=1}^k w_i \sum_{v\in S_i} P_{iv} \geq \frac{1}{2}\sum_{i=1}^k w_i \cdot y_i,$$
which completes the proof.
\end{proof}

\paragraph{Completing proof of Theorem~\ref{thm:tree-ratio}.} We now show that our solution $x^*$ has ratio at most $H$ times that of the LP optimum. We have
 $$\text{Ratio}(x^*)= \frac{D(x^*)}{\cov(x^*)} \overset{(a)}{\leq} \frac{H+1}2\cdot  \frac{D(x^*)}{P(x^*)} \overset{(b)}{\leq} \frac{H+1}2\cdot  \frac{D(\hat{x})}{P(\hat{x})} \overset{(c)}\leq (H+1) \frac{\sum_{e\in E} d_e\cdot \hat{x}_e}{\sum_{i=1}^k w_i\cdot y_i} = (H+1) LP_{ratio},$$
where inequality (a) follows from Lemma \ref{lem:profit_coverage_bound}, inequality $(b)$ follows from \eqref{eq:round-ratio}, and (c) follows from Lemma \ref{lem:profit_flow_bound}.

\paragraph{Proof of Equation \eqref{eq:round-ratio}.} Recall that $x$ is the current fractional solution (just before processing edge $e$) and $x'$ and $x''$ are the two new solutions (with  $x'_e=0$ and $x''_e=1$). By definition of $x'$ and $x''$, it is clear that $D(x) = x_e\cdot D(x'') + (1-x_e)\cdot D(x')$. 

Next, we will show that $P(x) = x_e\cdot P(x'') + (1-x_e)\cdot P(x')$. To this end, let 
$$P_1(x)=
\sum_{i=1}^k w_i\cdot \sum_{v\in S_i} x_{\pi(v)}, \qquad \mbox{and}\qquad  P_2(x)=
\sum_{i=1}^k w_i\cdot  \sum_{u\in S_i} \frac{x_{\pi(v)} \cdot x_{\pi(u)}}{x_{a(u,v)}}  $$
correspond to the two parts of $P(x)$. So, $P(x) = P_1(x)-\frac{1}{2H}\cdot P_2(x)$. 

Again, it is easy to see that $P_1(x) = x_e\cdot P_1(x'') + (1-x_e)\cdot P_1(x')$ because $P_1()$ is a linear function. 
Below, we show that $P_2(x) = x_e\cdot P_2(x'') + (1-x_e)\cdot P_2(x')$, which would imply the corresponding equation for $P$. To see this, we fix group $i$ and nodes $v,u\in S_i$. The following table shows the contribution of $\langle i, v, u\rangle$ to $P_2(\cdot)$ in different cases:

\begin{table}[H]
    \centering
    \begin{tabular}{l|ccc}
     & in $P_2(x')$ & in $P_2(x'') $ & in  $(1-x_e) P_2(x') + x_e P_2(x'')$ \\\hline
     $u\in G_e, v\in G_e$& $0$ & $ \frac{x_{\pi(u)}}{x_{e}}\frac{x_{\pi(v)}/x_e}{x_a/x_e}$ & $ x_{\pi(u)}\frac{x_{\pi(u)}}{x_a}$ \\
     $u\in G_e, v\notin G_e$& $0$ & $ \frac{x_{\pi(u)}}{x_e}\frac{x_{\pi(v)}}{x_a}$ & $ x_{\pi(u)}\frac{x_{\pi(u)}}{x_a}$\\
     $u \notin G_e, v\in G_e$& $0$ & $ x_{\pi(u)}\frac{x_{\pi(v)}/x_e}{x_a}$ & $ x_{\pi(u)}\frac{x_{\pi(u)}}{x_a}$\\
     $u \notin G_e, v\notin G_e$& $x_{\pi(u)}\frac{x_{\pi(v)}}{x_a}$ &  $x_{\pi(u)}\frac{x_{\pi(v)}}{x_a}$ & $ x_{\pi(u)}\frac{x_{\pi(v)}}{x_a}$
    \end{tabular}
\end{table}
Above, $a=a(u,v)$ is the least common ancestor edge of $u$ and $v$. Note that in the first case above, $a\in G_e$ because both $u,v\in G_e$. In all other cases, $a\not\in G_e$.  It can be observed that, in all cases, the contribution of $\langle i, v, u\rangle$ to   $(1-x_e)P_2(x') + x_e P_2(x'')$ equals $\frac{x_{\pi(u)}\cdot x_{\pi(v)}}{x_a}$ which is its contribution to $P_2( x )$. Adding up the contributions over 
 $\langle i, v, u\rangle$ we get $P_2(x) = x_e\cdot P_2(x'') + (1-x_e)\cdot P_2(x')$, as desired.

Finally, we have
$$\min \left\{\frac{D(x')}{P(x')},\frac{D(x'')}{P(x'')} \right\} \leq\frac{x_e D(x'') + (1-x_e)D(x')}{x_e P(x'') + (1-x_e )P(x')} = \frac{D( {x})}{P( {x})}.$$ 
This proves equation \eqref{eq:round-ratio}.

\section{Computational Results}\label{app:experiments}
In this section, we provide a detailed summary of computational results of our $k$-round algorithm for  path planning for hypothesis identification (\probh).
We test our algorithm on two sets of instances: UAV search, (that were also used in~\cite{LimHL16}), and a real-world road network \cite{LiCH+05}. 

We consider the path version of \probh, where  the robot/UAV does not have to return to the root $r$ at the end; by Proposition~\ref{prop-tour-path}, all our results apply to  this path version. We also skip returning to the root at the end of each round: the solution goes directly from the last node of a  round   to the first node of the next round. (By triangle inequality, the modified  solution is at least as good as the solution from Algorithm~\ref{alg:k-rounds}.) We note that the experiments in  \cite{LimHL16} were also for the path version. 
{These experiments show that our algorithm, using a small number of rounds, performs well when compared with fully-adaptive algorithms, and also demonstrates the computational benefits of limited adaptivity.} 

All of our computations are run on a machine with a 3.0 GHz Intel Xeon Gold 6154 processor with $4$ cores: we assigned 28GB of memory to each experiment. The code is run using Python 3.10.4 and Gurobi 10.0.3.

\subsection{Instances}\label{sec:inst}

\begin{figure}[htbp]
\centering
  \includegraphics[width=0.4\linewidth]{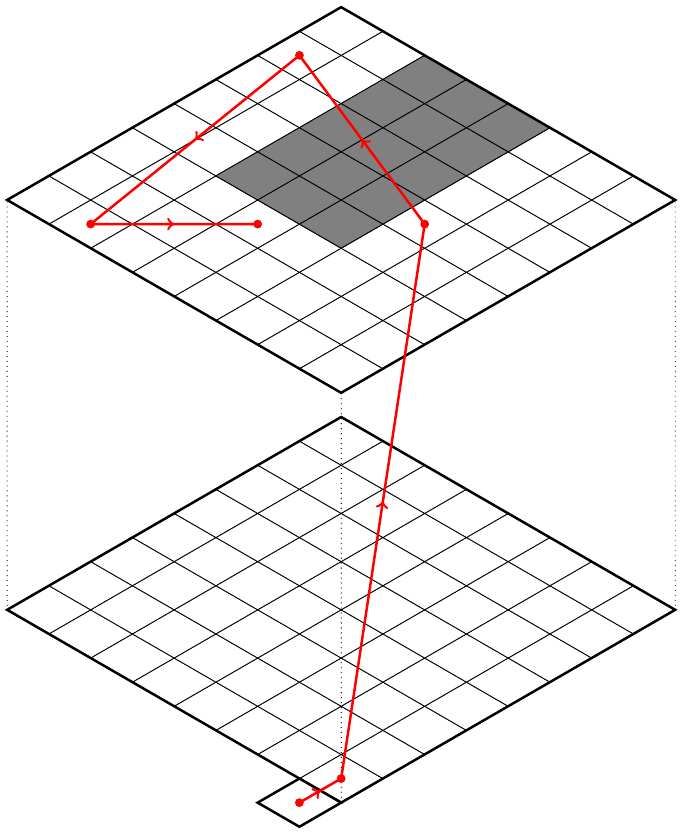}
  \caption{UAV search instance}
  \label{fig:grid}
\end{figure}

\paragraph{UAV Search} The UAV searches for a lost victim 
in an area modeled as an $N \times N$ grid (we set $N \in \{8, 9, 10\}$),
and must identify the location of the victim (see Figure~\ref{fig:grid}). 
We assume a uniform prior on the victim's initial location.
The UAV is equipped with two sensors, a long-range sensor and a short-range sensor, allowing the UAV to operate at different altitudes.
At the low altitude, the UAV can use the (more accurate) short-range sensor to
determine 
whether the current grid 
contains the victim.
The long-range sensor allows the UAV to
determine whether the victim is 
in the $3 \times 3$ grid around its current location.
We also allow for occlusion at high altitudes: using the long-range sensor from such locations yields no information (see Figure~\ref{fig:occl} for details).
We label the instances using $\mathtt{UAV}-N$ and add an identifier if
the instance has occlusions;
for e.g., the instance with $N=9$ and having occlusions is labeled $\mathtt{UAV}-9-\mathtt{OC}$.
We assume that the UAV must start and end at a root node $r$ that lies at the lower altitude outside the grid.
We set the cost of traveling between adjacent locations at the high (resp. low) altitude as $1$ (resp. $4$). The cost of moving between altitudes is $10$.

\noindent\textbf{California Road Network.} 
The road network dataset
describes connections between ``points of interests'' in California, and contains $21,047$ nodes and $21,691$ edges. 
We observe that a majority of the nodes in the network are degree-$2$ nodes; so we can combine such nodes and only keep nodes with degree $3$ or more. 
The resulting network has $1,365$ nodes and $1,991$ edges (see Figure~\ref{fig:cal-road-network}). 

\begin{figure}[htbp]
\centering
  \centering
  \includegraphics[width=0.6\linewidth]{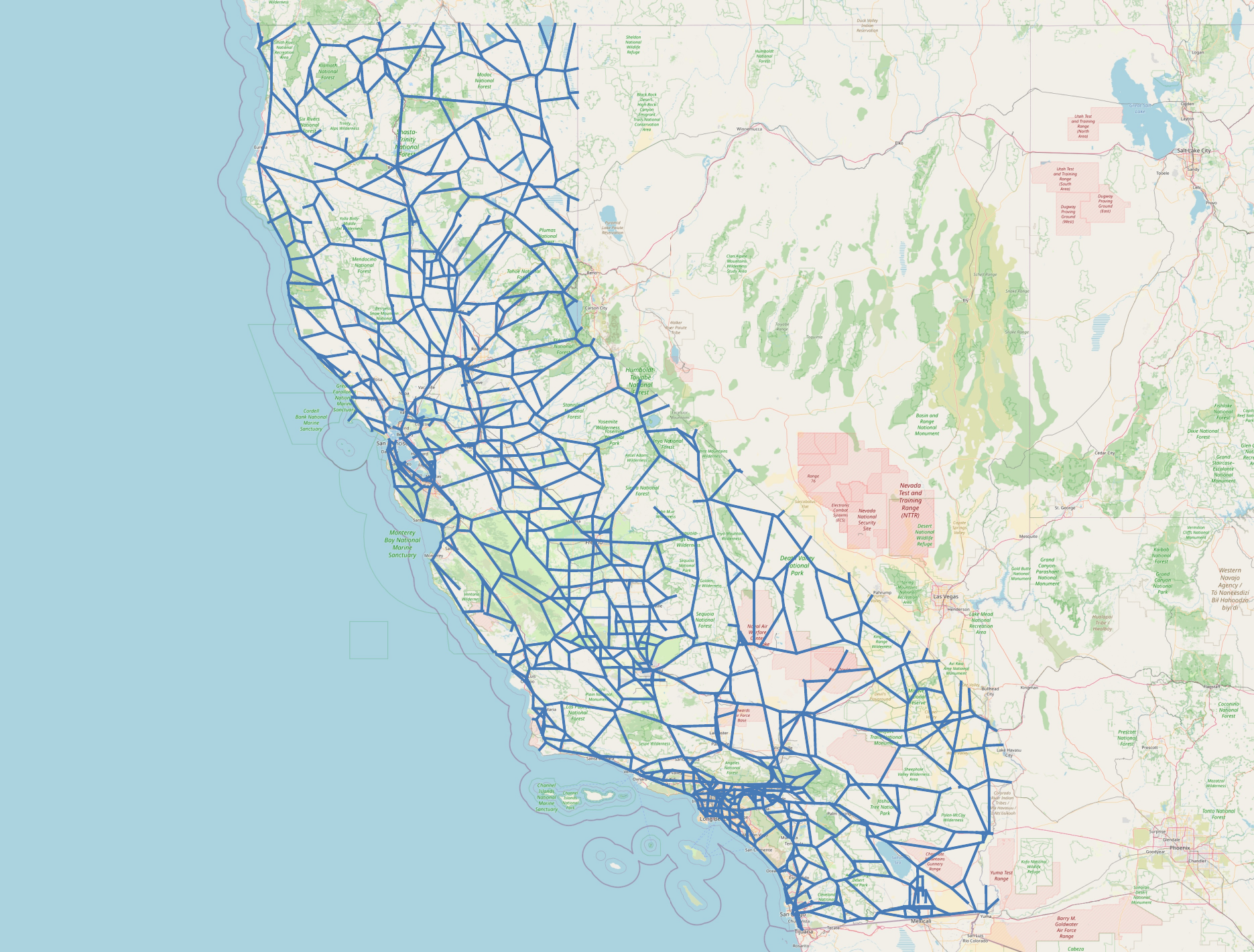}
\caption{California road network.}
\label{fig:cal-road-network}
\end{figure}

We use this network to generate 10 instances of \probh as follows. 
We set $m = 50$ and generate the scenarios using the independent cascade model (ICM).

The independent cascade model (ICM), which is commonly used in epidemic modeling and viral marketing,  
considers a diffusion process across a network. 
The model takes as
input a
directed graph $G(V, E)$, and a parameter $p$ 
that denotes the probability of a node influencing its neighbor. 
The process begins by marking the seed node, say $s$, as active. 
In each subsequent iteration, an active node
independently influences each of its neighbor
with probability $p$: this can happen only once.
If an active node $i$ fails to influence node $j$, we mark arc $(i,j)$ as blocked to prevent further activation attempts. The process terminates when no new activation occurs. 

For an instance, we generate a scenario $i$ as follows: we first sample a random seed node $s_i$ and mark it as active. We then simulate the influence spread using ICM. The set of nodes, say $S_i$, that are active upon termination constitute scenario $i$. Given scenario $i$, the robot receives a feedback of $1$ from node $v$ if $v \in S_i$, and $0$ otherwise. 
The robot begins near the node with the highest centrality measure, and the goal is to identify the underlying scenario with minimum expected cost.

\noindent \textbf{Choice of $p$.} We set 
$p\in \{0.6, 0.62, 0.64, 0.66, 0.68$\} to ensure that each scenario is nontrivial. When $p$ is less than $0.60$, the scenarios tend to contain only one node (since the degree of each node is not high); and setting $p$ greater than $0.7$ 
leads to 
some scenarios comprising the entire graph.
We present 
details regarding each instance
in Table~\ref{tab:sensing_stats}. 
We represent each instance as road-$p*100$-$m$(-$2$),
where $p$ denotes the probability
with which an active node can influence its neighbor,
and $m$ denote the number of scenarios. 
The suffix $(-2)$ is added since we generate two instances for every value of $p$.
We note that as $p$ increases, both the number of scenarios each node is contained in, and the number of 
nodes in each scenario 
increases.

\begin{table}[H]
\centering
\caption{Statistics of Sensing Matrix}
\label{tab:sensing_stats}
\begin{tabular}{lcc}
\toprule
 Instance & Avg. \# scenarios  & Avg. \# nodes \\
type &  per node &  per scenario  \\
\midrule
road-60-50 & 3.04 & 83.00 \\
road-60-50-2 & 3.10 & 84.52 \\
road-62-50 & 3.36 & 91.68 \\
road-62-50-2 & 4.01 & 109.36 \\
road-64-50 & 4.23 & 115.50 \\
road-64-50-2 & 6.64 & 181.40 \\
road-66-50 & 6.33 & 172.82 \\
road-66-50-2 & 6.13 & 167.34 \\
road-68-50 & 10.37 & 283.10 \\
road-68-50-2 & 10.83 & 295.76 \\
\bottomrule
\end{tabular}
\end{table}

\subsection{Results}
\noindent \textbf{UAV Search.} 
We run our $k$-round algorithm for $k \in \{1,\dots, 10\}\cup \{\mathtt{inf}\}$, where $\mathtt{inf}$ denotes the fully adaptive algorithm. 
For each instance, 
we record and report the cost to identify each scenario.
We normalize the cost achieved by the ${k}$-round algorithm against the adaptive algorithm (per scenario), yielding an average relative cost (ARC), given by the following equation:
\begin{equation}\label{eq:arc}
    \text{ARC} = \E_{i}\left[\frac{k\text{-round}(i) - \mathtt{ADAP}(i)}{\mathtt{ADAP}(i)}\right] \times 100\%
\end{equation}
where $k$-round$(i)$ (resp. $\mathtt{ADAP}(i)$) denotes the cost incurred by the $k$-round (resp. fully adaptive) algorithm for scenario $i$.
We also report the average  CPU time (over scenarios) taken by the $k$-round algorithm. 
We present these results in Tables \ref{tab:uav_cost}, 
\ref{tab:uav_arc} and \ref{tab:uav-time} respectively.
We observe (see Figures~\ref{fig:uav_cost}-\ref{fig:uav_time})
that, with an increase in the number of rounds, the cost (ARC) typically decreases and computation time increases. 
However, the trend of ARC sometimes shows an increased cost with more rounds: this can be attributed to  the randomness in our algorithm (Theorem~\ref{thm:k-round-id}), which is due to use of  probabilistic tree embedding.
One notable result  is for the UAV instance $8-\mathtt{OC}$, which was the only instance tested in \cite{LimHL16}:  even with 2 rounds, our solution cost 
is about $30\%$ less
than the  fully-adaptive solution found  in  \cite{LimHL16} (they reported a cost of  $83.6$ for their fully-adaptive algorithm). 
We observe that using $2$ rounds of adaptivity, the cost  is on average within  $12\%$  of the fully adaptive cost. Moreover, the $2$-round algorithm is $20$ times faster than the fully adaptive one (averaging over instances and scenarios).

\begin{figure}[htbp]
\centering
\begin{subfigure}{.3\textwidth}
  \centering
  \includegraphics[width=0.6\linewidth]{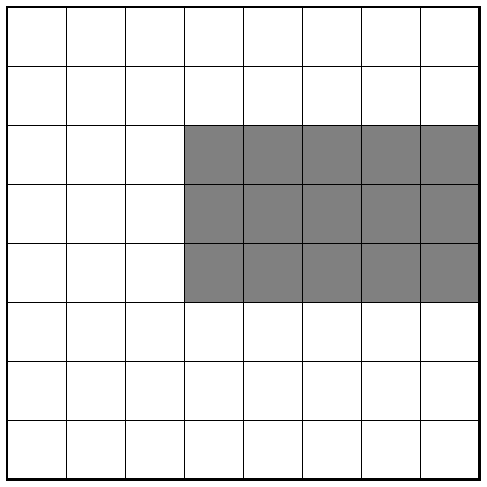}
  \caption{$\mathtt{UAV}-8-\mathtt{OC}$}
  \label{fig:eight_occ}
\end{subfigure}%
\begin{subfigure}{.3\textwidth}
  \centering
  \includegraphics[width=0.6\linewidth]{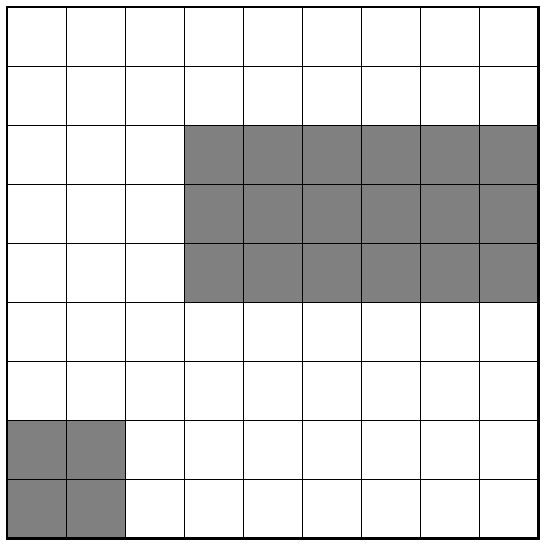}
  \caption{$\mathtt{UAV}-9-\mathtt{OC}$}
  \label{fig:nine_occ}
\end{subfigure}%
\begin{subfigure}{0.3\textwidth}
  \centering
  \includegraphics[width=0.6\linewidth]{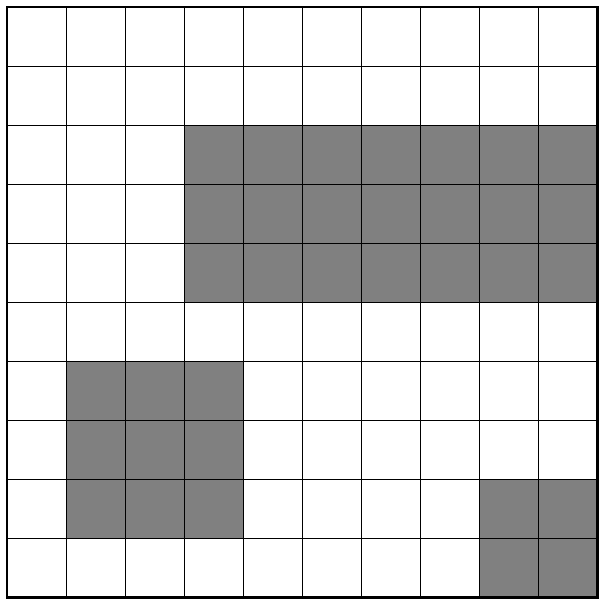}
  \caption{$\mathtt{UAV}-10-\mathtt{OC}$}
  \label{fig:ten_occ}
\end{subfigure}
\caption{Occlusions used for UAV search instances.}
\label{fig:occl}
\end{figure}

\begin{table}[H]
\centering
\caption{Average cost for UAV search}
\label{tab:uav_cost}
\begin{tabular}{lrrrrrrrr}
\toprule
 & \multicolumn{8}{c}{Path Cost} \\
\# Rounds & 1 & 2 & 3 & 4 & 5 & 6 & 7 & inf \\
Instance Type &  &  &  &  &  &  &  &  \\
\midrule
uav-8 & 80.92 & 58.41 & 56.66 & 54.69 & 57.94 & 59.67 & 58.34 & 51.34 \\
uav-8-OC & 108.25 & 60.64 & 56.80 & 54.39 & 57.42 & 54.53 & 56.69 & 57.94 \\
uav-9 & 132.65 & 96.83 & 88.33 & 87.72 & 86.75 & 88.02 & 93.25 & 87.00 \\
uav-9-OC & 155.95 & 86.32 & 91.23 & 88.77 & 92.31 & 93.31 & 90.31 & 92.33 \\
uav-10 & 157.31 & 91.10 & 91.10 & 90.12 & 89.22 & 86.71 & 88.24 & 86.59 \\
uav-10-OC & 209.79 & 115.38 & 113.60 & 111.90 & 110.27 & 109.06 & 110.23 & 104.86 \\
\bottomrule
\end{tabular}
\end{table}

\begin{table}[H]
\centering
\caption{Average relative cost per scenario (vs. fully adaptive) for UAV search}
\label{tab:uav_arc}
\begin{tabular}{lrrrrrrrr}
\toprule
 & \multicolumn{8}{c}{Relative Cost Per Scenario (vs. Fully Adaptive)} \\
\# Rounds & 1 & 2 & 3 & 4 & 5 & 6 & 7 & inf \\
Instance Type &  &  &  &  &  &  &  &  \\
\midrule
uav-8 & 66.70 & 16.92 & 13.02 & 9.56 & 15.90 & 19.69 & 15.24 & 0.00 \\
uav-8-OC & 105.29 & 12.14 & 10.92 & 6.08 & 10.10 & 2.97 & 7.96 & 0.00 \\
uav-9 & 81.67 & 27.68 & 6.13 & 5.34 & 4.25 & 5.92 & 11.95 & 0.00 \\
uav-9-OC & 100.91 & -0.20 & 3.97 & 0.94 & 4.63 & 4.83 & 3.84 & 0.00 \\
uav-10 & 97.80 & 6.86 & 6.38 & 5.92 & 5.10 & 3.53 & 5.29 & 0.00 \\
uav-10-OC & 135.30 & 14.37 & 12.63 & 11.07 & 9.59 & 8.40 & 9.51 & 0.00 \\
\bottomrule
\end{tabular}
\end{table}

\begin{table}[H]
\centering
\caption{Average planning time for UAV  search across all scenarios}
\label{tab:uav-time}
\begin{tabular}{lrrrrrrrr}
\toprule
 & \multicolumn{8}{c}{Planning Time (s)} \\
\# Rounds & 1 & 2 & 3 & 4 & 5 & 6 & 7 & inf \\
Instance Type &  &  &  &  &  &  &  &  \\
\midrule
uav-8 & 0.49 & 0.47 & 0.49 & 0.47 & 0.87 & 0.94 & 0.95 & 5.37 \\
uav-8-OC & 0.44 & 0.63 & 0.50 & 0.49 & 0.95 & 0.77 & 0.74 & 7.86 \\
uav-9 & 0.73 & 0.76 & 0.79 & 0.80 & 1.02 & 2.45 & 3.90 & 21.94 \\
uav-9-OC & 0.68 & 0.58 & 0.71 & 0.97 & 1.22 & 1.04 & 3.35 & 14.81 \\
uav-10 & 0.83 & 0.63 & 0.65 & 0.83 & 0.89 & 5.62 & 5.69 & 5.27 \\
uav-10-OC & 0.73 & 0.71 & 0.80 & 0.96 & 2.28 & 1.96 & 6.29 & 19.74 \\
\bottomrule
\end{tabular}
\end{table}

\begin{figure}[H]
    \centering
    \includegraphics[width=0.6\linewidth]{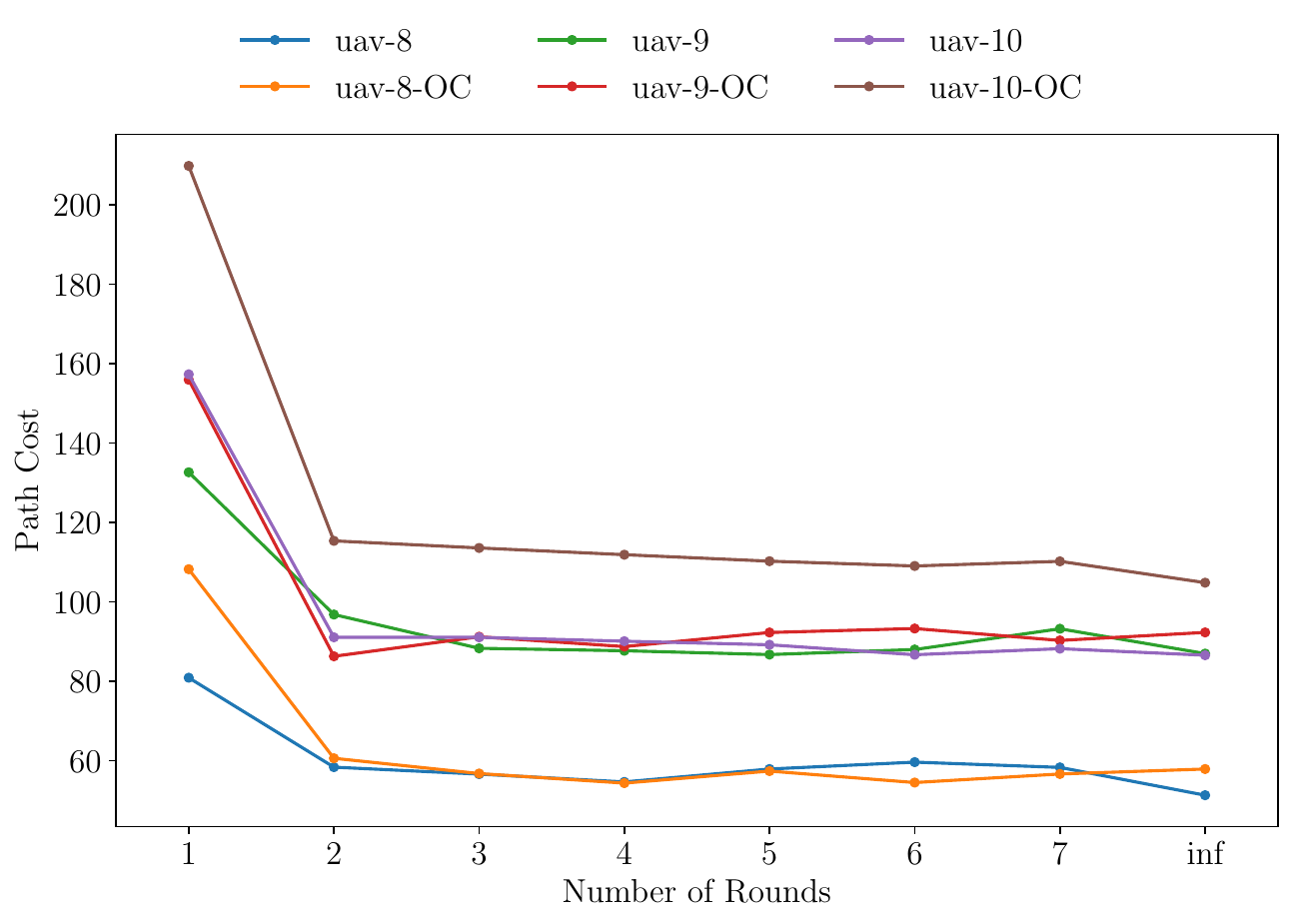}
    \caption{Average cost v/s \#rounds for UAV search}
    \label{fig:uav_cost}
\end{figure}

\begin{figure}[H]
    \centering
    \includegraphics[width=0.6\linewidth]{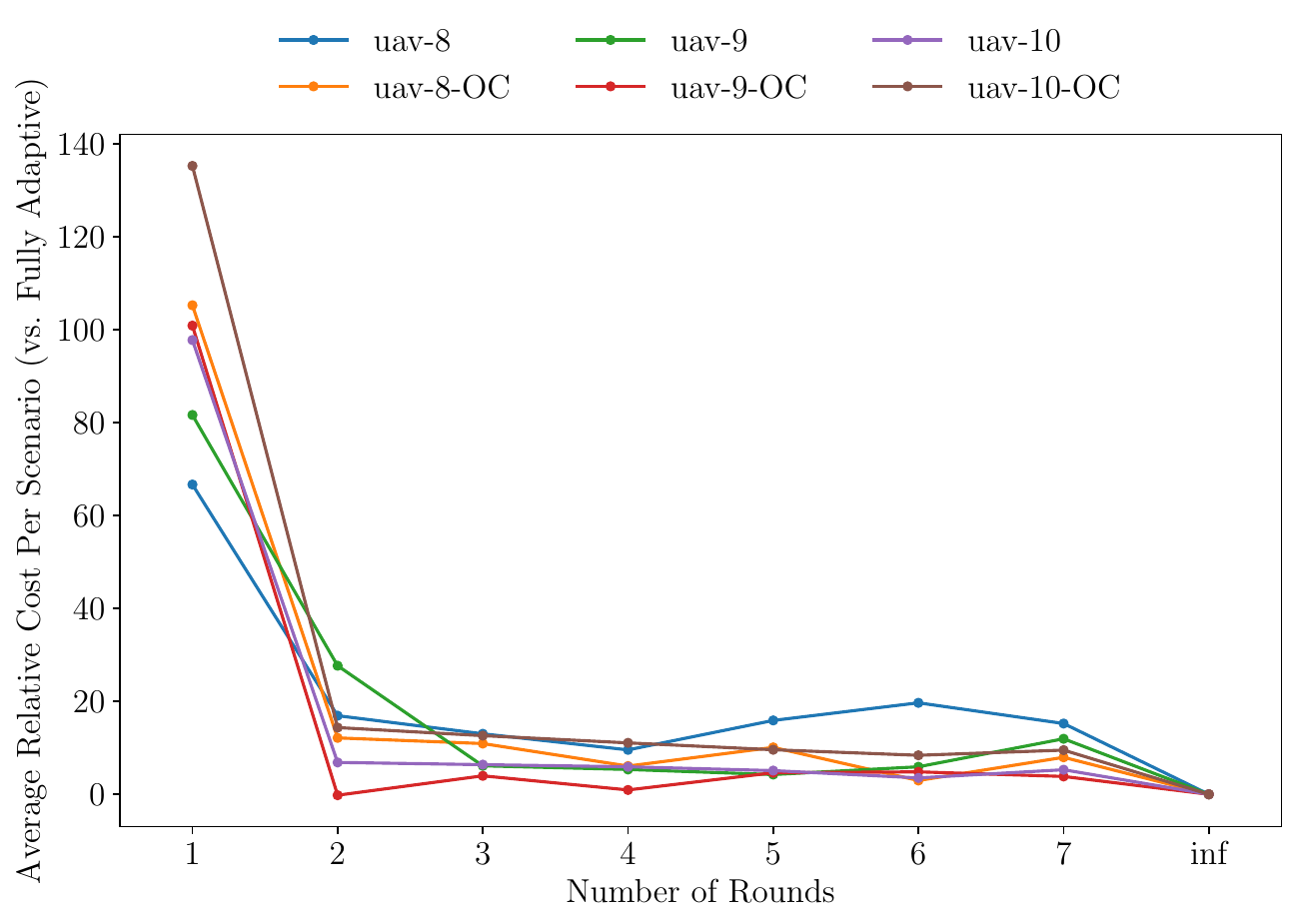}
    \caption{Average relative cost per scenario v/s \#rounds for UAV search}
    \label{fig:uav_arc}
\end{figure}

\begin{figure}[H]
    \centering
    \includegraphics[width=0.6\linewidth]{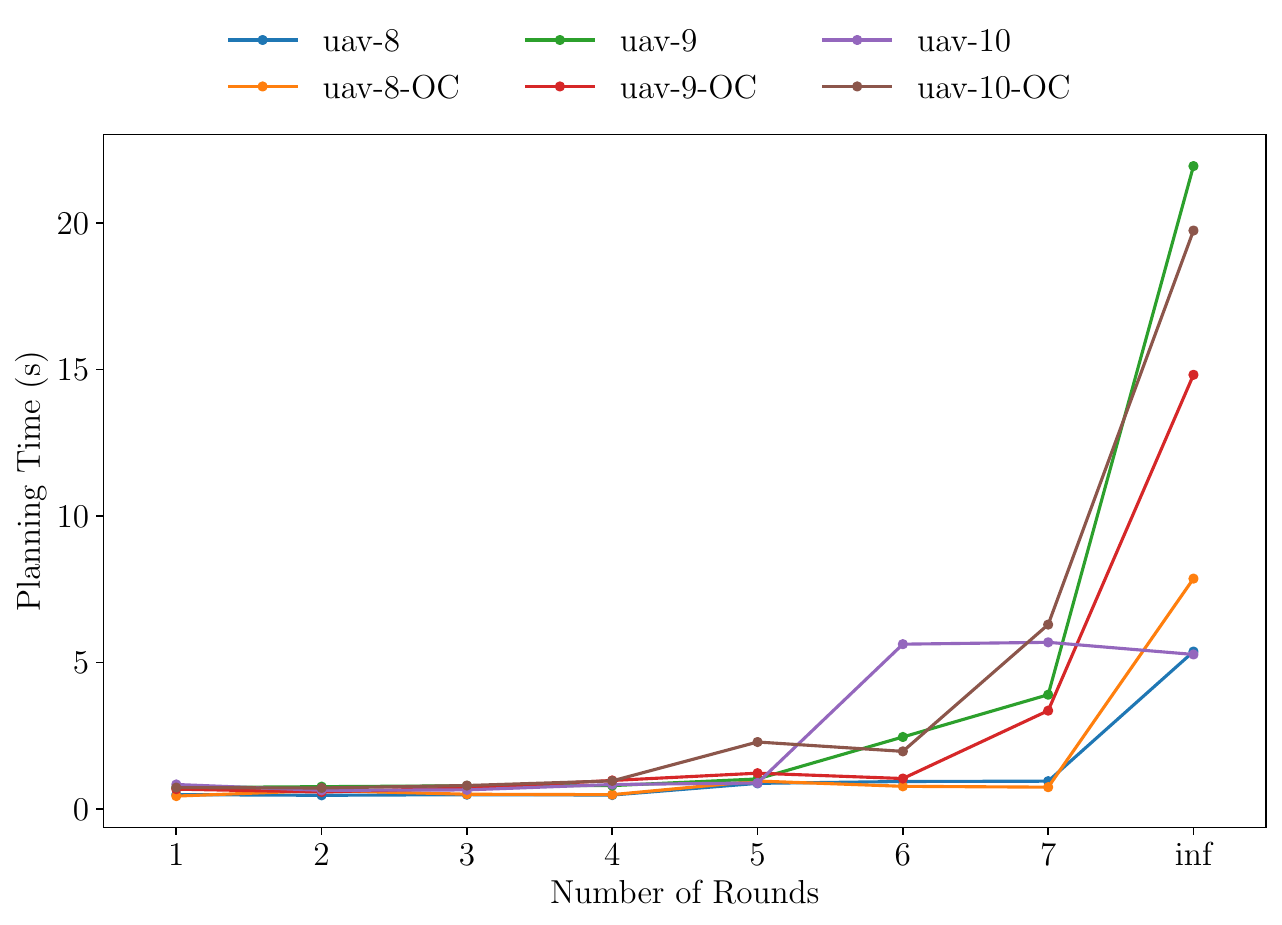}
    \caption{Average time v/s  \#rounds for UAV search}
    \label{fig:uav_time}
\end{figure}

We now present the computational results for instances generated using a California road network dataset \cite{LiCH+05}.

\noindent \textbf{California Road Network.}
As in the case of UAV search, we run our $k$-round algorithm for $k \in \{1,\dots, 10\}\cup \{\mathtt{inf}\}$, where $\mathtt{inf}$ denotes the fully adaptive algorithm. 
For each instance, 
we record and report the cost to identify each scenario.
We normalize the cost achieved by the ${k}$-round algorithm against the adaptive algorithm (per scenario), yielding an average relative cost (ARC), given by \eqref{eq:arc}.
We also report the average  CPU time (over scenarios) taken by the $k$-round algorithm. 
We present these results in Tables~\ref{tab:road_cost}-\ref{tab:road_time}.
We observe similar trends as observed for UAV search instances: 
with an increase in the number of rounds, the cost (ARC) typically decreases and computation time increases (see Figures~\ref{fig:road_cost}-\ref{fig:road_time}). 
We note that using $2$ rounds of adaptivity, the cost  is on average within $50\%$ of the fully adaptive cost, while being 
significantly ($\approx 15$ times) faster than than the fully adaptive algorithm.
We also note that the fully adaptive algorithm for instance $\mathtt{road-66-50-2}$ ran out of memory in the original experiment, and required the memory to be raised to $56\text{GB}$.

\begin{table}[H]
\centering
\caption{Average Cost for Road Networks (in km)}
\label{tab:road_cost}
\begin{tabular}{lrrrrrrrrrrr}
\toprule
 & \multicolumn{11}{c}{Path Cost} \\
\# Rounds & 1 & 2 & 3 & 4 & 5 & 6 & 7 & 8 & 9 & 10 & inf \\
Instance Type &  &  &  &  &  &  &  &  &  &  &  \\
\midrule
road-60-50 & 4735 & 3444 & 2941 & 3130 & 2878 & 3215 & 3374 & 3387 & 3668 & 3092 & 3423 \\
road-60-50-2 & 3992 & 2835 & 3028 & 2826 & 3070 & 3156 & 3129 & 3235 & 3469 & 2985 & 3714 \\
road-62-50 & 4833 & 4324 & 4139 & 4221 & 4438 & 4565 & 4438 & 4333 & 4200 & 4180 & 3618 \\
road-62-50-2 & 3932 & 2726 & 2916 & 2949 & 3565 & 3463 & 3846 & 3399 & 4215 & 3484 & 3590 \\
road-64-50 & 4412 & 4406 & 4288 & 4135 & 4612 & 4307 & 4571 & 4148 & 4168 & 4208 & 3976 \\
road-64-50-2 & 3070 & 3106 & 3169 & 3075 & 3316 & 3799 & 3580 & 3878 & 3049 & 3540 & 2915 \\
road-66-50 & 3909 & 2901 & 3001 & 2853 & 2966 & 2877 & 3114 & 2347 & 2537 & 2482 & 2566 \\
road-66-50-2 & 4235 & 2391 & 2685 & 2946 & 2641 & 2897 & 3449 & 2882 & 3122 & 3066 & 3257 \\
road-68-50 & 4285 & 2681 & 2589 & 2823 & 2531 & 2164 & 2651 & 2166 & 1987 & 2256 & 2458 \\
road-68-50-2 & 3913 & 2606 & 2568 & 2140 & 1898 & 2322 & 2161 & 2470 & 2720 & 2671 & 2071 \\
\bottomrule
\end{tabular}
\end{table}

\begin{table}[H]
\centering
\caption{Average Relative Cost Per Scenario (vs. Fully Adaptive) for Road Networks}
\label{tab:road_arc}
\begin{tabular}{lrrrrrrrrrrr}
\toprule
 & \multicolumn{11}{c}{Relative Cost Per Scenario (vs. Fully Adaptive)} \\
\# Rounds & 1 & 2 & 3 & 4 & 5 & 6 & 7 & 8 & 9 & 10 & inf \\
Instance Type &  &  &  &  &  &  &  &  &  &  &  \\
\midrule
road-60-50 & 83.86 & 24.36 & 8.80 & 33.63 & 11.31 & 37.46 & 51.51 & 36.63 & 43.83 & 18.58 & 0.00 \\
road-60-50-2 & 51.54 & 54.14 & 26.49 & 13.15 & 29.73 & 28.78 & 25.18 & 29.83 & 25.55 & 21.50 & 0.00 \\
road-62-50 & 110.30 & 81.56 & 97.93 & 80.09 & 104.56 & 96.14 & 97.55 & 86.63 & 81.51 & 75.64 & 0.00 \\
road-62-50-2 & 51.62 & 14.94 & 17.05 & 25.55 & 18.38 & 27.68 & 46.37 & 15.56 & 29.00 & 25.46 & 0.00 \\
road-64-50 & 59.03 & 49.29 & 67.36 & 46.45 & 109.37 & 61.42 & 51.80 & 54.77 & 74.68 & 26.16 & 0.00 \\
road-64-50-2 & 55.77 & 81.72 & 97.88 & 44.55 & 85.76 & 82.43 & 60.21 & 130.09 & 34.99 & 45.92 & 0.00 \\
road-66-50 & 197.39 & 70.03 & 80.56 & 68.84 & 65.76 & 72.69 & 90.66 & 49.59 & 108.24 & 17.90 & 0.00 \\
road-66-50-2 & 105.99 & 34.27 & 64.21 & 50.34 & 30.49 & 16.60 & 58.93 & 45.78 & 81.00 & 60.45 & 0.00 \\
road-68-50 & 125.46 & 33.59 & 32.77 & 35.27 & 58.29 & 30.68 & 18.78 & 10.97 & 1.78 & 22.78 & 0.00 \\
road-68-50-2 & 155.13 & 54.58 & 45.54 & 35.69 & 6.90 & 27.25 & 33.16 & 28.30 & 50.32 & 37.84 & 0.00 \\
\bottomrule
\end{tabular}
\end{table}

\begin{table}[H]
\centering
\caption{Average Planning Time of Road Networks Across All Scenarios}
\label{tab:road_time}
\begin{tabular}{lrrrrrrrrrrr}
\toprule
 & \multicolumn{11}{c}{Planning Time (s)} \\
\# Rounds & 1 & 2 & 3 & 4 & 5 & 6 & 7 & 8 & 9 & 10 & inf \\
Instance Type &  &  &  &  &  &  &  &  &  &  &  \\
\midrule
road-60-50 & 4.66 & 7.09 & 8.31 & 12.35 & 15.88 & 18.53 & 25.38 & 31.25 & 36.29 & 32.92 & 119.08 \\
road-60-50-2 & 4.98 & 6.69 & 9.20 & 12.03 & 15.52 & 20.59 & 23.65 & 23.24 & 36.75 & 42.14 & 143.84 \\
road-62-50 & 6.22 & 6.56 & 7.89 & 11.70 & 15.35 & 21.60 & 29.77 & 33.78 & 29.94 & 37.69 & 123.86 \\
road-62-50-2 & 5.44 & 6.53 & 10.61 & 11.95 & 15.27 & 19.53 & 28.94 & 34.20 & 41.20 & 40.63 & 94.11 \\
road-64-50 & 4.81 & 6.04 & 8.81 & 11.65 & 14.17 & 22.38 & 25.58 & 24.60 & 43.90 & 50.98 & 112.40 \\
road-64-50-2 & 6.49 & 7.14 & 10.80 & 13.06 & 15.51 & 30.87 & 28.48 & 27.34 & 46.05 & 54.69 & 106.31 \\
road-66-50 & 5.88 & 7.04 & 9.45 & 15.08 & 16.64 & 23.08 & 25.49 & 21.40 & 36.91 & 36.65 & 109.96 \\
road-66-50-2 & 7.48 & 10.46 & 11.76 & 19.93 & 15.04 & 23.37 & 30.84 & 40.69 & 39.50 & 39.43 & 217.21 \\
road-68-50 & 6.99 & 9.97 & 14.07 & 22.13 & 29.11 & 38.28 & 41.90 & 38.41 & 52.52 & 55.36 & 84.31 \\
road-68-50-2 & 6.30 & 8.74 & 12.28 & 15.89 & 15.27 & 16.93 & 18.70 & 41.87 & 51.14 & 54.97 & 112.10 \\
\bottomrule
\end{tabular}
\end{table}

\begin{figure}[H]
    \centering
    \includegraphics[width=0.6\linewidth]{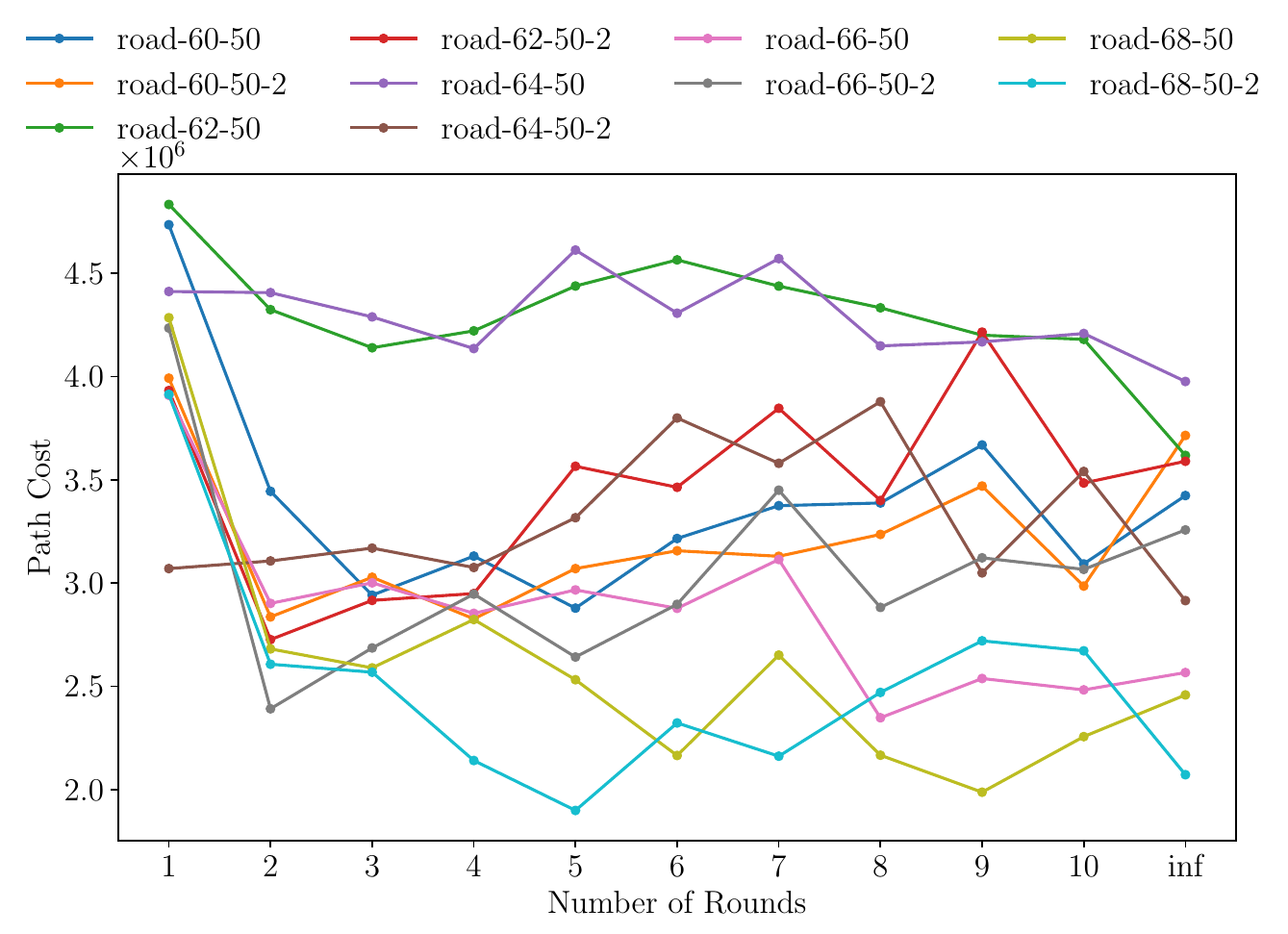}
    \caption{Average cost v/s \#rounds for California road network}
    \label{fig:road_cost}
\end{figure}

\begin{figure}[H]
    \centering
    \includegraphics[width=0.6\linewidth]{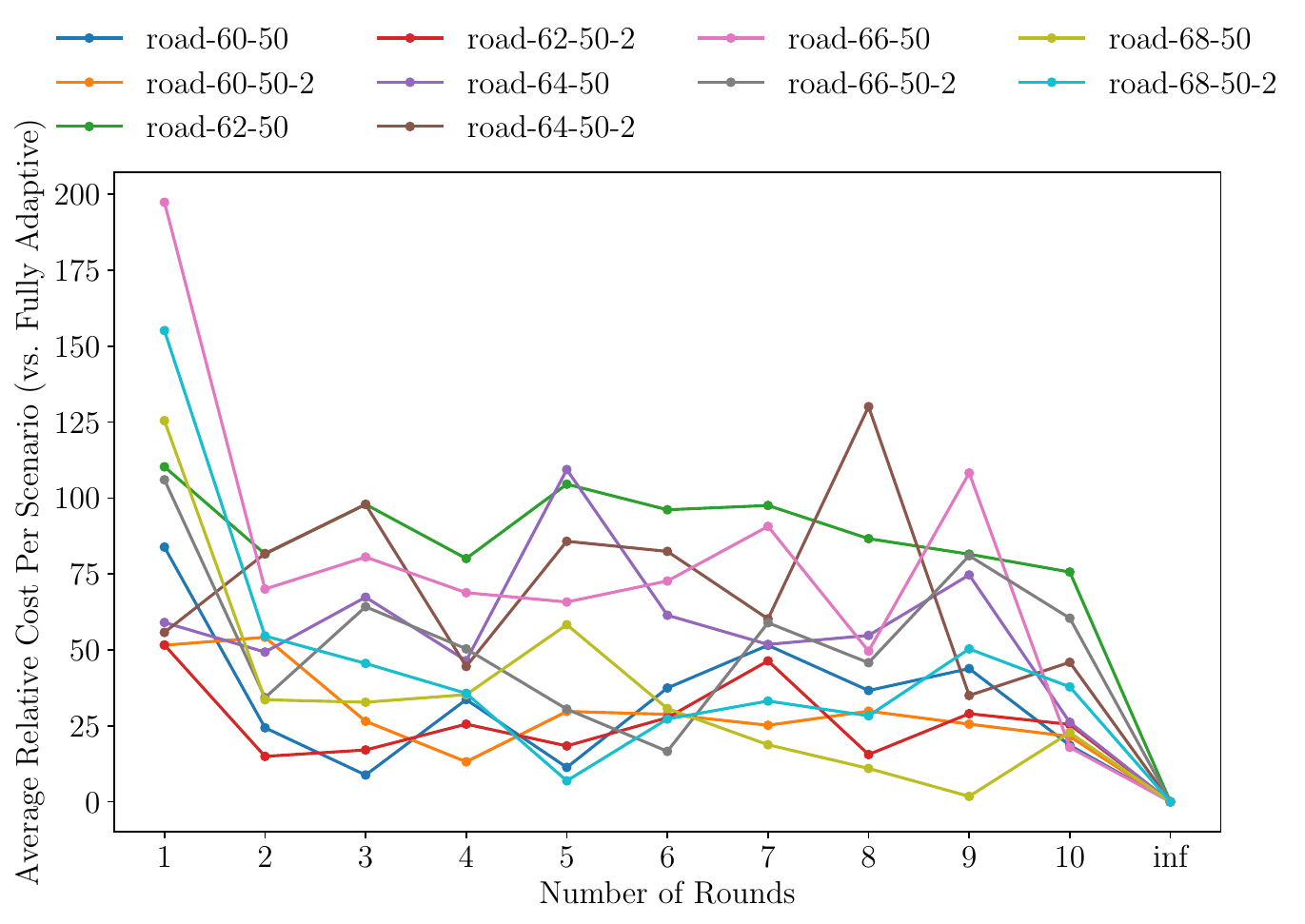}
    \caption{Average relative cost per scenario v/s \#rounds for California road network}
    \label{fig:road_arc}
\end{figure}

\begin{figure}[H]
    \centering
    \includegraphics[width=0.6\linewidth]{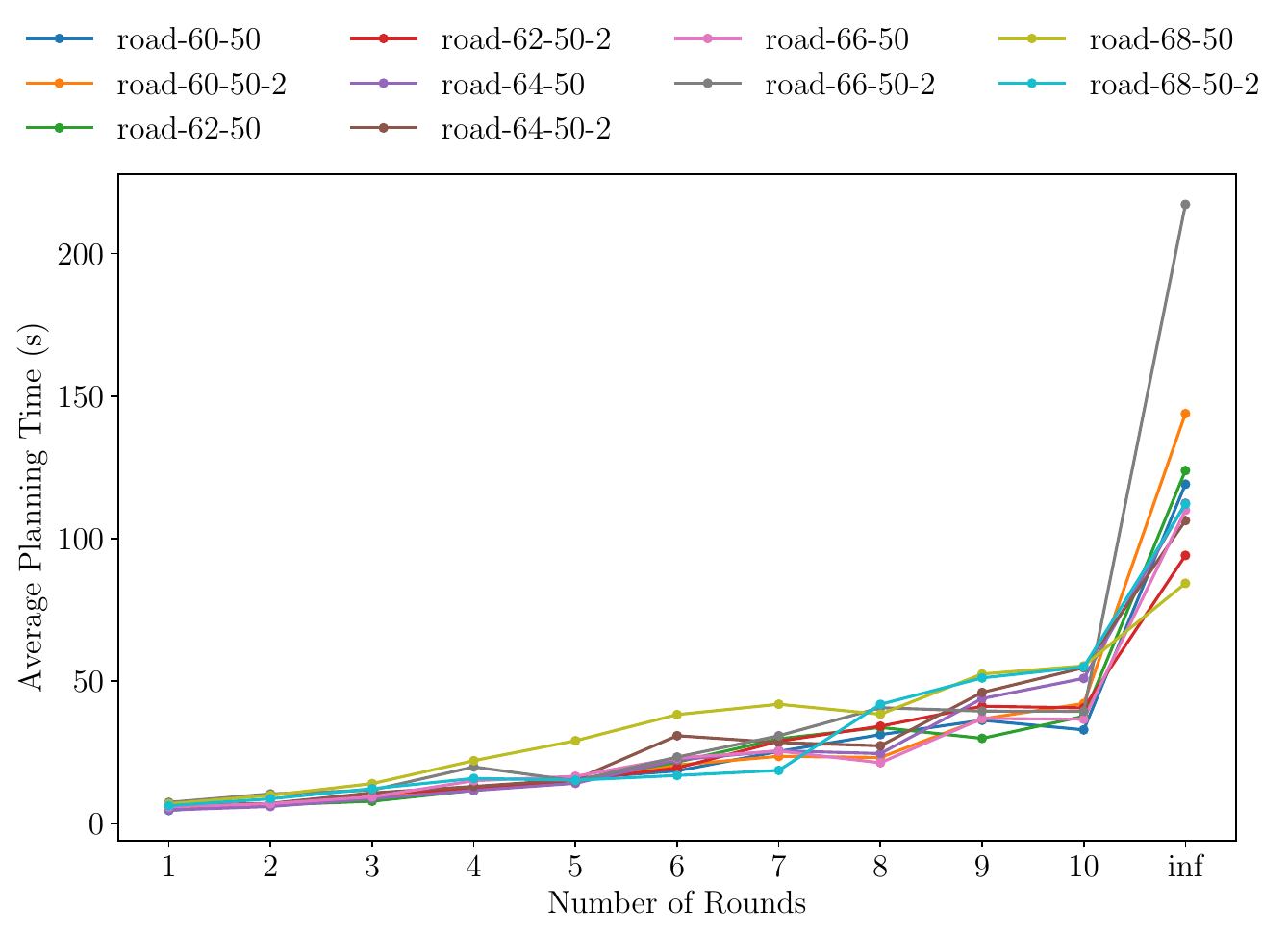}
    \caption{Average time v/s \#rounds for California road network}
    \label{fig:road_time}
\end{figure}

\vfill
\vspace{-2mm}
\section{Conclusion}
\vspace{-2mm}
In this paper, we design an algorithm for the informative path planning
problem parameterized by the number $k$ of adaptive rounds, and prove a smooth trade-off between $k$ and
the solution quality. 

Our computational experiments corroborate our theory, showing that a  few rounds of adaptivity suffice to get solutions comparable to fully adaptive ones, 
while providing a significant benefit in computational time. 
We leave open the question 
of designing algorithms under  uncertainty of  measurements:  can we design algorithms with limited adaptivity when  measurements may be imprecise or even incorrect with some  probability?

{
\bibliographystyle{alpha}
\bibliography{ref}
}
\newpage
\appendix
\section{Tighter Approximation Using More Rounds}\label{app:better-appx}

We now present a better approximation for the informative path planning problem. 
We note that a similar result for the scenario submodular cover problem appeared in \cite{GGN21}. Their result directly applies to \prob: we include the details here for completeness.
The main result of this section is as follows.

\begin{theorem}\label{thm:2k-round-ipp}
For any integer $k \geq 1$ and constant $\epsilon > 0$, there is a $2k$-round adaptive algorithm for the informative path planning problem 
with cost at most $\bigOh\left(\log^{2+\epsilon}(n)\cdot m^{1/k}\cdot \log(mQ)\right)$ times the cost of an optimal adaptive algorithm.
\end{theorem}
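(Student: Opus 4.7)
I plan to refine the $k$-round algorithm of Theorem~\ref{thm:k-round-ipp} by separating scenario-elimination progress from function-coverage progress across rounds. In the original algorithm, each round pays $\log Q$ for the function-coverage telescoping inside the partial cover, so $k$ rounds give $k\log Q$; by giving the partial cover a second early-termination threshold and allowing up to $2k$ rounds, each round pays only $(\log m + \log Q)/k$, which telescopes to $\log(mQ)$.

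\textbf{Two-threshold partial cover $\mathtt{PCA}'$.} The plan is to augment Algorithm~\ref{alg:partial-cover} by having the visitation loop also terminate when $f(\psi(R)) \geq (1-\delta_2)Q$, i.e., when the residual target drops below $\delta_2 Q$. The tour-construction phase and the score in \eqref{eqn:greedy-choice} are unchanged. I would then re-examine the proof of Theorem~\ref{thm:partial-cover-thm}; only the upper bound on $\sum_{j\ge i} G_j(\scn)$ is affected. Specifically, the function-coverage telescope $\sum_p \frac{f(\psi_p)-f(\psi_{p-1})}{Q-f(\psi_{p-1})}$ now halts once $Q - f(\psi_p) \le \delta_2 Q$, so (using integrality of $f$) it is bounded by $\log(1/\delta_2)$ rather than $\log Q$. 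Together with the scenario-reduction bound $\log(1/\delta_1)$, this gives $L = \log(1/\delta_1) + \log(1/\delta_2)$ and expected cost $\bigOh(\rho L/\delta_1) \cdot \E[\OPT]$ for $\mathtt{PCA}'$.

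\textbf{The $2k$-round algorithm.} I would then iterate $\mathtt{PCA}'$ with $\delta_1 = m^{-1/k}$ and $\delta_2 = Q^{-1/k}$ (both fixed using the original $m, Q$), recursing on the residual instance as in Algorithm~\ref{alg:k-rounds} after each call. Each call terminates either because (a) the surviving scenarios shrink by a factor of at least $m^{1/k}$, or (b) the residual target shrinks by a factor of at least $Q^{1/k}$. Writing $r_1, r_2$ for the number of rounds of each type, after $r_1 + r_2 = 2k$ rounds we have $\max(r_1, r_2) \ge k$: either the surviving scenarios fall below $m \cdot m^{-r_1/k} \le 1$ (hypothesis identified) or the residual target falls below $Q \cdot Q^{-r_2/k} \le 1$ (function covered). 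Since each round costs at most $\bigOh(\rho \cdot m^{1/k} \cdot (\log m + \log Q)/k) \cdot \OPT$ in expectation (using $\E[\OPT_{\text{residual}}] \le \OPT$, as in Theorem~\ref{thm:k-round-ipp}), summing over $2k$ rounds yields $\bigOh(\rho \cdot m^{1/k} \cdot \log(mQ)) \cdot \OPT$. Substituting $\rho = \bigOh(\log^{2+\epsilon} n)$ from~\cite{CalinescuZelikovsky05} gives the stated bound.

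\textbf{Main obstacle.} The delicate step is ensuring that the \emph{lower bound} side of Lemma~\ref{lem:key-lemma} continues to hold for $\mathtt{PCA}'$, because $\NA$ now terminates under the weaker partial-cover condition and the classification of parts $Z \in \mathcal{Z}$ must adapt accordingly. The cleanest route is to benchmark $\mathtt{PCA}'$ against the optimal adaptive policy for \emph{partial} cover (reaching $f \ge (1-\delta_2)Q$) rather than full cover; since the partial-cover adaptive optimum is no larger than $\OPT$, any bound of the form $\E[\NA'] \le c \cdot \OPT_{\text{partial}}$ also gives $\E[\NA'] \le c \cdot \OPT$. With this substitution, the good/okay/bad classification of $Z \in \mathcal{Z}$ should be defined using the $(1-\delta_2)Q$ threshold on $f(\psi(\Stem_Z))$, and Lemmas~\ref{lem:stem_func_gain}--\ref{lem:noncompletion_prob_bound} should go through essentially unchanged (with constants adjusted by a factor of $1-\delta_2$, which can be absorbed into the big-Oh). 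Verifying this reduction to partial-cover optimum is the main technical content beyond the two-threshold modification itself.
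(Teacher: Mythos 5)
Your overall route is the one the paper takes in Appendix~A: a two-threshold partial cover (the paper's $\mathtt{GPC}$-\prob with $\delta=m^{-1/k}$ and $\eta=Q^{-1/k}$), the observation that the function-coverage telescope in the upper bound on $\sum_{j\ge i}G_j(\scn)$ now stops once the residual target drops to $\eta Q$ and hence contributes only $\log(1/\eta)$, the resulting per-round cost of $\bigOh\big(\rho\, m^{1/k}\log(mQ)/k\big)\cdot\OPT$ via $\E[\widehat{\OPT}]\le\OPT$, and the counting argument that scenario-reduction rounds occur at most $k$ times while $k$ coverage-reduction rounds force full coverage by integrality of $f$. All of that matches the paper.

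The one place you deviate is exactly the step you flag as delicate, and your proposed fix does not work as stated. If you redefine $Z_{\text{good}}$ by the threshold $f(\psi(\Stem_Z))\ge(1-\delta_2)Q$ and benchmark against a partial-cover optimum, then the relative function gain of a good scenario in Lemma~\ref{lem:stem_func_gain} becomes $\frac{f(\psi(\Stem_Z))-f(\psi_Z(S))}{Q-f(\psi_Z(S))}$, which is not merely scaled by $1-\delta_2$: taking $f(\psi_Z(S))$ just below $(1-\delta_2)Q$ makes this ratio as small as roughly $1/(\delta_2 Q)$ (or even $0$), so the lower bound on the score of the candidate tour collapses and Lemma~\ref{lem:key-lemma} fails. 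The paper avoids this entirely by \emph{not} changing the benchmark or the good/okay/bad classification: $o(\cdot)$ remains the non-completion probability of the full-cover adaptive optimum, $Z_{\text{good}}$ still requires $f(\psi(\Stem_Z))=Q$ (so the gain is exactly $Q_Z/Q_Z=1$), and the only structural change is to exclude from $\mathcal{Z}$ the parts $Y$ with $f(\psi_Y(S))>Q(1-\eta)$. Such parts trigger the new stopping rule, so they are absorbed into the $1-a(i+1)$ term in Lemma~\ref{lem:noncompletion_prob_bound} exactly as the small parts are, and the lower bound goes through verbatim. With that correction your argument coincides with the paper's proof.
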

Combining Theorem~\ref{thm:2k-round-ipp}, and the result for the ratio group Steiner tree problem (Theorem~\ref{thm:grp-st-ratio}), we obtain the following.

\begin{corollary}\label{cor:2k-round-id}
For any integer $k \geq 1$, there is a $2k$-round adaptive algorithm for the 
hypothesis identification
problem with cost 
at most $\bigOh\left(\log^{2}(n) \cdot m^{1/k}\cdot \log m)\right)$ times the cost of an optimal
adaptive algorithm.
\end{corollary}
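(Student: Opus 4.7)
The plan is to derive the corollary by specializing Theorem~\ref{thm:2k-round-ipp} to the \probh setting. First, I would recall that \probh embeds into the \prob framework via the set-coverage function $f$ defined in \eqref{eq:f-hyp-id}, with target $Q = m-1$. Since $Q \le m$, the factor $\log(mQ)$ appearing in the bound of Theorem~\ref{thm:2k-round-ipp} collapses to $\bigOh(\log m)$, which accounts for one of the two improvements over the general \prob guarantee.

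Next, I would invoke the reduction that was already carried out in Section~\ref{sec:ipp-h}, which shows that whenever the partial-covering subroutine $\mathtt{PCA}$ inside the $k$-round (and hence $2k$-round) algorithm is asked to solve an \RSO instance with an objective of the form \eqref{eqn:greedy-choice}, and $f$ is the hypothesis-identification function \eqref{eq:f-hyp-id}, the \RSO instance becomes an instance of ratio group Steiner tree (using the two explicit group constructions for the information-gain and function-gain terms). Applying Theorem~\ref{thm:grp-st-ratio}, we may take $\rho = \bigOh(\log^2 n)$ instead of the generic $\rho = \bigOh(\log^{2+\epsilon} n)$ used for arbitrary submodular orienteering. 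This gives the second improvement.

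Finally, I would plug $\rho = \bigOh(\log^2 n)$ and $\log(mQ) = \bigOh(\log m)$ into the cost guarantee of Theorem~\ref{thm:2k-round-ipp}, whose stated bound is $\bigOh\!\left(\rho \cdot m^{1/k} \cdot \log(mQ)\right) \cdot \OPT$ up to the $\log^{2+\epsilon}(n)$ factor coming from the \RSO approximation. With both specializations, the bound becomes $\bigOh\!\left(\log^{2}(n) \cdot m^{1/k} \cdot \log m\right) \cdot \OPT$, matching the statement of Corollary~\ref{cor:2k-round-id}.

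The only genuine obstacle I anticipate is a sanity check rather than a new idea: I would verify that the proof of Theorem~\ref{thm:2k-round-ipp} invokes \RSO solely through the $\mathtt{PCA}$ subroutine (so that replacing the oracle by ratio group Steiner is valid throughout all $2k$ rounds), and that the ``different measure of progress'' used to obtain the $2k$-round variant does not silently reintroduce an additional $\log Q$ or $\log n^{\epsilon}$ factor. If any such factor surfaces, I would redo the telescoping cost accounting from Theorem~\ref{thm:k-round-ipp} (the bound $\bigOh(\rho\cdot m^{1/k}\cdot(\log m + k\log Q))$) with the $2k$-round partitioning of \cite{GGN21}, substituting $\rho = \bigOh(\log^2 n)$ and $Q \le m$ to confirm the final $\bigOh(\log^2(n)\cdot m^{1/k}\cdot \log m)$ approximation.
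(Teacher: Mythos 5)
Your proposal is correct and follows exactly the paper's route: the paper derives Corollary~\ref{cor:2k-round-id} by combining Theorem~\ref{thm:2k-round-ipp} with Theorem~\ref{thm:grp-st-ratio}, i.e., substituting $\rho = \bigOh(\log^2 n)$ (via the reduction of the \RSO\ subproblems to ratio group Steiner for the function $f$ in \eqref{eq:f-hyp-id}) and using $Q = m-1$ so that $\log(mQ) = \bigOh(\log m)$. Your closing sanity check---that the $\mathtt{GPC}$ variant only changes the stopping rule and the definition of $\mathcal{Z}$, not the form of the score function, so the ratio group Steiner reduction still applies in every round---is exactly the right thing to verify and is implicitly assumed by the paper.
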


The main idea to prove Theorem~\ref{thm:2k-round-ipp} is to introduce the following \emph{more general} partial cover version
of \prob.
An instance of this partial cover version of \prob is the same as an instance of \prob with two additional parameters $\delta, \eta \in (0, 1]$. 
The goal is to visit a set of locations $T$ that realize to $\psi(T)\in \Psi$ such that either 
(i) number of compatible scenarios  $|\{\scn\in M: \psi(T) \sse \psi_{\scn}\}|<\delta m$, or (ii) the function $f$ is partially covered, i.e., $f(\psi(T)) > Q \cdot (1 - \eta)$. Note that $\eta = 1/Q$ recovers the original partial cover version (since $f$ is integral).
We refer to this problem as $\mathtt{GPC}$-\prob, and prove the following.

\begin{theorem}\label{thm:better-partial-cover}
There is a non-adaptive algorithm for $\mathtt{GPC}$-\prob 
with expected cost 
$\bigOh \left(\frac{\rho}{\delta}\log\left(\frac{1}{\delta \eta}\right)\right)$ times the cost of the optimal adaptive solution for \prob,
where $\rho$ is the best  approximation guarantee for ratio submodular orienteering.
\end{theorem}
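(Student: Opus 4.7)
The plan is to mirror the proof of Theorem~\ref{thm:partial-cover-thm} (i.e., of Lemma~\ref{lem:key-lemma}) with one targeted modification to Algorithm~\ref{alg:partial-cover}: replace the inner stopping check ``while $|H|\ge \delta m$ and $f(\psi)<Q$'' by ``while $|H|\ge \delta m$ and $f(\psi)<Q(1-\eta)$''. The score function in \eqref{eqn:greedy-choice} is kept unchanged, still with denominator $Q-f(\psi_Z(S))$. I then redefine $L := \log(1/(\delta\eta))$ and $a(t) := \pr(\NA > \beta L t)$, and re-execute the proof of Lemma~\ref{lem:key-lemma} with this new value of $L$.

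The subtle setup point is how to partition $\mathcal{Z}$ into good/okay/bad. I would keep the \emph{original, full-coverage} definitions $Z_{\text{good}} = \{\scn\sim\psi(\Stem_Z): f(\psi(\Stem_Z))=Q\}$ and $Z_{\text{bad}} = \{\scn\sim\psi(\Stem_Z): f(\psi(\Stem_Z))<Q\}$ (not the natural partial-coverage analogues). Under these definitions Lemmas~\ref{lem:stem_func_gain} and \ref{lem:info_gain} are unaffected, since for $\scn\in Z_{\text{good}}$ the ratio $(Q-f(\psi_Z(S)))/(Q-f(\psi_Z(S)))$ is still exactly $1$. Lemma~\ref{lem:noncompletion_prob_bound} also carries over because $\scn\in Z_{\text{bad}}$ still implies $f(\psi_\scn(\Stem_Z))<Q$, so the full-cover $\OPT$ has not finished at time $i+1$, giving $o(i+1) \ge \tfrac{1}{2}\sum_{Z\text{ bad}} \pr(Z)$. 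Consequently the lower bound $\score(\Pi_t) \ge \tfrac{\delta}{4\rho(i+1)}(a(i+1)-2o(i+1))$ from Lemma~\ref{lem:submod} carries over verbatim.

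The improvement enters only in the upper bound on $\sum_{j\ge i} G_j$. Because $\NA$ now halts as soon as $f(\psi) \ge Q(1-\eta)$, every partial realization $\psi_{<p}$ that arises in the decision-path argument (for $p\le \ell-1$) satisfies $f(\psi_{<p})<Q(1-\eta)$, so $Q-f(\psi_{<p})\ge Q\eta$. Splitting off the last tour, the telescoping function-gain sum becomes
\begin{equation*}
\sum_{p=h}^{\ell-1} \frac{f(\psi_{\le p})-f(\psi_{<p})}{Q-f(\psi_{<p})} \;+\; \frac{f(\psi_{\le\ell})-f(\psi_{<\ell})}{Q-f(\psi_{<\ell})} \;\le\; \log\!\left(\frac{Q}{Q\eta}\right) + 1 \;=\; O\!\left(\log \tfrac{1}{\eta}\right),
\end{equation*}
where the first sum is bounded by a harmonic sum truncated at residual $Q\eta$, and the last term is trivially at most $1$. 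The information-gain telescoping still contributes $\log(1/\delta)$, so $\sum_{j\ge i} G_j \le L\cdot a(i)$ holds with $L = O(\log(1/(\delta\eta)))$. Setting $\beta = 8\rho/\delta$ and replaying the finishing calculation of Theorem~\ref{thm:partial-cover-thm} yields $\E[\NA] = O\!\big(\tfrac{\rho}{\delta}\log\tfrac{1}{\delta\eta}\big) \cdot \E[\OPT]$, as desired.

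The main obstacle is conceptual, not technical: the natural impulse is to also redefine $Z_{\text{good}}$ in terms of the new partial-coverage threshold $Q(1-\eta)$. Doing so, however, breaks Lemma~\ref{lem:stem_func_gain}, because the good-ratio $(Q(1-\eta)-f(\psi_Z(S)))/(Q-f(\psi_Z(S)))$ degrades to $0$ as $f(\psi_Z(S)) \to Q(1-\eta)$. The correct perspective is that $\OPT$ remains a full-cover policy, so the good/bad dichotomy must reflect full coverage, while only $\NA$ is permitted to quit early; exploiting this asymmetry is precisely what shaves the harmonic factor from $\log Q$ down to $\log(1/\eta)$.
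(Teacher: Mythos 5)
Your proposal is correct and follows essentially the same route as the paper: keep the score function, re-run the proof of Lemma~\ref{lem:key-lemma} with the new stopping threshold, note that the score lower bound is untouched (precisely because $\OPT$ remains a full-cover policy, so good/bad are still defined via full coverage), and show the upper bound holds with $L=\log\left(\frac{1}{\delta\eta}\right)$. The only cosmetic difference is that the paper additionally excludes parts with coverage above $Q(1-\eta)$ from $\mathcal{Z}$, so the truncated harmonic sum $\sum_{p=\eta Q}^{Q}\frac{1}{p}\le \log\left(\frac{1}{\eta}\right)$ falls out directly, whereas you keep $\mathcal{Z}$ as is and instead argue that every visited tour except possibly the last has residual at least $Q\eta$, splitting off the final term as an additive $+1$; both yield $O\left(\log\frac{1}{\eta}\right)$ and hence the claimed bound.
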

\begin{proof}
The non-adaptive algorithm for $\mathtt{GPC}$-\prob is identical to $\mathtt{PCA}$
(Algorithm~\ref{alg:partial-cover}) with one crucial difference. 
We modify the definition of $\mathcal{Z}$ in Definition~\ref{defn:partn-scn} to 
exclude parts where the coverage exceeds $Q \cdot (1 - \eta)$. 
We now formally define $\mathcal{Z} = \{Y \in \H(S) : |Y| \geq \delta |M| \text{ and } f(\psi_Y(S)) \leq Q \cdot (1 - \eta)\}$.
The scenarios in $\mathcal{Z}$ correspond to those 
for which the stopping rule does not apply, and this captures the 
feasibility criteria for $\mathtt{GPC}$-\prob.
The algorithm stops when either (i) the number of compatible scenarios is less than $\delta m$, or (ii) $f(\psi(T)) > Q \cdot (1 - \eta)$
The rest of the algorithm (score function, tour generation, and probing remains the same).

The analysis 
is similar to that of
Theorem~\ref{thm:partial-cover-thm}. 
We prove Lemma~\ref{lem:key-lemma}.
The lower bound remains the same; see \eqref{eq:lower-bound-score}. For the upper bound \eqref{eq:sum-gain-i-ub}, the analysis for the first term
in remains the same, but its second term can be bounded as:
\begin{equation}\notag
      \mathlarger{\sum}_{p=h}^{\ell} \left( \frac{f\left(\psi_{\scn}(\Pi_p) \cup \cup_{q=1}^{p-1}\psi_{\scn}(\Pi_q) \right) - f\left(\cup_{q=1}^{p-1}\psi_{\scn}(\Pi_q)\right)}{Q - f\left(\cup_{q=1}^{p-1}\psi_{\scn}(\Pi_q)\right)}\right) \leq \sum_{p=\eta Q}^{Q}\frac{1}{p} \leq \log\left(\frac{1}{\eta}\right).
\end{equation}
Thus, $\sum_{j \geq i} G_j \leq \log\left(\frac{1}{\delta \eta}\right) \cdot a(i)$. This proves Lemma~\ref{lem:key-lemma} with $L = \log\left(\frac{1}{\delta \eta}\right)$, and thus proves the theorem. \end{proof}

We end this section with a proof of Theorem~\ref{thm:2k-round-ipp}.

\begin{proof}[Proof of Theorem~\ref{thm:2k-round-ipp}.]
We can iteratively use the non-adaptive algorithm for $\mathtt{GPC}$-\prob to get a $2k$-round algorithm for \prob. 
We set $\delta = m^{-1/k}$ and $\eta = Q^{-1/k}$ in each round.  We note that 
after each round, either (i) the number of compatible scenarios drop by a factor of $m^{1/k}$, or (ii) the remaining coverage value drops by a factor of $Q^{1/k}$. Furthermore, note that (i) can occur at most $k$ times, and by the integrality of $f$, if (ii) occurs at least $k$ times, the $f$ is covered. Thus, the above procedure terminates in at most $2k$ rounds.
\paragraph{Expected Cost.} 
We define the 
{\em state} 
of a round as a list of all previously visited locations and their corresponding observations.
For the analysis, we view the iterative algorithm  as  a $2k$ depth tree, where the nodes at depth $\ell$ are the   states  in round $\ell$ and the branches out of each node  represent the observed realizations in that round. 
Let the set $\Omega_{\ell}$ denote all the states in round $\ell$, for $\ell = 1, \ldots, 2k$.
Using 
Theorem~\ref{thm:better-partial-cover}, for any $\omega\in \Omega_\ell$
the expected cost in round $\ell$
is at most $\bigOh\left(\frac{1}{k}\cdot m^{1/k}\cdot \log(mQ)\right) \cdot \OPT(\omega)$
where
$\OPT(\omega)$ denotes the expected cost of the optimal adaptive policy conditioned on the realizations in $\omega$.
Hence, the (unconditional) expected cost in round $\ell$ is at most $\bigOh\left(\frac{1}{k}\cdot m^{1/k}\cdot \log(mQ)\right) \cdot \sum_{\omega\in \Omega_{\ell}} \Pr[\omega]\cdot\OPT(\omega) =\bigOh\left(\frac{1}{k}\cdot m^{1/k}\cdot \log(mQ)\right)\cdot \OPT$ 
where we used that $\Omega_{\ell}$ is a partition of all outcomes. So  the  expected cost of the $2k$-round algorithm is at most $2k\cdot \bigOh \left(\frac{1}{k}\cdot m^{1/k}\cdot \log(mQ)\right)\cdot \OPT$ as claimed. 

\end{proof}
\end{document}